\title{
    The \(\QQ\)-curve Construction
    for Endomorphism-Accelerated Elliptic Curves
}
\author{Benjamin Smith}
\institute{
    INRIA
    \emph{and}
    \'Ecole polytechnique
    \\
    \'Equipe-projet GRACE, INRIA Saclay--\^Ile-de-France
    \\
    Laboratoire d'Informatique 
    de l'\'Ecole polytechnique (LIX) 
    \\
    B\^atiment Alan Turing, 1 rue Honor\'e d'Estienne d'Orves,
    91120 Palaiseau, France
}
\newcommand{\ZZ}{\mathbb{Z}}
\newcommand{\QQ}{\mathbb{Q}}
\newcommand{\QQbar}{\overline{\mathbb{Q}}}
\newcommand{\Kbar}{\overline{K}}
\newcommand{\FF}{\mathbb{F}}
\newcommand{\FFbar}{\overline{\mathbb{F}}}
\newcommand{\PP}{\mathbb{P}}
\newcommand{\EC}{\mathcal{E}}
\newcommand{\ECK}{\widetilde{\mathcal{E}}}
\newcommand{\phiK}{\widetilde{\phi}}
\newcommand{\G}{\mathcal{G}}
\newcommand{\Lattice}{\mathcal{L}}
\newcommand{\vv}[1]{\mathbf{#1}}
\newcommand{\End}{\mathrm{End}}
\newcommand{\Gal}{\mathrm{Gal}}
\newcommand{\trace}[1]{{t_{#1}}}
\newcommand{\subgrp}[1]{\langle{#1}\rangle}
\newcommand{\roundoff}[1]{\lfloor{#1}\rceil}
\newcommand{\conj}[2][\sigma]{{{}^{#1}{#2}}}
\newcommand{\dualof}[1]{{{#1}^{\dagger}}}
\newcommand{\twist}[2]{#2^{#1}}
\newcommand{\twistiso}[1]{\delta({#1})}
\newcommand{\Legendre}[2]{\left({#1}\big/{#2}\right)}
\newcommand{\Oh}{O}
\newcommand{\AM}{{A_{2,\Delta}^\mathrm{M}(s)}}
\newcommand{\BM}{{B_{2,\Delta}^\mathrm{M}(s)}}
\newtheorem{algorithm}{Algorithm}
\begin{document}
\maketitle

\begin{abstract}
    We give a detailed account of the use of \(\QQ\)-curve reductions 
    to construct elliptic curves over \(\FF_{p^2}\)
    with efficiently computable endomorphisms,
    which can be used
    to accelerate elliptic curve-based cryptosystems
    in the same way as Gallant--Lambert--Vanstone (GLV) 
    and Galbraith--Lin--Scott (GLS) endomorphisms.
    Like GLS (which is a degenerate case of our construction), 
    we offer the advantage over GLV of selecting from 
    a much wider range of curves, and thus finding secure group orders 
    when \(p\) is fixed for efficient implementation.
    Unlike GLS, we also offer the possibility of constructing
    twist-secure curves.
    We construct 
    several one-parameter families of elliptic curves over
    \(\FF_{p^2}\) equipped with efficient endomorphisms
    for every \(p > 3\),
    and exhibit examples of twist-secure curves 
    over \(\FF_{p^2}\)
    for the efficient Mersenne prime \(p = 2^{127}-1\).
\end{abstract}

\begin{keywords}
    Elliptic curve cryptography; endomorphism; exponentiation; 
    GLS; GLV; \(\QQ\)-curves; scalar decomposition; scalar multiplication
\end{keywords}

\bigskip

\emph{%
This is an extended treatment 
of the curves and techniques
introduced in~\cite{Smith-asiacrypt},
including two more families of curves,
more efficient scalar decompositions,
and more detail on exceptional CM curves and 
4-dimensional decompositions.
}

\section{
    Introduction
}
\label{sec:intro}

Let \(\EC\) be an elliptic curve over a finite field \(\FF_{q}\),
and let \(\G \subseteq \EC(\FF_{q})\)
be a cyclic subgroup of prime order \(N\).
When implementing cryptographic protocols in~\(\G\),
the fundamental operation is
\emph{scalar multiplication}
(or \emph{exponentiation}):
\begin{center}
    Given \(P\) in \(\G\) and \(m\) in \(\ZZ\),
    compute 
    \([m]P := \underbrace{P\oplus\cdots\oplus P}_{m \text{ times}} \).
\end{center}

The literature on general scalar multiplication algorithms is vast, 
and we will not explore it in detail here 
(see~\cite[\S2.8,\S11.2]{Galbraith} 
and~\cite[Chapter 9]{Handbook}
for introductions to exponentiation and 
multiexponentiation algorithms).
For our purposes,
it suffices to note that
the dominant factor in scalar multiplication time
using conventional algorithms
is the bitlength \(\lceil\log_2|m|\rceil\) of \(m\).
As a basic example,
we may compute \([m]P\)
using a variant of the classic binary method,
which requires at most \(\lceil\log_2|m|\rceil\) doublings
and (in the worst case) about as many addings
in~\(\G\)
(in general, \(\log_2|m| \sim \log_2N \sim \log_2q\)).

Suppose \(\EC\) is 
equipped with an efficient
\(\FF_{q}\)-endomorphism~\(\psi\). 
By \emph{efficient},
we mean that we can compute the image \(\psi(P)\)
of any point \(P\) in \(\EC(\FF_{q})\) 
for the cost of \(\Oh(1)\) operations in~\(\FF_{q}\).
In practice, we want this to cost no more than a few doublings in
\(\EC(\FF_{q})\).

Assume \(\psi(\G) \subseteq \G\),
or equivalently,
that \(\psi\) restricts to an endomorphism of \(\G\).\footnote{
    The assumption is satisfied, almost by default,
    in the context of classical discrete log-based
    cryptosystems.
    If \(\psi(\G) \not\subseteq \G\),
    then \(\EC[N](\FF_{q}) = \G+\psi(\G) \cong (\ZZ/N\ZZ)^2\),
    so \(N^2\mid\#\EC(\FF_{q})\)
    and \(N\mid q-1\);
    such \(\EC\) are cryptographically inefficient,
    and discrete logs in \(\G\) are vulnerable to the
    Menezes--Okamoto--Vanstone 
    and Frey--R\"uck
    reductions~\cite{Menezes--Okamoto--Vanstone,Frey--Ruck}.
    However, 
    the assumption
    should be verified carefully
    in the context of pairing-based cryptography,
    where \(\G\) and \(\psi\) with \(\psi(\G) \not\subseteq \G\)
    arise naturally.
}
Now \(\G\) is a finite cyclic group, isomorphic to \(\ZZ/N\ZZ\),
and every endomorphism of \(\ZZ/N\ZZ\) is just an integer
multiplication modulo~\(N\).
Hence, \(\psi\) acts on \(\G\) as multiplication by some integer eigenvalue
\(\lambda_\psi\): that is,
\[
    \psi|_\G = [\lambda_\psi]_\G 
    \qquad
    \text{for some }
    {-N}/2 < \lambda_\psi \le N/2
    \ .
\]
The eigenvalue \(\lambda_\psi\) is a root 
of the characteristic polynomial of \(\psi\)
in \(\ZZ/N\ZZ\).

Returning to the problem of scalar multiplication:
we want to compute \([m]P\).
Rewriting \(m\) as
\[
    m = a + b\lambda_\psi \pmod N
\]
for some \(a\) and \(b\),
we can compute \([m]P\) using the relation
\[
    [m]P = [a]P \oplus [b\lambda_\psi]P = [a]P \oplus [b]\psi(P) 
\]
and a 2-dimensional multiexponentation
such as Straus's algorithm~\cite{Straus},
which has a loop length of \(\log_2\|(a,b)\|_\infty\):
that is, \(\log_2\|(a,b)\|_\infty\) doubles 
and at most as many adds\footnote{
    Straus's algorithm
    serves as a simple and convenient reference example here:
    like all multiexponentiation
    algorithms, 
    its running time depends essentially on \(\log_2\|(a,b)\|_\infty\).
    It is not the fastest multiexponentiation,
    nor is it uniform or constant-time;
    as such, it is not recommended for real-world implementations.
}
(recall \(\|(a,b)\|_\infty = \max(|a|,|b|)\)).
If \(|\lambda_\psi|\) is not too small,
then we can easily find \((a,b)\)
such that \(\log_2\|(a,b)\|_\infty\)
is roughly \(\frac{1}{2}\log_2N\)
(we remove the ``If'' and the
``roughly'' for our \(\psi\) in \S\ref{sec:decompositions}.)

The endomorphism therefore lets us 
replace conventional \(\log_2N\)-bit scalar multiplications 
with \((\frac{1}{2}\log_2N)\)-bit multiexponentiations.
In basic binary methods
this means halving the loop length,
cutting the number of doublings in half.

Of course, in practice we are not halving the execution time.
The precise speedup 
depends on a variety of factors,
including
the choice of exponentiation and multiexponentiation
algorithms, the cost of computing \(\psi\), 
and the cost of doublings and addings in terms of bit operations---to 
say nothing of the cryptographic protocol,
which may prohibit some other conventional speedups.
For example: 
in~\cite{GLS}, Galbraith, Lin, and Scott 
report experiments where cryptographic operations
on GLS curves required between 70\% and 83\% of the time required for 
the previous best practice curves---with the variation
depending on the architecture, 
the underlying curve arithmetic, and the protocol in question.

To put this technique into practice,
we need a source of cryptographic elliptic curves 
equipped with efficient endomorphisms.
In the large characteristic case\footnote{
    We are primarily interested in the large characteristic case,
    where \(q = p\) or \(p^2\),
    so we do not discuss \(\tau\)-adic or Frobenius expansion-style
    techniques (see~\cite{Smart}, \cite[\S5]{Lange}).
},
there are two archetypal constructions:
\begin{enumerate}
    \item The classic \emph{Gallant--Lambert--Vanstone} (GLV)
        construction~\cite{GLV}
        uses 
        elliptic curves with explicit complex multiplication (CM)
        by quadratic orders with tiny discriminants.
        These curves can be found by reducing CM curves over number fields
        modulo suitable primes.
    \item The more recent 
        \emph{Galbraith--Lin--Scott} (GLS) construction~\cite{GLS}.
        Here, curves over \(\FF_p\) are viewed over \(\FF_{p^2}\);
        the \(p\)-power sub-Frobenius induces an extremely efficient
        endomorphism on the quadratic twist
        (which can have prime order).
\end{enumerate}
GLV and GLS 
have been 
combined to give higher-dimensional
variants 
for elliptic curves
(\cite{Longa--Sica}, \cite{ZHXS}),
and extended to hyperelliptic curves 
(\cite{BCHL}, \cite{Kohel--Smith}, \cite{SCQ}, \cite{Takashima}).

\paragraph{New endomorphisms from \(\QQ\)-curves.}

This work develops 
a new source of elliptic curves over \(\FF_{p^2}\)
with efficient endomorphisms: reductions of quadratic \(\QQ\)-curves.
\begin{definition}
    \label{def:QQ-curves}
    A \emph{quadratic \(\QQ\)-curve of degree \(d\)}
    is an elliptic curve \(\EC\) 
    \emph{without} CM,
    defined over a quadratic number field
    \(K\),
    such that there exists an isogeny of degree \(d\)
    from \(\EC\) to its Galois conjugate \(\conj{\EC}\)
    (formed by applying \(\sigma\) to the coefficients of the defining
    equation of \(\EC\)),
    where \(\subgrp{\sigma} = \Gal(K/\QQ)\).
\end{definition}

\(\QQ\)-curves are well-established objects of interest in number
theory, where they have formed a natural setting for 
generalizations of the Modularity Theorem.
We recommend Ellenberg~\cite{Ellenberg}
for an excellent introduction to the theory.
 
Our application of quadratic \(\QQ\)-curves is somewhat more prosaic:
given a \(d\)-isogeny \(\ECK \to \conj{\ECK}\) over a quadratic field,
we reduce modulo an inert prime \(p\)
to obtain an isogeny \(\EC\to\conj{\EC}\) over \(\FF_{p^2}\).
We then exploit the fact that the \(p\)-power Frobenius isogeny
maps \(\conj{\EC}\) back onto \(\EC\);
composing with the reduced \(d\)-isogeny,
we obtain an endomorphism of \(\EC\)
of degree \(dp\).
For efficiency,
\(d\) must be small;
happily, for small values of \(d\),
Hasegawa has written down universal one-parameter families of \(\QQ\)-curves~\cite{Hasegawa}.
We thus obtain one-parameter families of elliptic curves 
over~\(\FF_{p^2}\) 
equipped with efficient non-integer endomorphisms.\footnote{
    While our curves are defined over an extension field,
    the extension degree is only 2, so Weil descent
    attacks offer no advantage when solving DLP instances
    (cf.~\cite[\S9]{GLS}).
}

For concrete examples,
we consider \(d = 2, 3, 5\), and \(7\)
in~\S\ref{sec:degree-2}, \S\ref{sec:degree-3}, 
\S\ref{sec:degree-5}, and~\S\ref{sec:degree-7}, respectively.
We define our curves in short Weierstrass form
for maximum generality and flexibility,
but we also give isomorphisms to
Montgomery, twisted Edwards, and Doche--Icart--Kohel models 
where appropriate.

\paragraph{Comparison with GLV.}
Like GLV, our method involves reducing curves defined over number fields
to obtain curves over finite fields with explicit CM.
However, we emphasise a profound difference:
in our method, the curves over number fields 
generally do not have CM themselves.

GLV curves are necessarily isolated examples---and the really
useful examples are extremely limited in number
(cf.~\cite[App.~A]{Longa--Sica}).
The scarcity of GLV curves
is their Achilles' heel:
as noted in~\cite{GLS},
if~\(p\) is fixed
then there is no guarantee that there will exist a GLV curve 
with prime (or almost-prime) order over~\(\FF_{p}\).
While we discuss this phenomenon further in \S\ref{sec:CM},
it is instructive to consider the example discussed
in~\cite[\S 1]{GLS}:
the most efficient GLV curves have CM discriminants \(-3\) and \(-4\).
If we are working at the 128-bit security level,
then taking \(p = 2^{255}-19\)
allows particularly fast arithmetic in \(\FF_p\).
But the largest prime factor of the order of a curve
over \(\FF_p\)
with CM discriminant \(-4\) (resp.~\(-3\))
has \(239\) (resp.~\(230\)) bits:
using these curves wastes 9 (resp.~13) potential bits of security.
In fact, we are lucky with \(-3\) and \(-4\):
for all of the other discriminants offering endomorphisms of degree at most 3,
we can do no better than a 95-bit prime factor,
which represents a catastrophic 80-bit loss of relative security.

In contrast, our construction yields true families of curves,
covering \({\sim p}\) isomorphism classes over \(\FF_{p^2}\)
for any choice of \(p\).
This gives us
a vastly higher probability of finding
secure curves over practically important fields.

\paragraph{Comparison with GLS.}

Like GLS, we construct curves over \(\FF_{p^2}\)
equipped with an inseparable endomorphism.
And like GLS, 
each of our families offer around~\(p\) distinct isomorphism
classes of curves, making it easy to find secure group orders
when \(p\) is fixed.

But unlike GLS, our curves have \(j\)-invariants in \(\FF_{p^2}\)
and not \(\FF_p\):
they are not isomorphic to, or twists of, subfield curves.
This allows us to find twist-secure curves:
that is, curves with secure orders whose twists also have secure orders.

Twist-security is crucial
in many elliptic curve-based protocols
(including \cite{Kaliski-1}, \cite{Kaliski-2}, 
\cite{Boyd--Montague--Nguyen}, \cite{Moeller},
and~\cite{Chevassut--Fouque--Gaudry--Pointcheval});
it is particularly important in modern \(x\)-coordinate-only 
ECDH implementations,
such as Bernstein's \texttt{Curve25519} software~\cite{Bernstein}
(which anticipates the kind of fault attack detailed 
in~\cite{Fouque--Lercier--Real--Valette}).
GLS curves cannot be used in any of these constructions.

As we will see in \S\ref{sec:construction},
our construction degenerates to GLS when \(d = 1\).
Our construction is therefore a sort of 
generalized GLS---though it is not the higher-degree generalization 
anticipated by Galbraith, Lin, and Scott themselves,
which composes a \(p\)-power sub-Frobenius endomorphism with a non-rational separable
isogeny and its dual isogeny
(cf.~\cite[Theorem~1]{GLS}).

\paragraph{Acknowledgements.}
The author thanks Craig Costello, 
H\"useyin H{\i}\c{s}{\i}l,
Fran\c{c}ois Morain, and Charlotte Scribot
for their help and advice throughout this project.

\section{
    Notation and Elementary Constructions
}
\label{sec:notation}

Throughout, we work over fields of characteristic not 2 or 3.
Let 
\[
    \EC: y^2 = x^3 + Ax + B
\]
be an elliptic curve over such a field \(K\).
(Recall that for an equation in this form to define an elliptic
curve, the \emph{discriminant} \(-16(4A^3 + 27B^2)\) must be nonzero.)

\paragraph{Galois conjugates.}
For every automorphism \(\sigma\) of \(\Kbar\),
we have a conjugate curve 
\[
    \conj{\EC}: y^2 = x^3 + (\conj{A})x + (\conj{B})
    \ .
\]
If \(\phi: \EC_1\to\EC_2\) is an isogeny,
then 
we obtain a conjugate isogeny
\(\conj{\phi}:\conj{\EC_1}\to\conj{\EC_2}\)
by
applying \(\sigma\) 
to the defining equations of \(\phi\), \(\EC_1\), and \(\EC_2\).
We write \((p)\) for 
the \(p\)-th powering automorphism of \(\FFbar_p\).
We note that \((p)\) is trivial to compute
on \(\FF_{p^2} = \FF_{p}(\sqrt{\Delta})\),
since
\( \conj[(p)]{(a + b\sqrt{\Delta})} = a - b\sqrt{\Delta} \)
for all \(a\) and \(b\) in \(\FF_{p}\).

\paragraph{Quadratic twists and twisted endomorphisms.}
For every nonzero \(\lambda\) in \(\Kbar\),
the quadratic \emph{twist} of \(\EC\) by \(\lambda\) is
the curve over \(K(\lambda^2)\) defined by
\[
    \twist{\lambda}{\EC}: 
    y^2 = x^3 + \lambda^4Ax + \lambda^6B
    \ .
\]
The twisting isomorphism
\( \twistiso{\lambda}: \EC \to \twist{\lambda}{\EC} \)
is defined over \(K(\lambda)\) by
\[
    \twistiso{\lambda}
    : 
    (x,y) 
    \longmapsto
    (\lambda^2 x, \lambda^3 y) 
    \ .
\]
Observe that 
\(
    \twistiso{\lambda_1}\twistiso{\lambda_2}
    =
    \twistiso{\lambda_1\lambda_2}
\)
for any \(\lambda_1,\lambda_2\);
and 
\(\twistiso{-\lambda} = -\twistiso{\lambda}\).
For every \(K\)-endo\-morphism \(\psi\) of \(\EC\),
there is a twisted \(K(\lambda^2)\)-endomorphism 
\[
    \twist{\lambda}{\psi} 
    :=
    \twistiso{\lambda}\psi\twistiso{\lambda^{-1}}
    \ \text{ in }\ 
    \End(\twist{\lambda}{\EC})
    \ .
\] 
It is important to distinguish conjugates
(marked by left-superscripts)
from twists (marked by right-superscripts);
note that
\(\conj{(\twist{\lambda}{\EC})} =
\twist{\conj{\lambda}}{(\conj{\EC})}\)
for all \(\lambda\) and \(\sigma\).

\paragraph{Quadratic twists over finite fields.}
If \(\mu\) is a nonsquare in \(K = \FF_{q}\),
then \(\twist{\sqrt{\mu}}{\EC}\) is a quadratic twist of~\(\EC\).
But \(\twist{\sqrt{\mu_1}}{\EC}\) and 
\(\twist{\sqrt{\mu_2}}{\EC}\) are \(\FF_{q}\)-isomorphic
for all nonsquares \(\mu_1\) and \(\mu_2\) in \(\FF_{q}\)
(the isomorphism \(\twistiso{\sqrt{\mu_1/\mu_2}}\) is defined over
\(\FF_{q}\) because \(\mu_1/\mu_2\) must be a square);
so, up to \(\FF_{q}\)-isomorphism, 
it makes sense to speak of \emph{the} quadratic twist.
We let \(\EC'\) denote the quadratic twist
when the choice of nonsquare is not important.
Similarly,
if \(\psi\) is an \(\FF_{q}\)-endomorphism of \(\EC\),
then \(\psi'\)
denotes the corresponding twisted \(\FF_{q}\)-endomorphism of \(\EC'\).

\paragraph{Traces and cardinalities.}
If \(K = \FF_{q}\),
then \(\pi_{\EC}\) denotes 
the \(q\)-power Frobenius endomorphism of~\(\EC\).
The characteristic polynomial
of \(\pi_{\EC}\) has the form
\[
    \chi_{\EC}(T) = T^2 - \trace{\EC}T + q 
    ;
\]
the 
\emph{trace} \(\trace{\EC}\) 
satisfies
the Hasse bound
\(|\trace{\EC}| \le 2\sqrt{q}\).
Recall
\(\#\EC(\FF_{q}) = q + 1 - \trace{\EC}\)
and
\(\trace{\EC'} = -\trace{\EC}\),
so
\begin{equation}
    \label{eq:twist-cardinalities}
    \#\EC(\FF_{q}) + \#\EC'(\FF_{q}) = 2(q + 1) \ .
\end{equation}

\paragraph{Explicit isogenies.}
Let \(\mathcal{S} \subset \EC\) be a finite subgroup defined over \(K\);
V\'elu's formul\ae{} (\cite[\S2.4]{Kohel}, \cite{Velu})
compute the explicit (normalized) quotient isogeny 
\[
    \phi: \EC \longrightarrow 
    \EC/\mathcal{S} : y^2 = x^3 + A_\mathcal{S}x + B_\mathcal{S}
    \ ,
\]
mapping
\( (x,y) \) to \( \left(\phi_x(x),y\phi_x'(x)\right) \)
for some \(\phi_x\) in \(K(x)\).
We will need explicit formul\ae{} 
for the cases \(\#\mathcal{S} = 2, 3, 5\), and \(7\).
If \(\mathcal{S} = \{0,(\alpha,0)\}\) 
has order \(2\),
then 
\begin{equation}
    \label{eq:Velu-2}
    A_\mathcal{S} = -4A - 15\alpha^2
    \ ,
    \quad 
    B_\mathcal{S} = B - 7\alpha(3\alpha^2 + A)
    \ ,
    \quad
    \text{and}
    \quad
    \phi_x(x) = x + \tfrac{3\alpha^2 + A}{x - \alpha}
    \ .
\end{equation}
If \(\mathcal{S}\) has
odd order \(d = 2e+1\),
then it is defined by a kernel polynomial
\(F(x) = \sum_{i=0}^ef_ix^{e-i}\)
(so \(F(x(P)) = 0\) if and only if \(P\) is in
\(\mathcal{S}\setminus\{0\}\));
and then
\begin{align}
    \label{eq:Velu-odd-A}
    A_\mathcal{S} 
    & = (1 - 10e)A - 30(f_1/f_0)^2 + 60f_2/f_0
    \ ,
    \\
    \label{eq:Velu-odd-B}
    B_\mathcal{S} 
    & = (1 - 28e)B + 28f_1/f_0 
        + 70(f_1/f_0)^3 - 210f_1f_2/f_0^2 + 210f_3/f_0
    \ ,
    \shortintertext{and}
    \label{eq:Velu-odd-map}
    \phi_x(x)
    & =
    (2e+1)x + 2\tfrac{f_1}{f_0}
    -4(x^3 + Ax + B)\big(\tfrac{F'(x)}{F(x)}\big)'
    - 2(3x^2 + A)\tfrac{F'(x)}{F(x)}
    \ .
\end{align}

\paragraph{Legendre symbols.}
The Legendre symbol \(\Legendre{n}{p}\)
is defined to be \(1\) if \(n\) is a square mod \(p\),
\(-1\) if \(n\) is not a square mod \(p\),
and \(0\) if \(p\) divides \(n\).

\paragraph{Reduced lattice bases.}
We work exclusively with the infinity norm \(\|\cdot\|_\infty\)
in this article
(recall \(\|(a,b)\|_\infty := \max(|a|,|b|)\)).
An ordered basis \([\vv{e}_1,\vv{e}_2]\)
of a lattice in~\(\ZZ^2\)
is \emph{reduced}
if
\begin{equation}
    \label{eq:reduced}
    \|\vv{e}_1\|_\infty 
    \le 
    \|\vv{e}_2\|_\infty 
    \le 
    \|\vv{e}_1 - \vv{e}_2\|_\infty 
    \le 
    \|\vv{e}_1 + \vv{e}_2\|_\infty 
    \ ;
\end{equation}
a reduced basis has
minimal length with respect to \(\|\cdot\|_\infty\)
(see~\cite{Kaib}).

\section{
    Quadratic \(\QQ\)-curves and their Reductions
}
\label{sec:construction}

Suppose \(\ECK/\QQ(\sqrt{\Delta})\) is a quadratic \(\QQ\)-curve 
of prime degree \(d\) (as in~Definition~\ref{def:QQ-curves}),
where \(\Delta\) is a discriminant prime to \(d\),
and let
\(\phiK: \ECK \to \conj{\ECK}\)
be the corresponding \(d\)-isogeny
(where \(\sigma\) is the conjugation of \(\QQ(\sqrt{\Delta})\) over
\(\QQ\)).
In general, \(\phiK\) is only defined over a quadratic
extension \(\QQ(\sqrt{\Delta},\gamma)\)
of \(\QQ(\sqrt{\Delta})\)
(cf.~\cite[Prop.~3.1]{Gonzalez}), but 
we can always reduce to the case where 
\(\gamma = \sqrt{\pm d}\)
(see~\cite[remark p.~385]{Gonzalez}).
Indeed, the \(\QQ\)-curves of degree~\(d\) 
that we treat below
all have \(\gamma = \sqrt{-d}\);
so to simplify matters, from now on we will 
\begin{center}
    \sl Assume \(\phiK\) is defined over \(\QQ(\sqrt{\Delta},\sqrt{-d})\).
\end{center}

Let \(p\) be a prime inert in \(\QQ(\sqrt{\Delta})\) 
(equivalently, \(\Delta\) is not a square in \(\FF_p\)),
of good reduction for \(\ECK\)
and prime to \(d\).
If \(\mathcal{O}\) is the ring of integers of \(\QQ(\sqrt{\Delta})\),
then 
\[
    \FF_{p^2} = \mathcal{O}/(p) = \FF_p(\sqrt{\Delta}) 
    \ .
\]
Looking at the Galois groups of our fields, we have a series of
injections
\[
    \subgrp{(p)}
    =
    \Gal(\FF_p(\sqrt{\Delta})/\FF_p) 
    \hookrightarrow 
    \Gal(\QQ(\sqrt{\Delta})/\QQ) 
    \hookrightarrow
    \Gal(\QQ(\sqrt{\Delta},\sqrt{-d})/\QQ)
    \ .
\]
The image of \((p)\) in \(\Gal(\QQ(\sqrt{\Delta})/\QQ)\) is \(\sigma\),
because \(p\) is inert in \(\QQ(\sqrt{\Delta})\).
We extend \(\sigma\) to 
the automorphism of \(\QQ(\sqrt{\Delta},\sqrt{-d})\)
that is the image of \((p)\):
that is,
\begin{equation}
    \label{eq:FFp2-conj}
    \conj{\big(
        \alpha + \beta\sqrt{\Delta} + \gamma\sqrt{-d} + \delta\sqrt{-d\Delta}
    \big)} 
    = 
    \alpha - \beta\sqrt{\Delta} 
    + \Legendre{-d}{p}\big(
            \gamma\sqrt{-d} - \delta\sqrt{-d\Delta}
    \big)
\end{equation}
for all \(\alpha, \beta, \gamma\), and \(\delta \in \QQ\).

Now let \(\EC/\FF_{p^2}\)
be the reduction modulo \(p\) of \(\ECK\).
The curve \(\conj{\ECK}\) reduces to \(\conj[(p)]{\EC}\),
while the \(d\)-isogeny \(\phiK: \ECK\to\conj{\ECK}\)
reduces to a \(d\)-isogeny
\(\phi: \EC \to \conj[(p)]{\EC}\)
over~\(\FF_{p^2}\).

Applying \(\sigma\) to \(\phiK\),
we obtain a second \(d\)-isogeny
\(\conj{\phiK}:\conj{\ECK}\to\ECK\)
travelling in the opposite direction,
which reduces mod \(p\) to
a conjugate isogeny \(\conj[(p)]{\phi}: \conj[(p)]{\EC} \to \EC\)
defined over~\(\FF_{p^2}\).
Composing \(\conj{\phiK}\) with \(\phiK\)
yields endomorphisms \(\conj{\phiK}\circ\phiK\) of \(\ECK\)
and \(\phiK\circ\conj{\phiK}\) of \(\conj{\ECK}\),
each of degree~\(d^2\).
But (by definition) \(\ECK\) and \(\conj{\ECK}\)
do not have CM,
so all of their endomorphisms are integer multiplications;
and since the only integer multiplications of degree \(d^2\)
are \([d]\) and \([-d]\), 
we 
conclude that
\[
    \conj{\phiK}\circ\phiK = [\epsilon_p d]_{\ECK}
    \quad
    \text{and}
    \quad
    \phiK\circ\conj{\phiK} = [\epsilon_p d]_{\conj{\ECK}}
    \ ,
    \quad 
    \text{where}
    \quad 
    \epsilon_p \in \{\pm1\}
    \ .
\]
Technically, \(\conj{\phiK}\) and \(\conj[(p)]{\phi}\)
are---\emph{up to sign}---the dual isogenies of \(\phiK\) and \(\phi\),
respectively.
The sign \(\epsilon_p\) depends on \(p\): 
if \(\tau\) is the extension of \(\sigma\) to
\(\QQ(\sqrt{\Delta},\sqrt{-d})\) that is \emph{not} the image of \((p)\),
then \(\conj[\tau]{\phiK}\circ\phiK = [-\epsilon_p d]_{\ECK}\).
Reducing modulo~\(p\),
we see that
\[
    \conj[(p)]{\phi}\circ\phi 
    =
    [\epsilon_p d]_{\EC}
    \quad 
    \text{ and }
    \quad 
    \phi\circ\conj[(p)]{\phi} 
    =
    [\epsilon_p d]_{\conj[(p)]\EC}
    \ .
\]

The map
\((x,y)\mapsto(x^p,y^p)\)
defines \(p\)-isogenies
\[
    \pi_p: \conj[(p)]{\EC} \longrightarrow \EC 
    \quad \text{ and }\quad 
    \conj[(p)]{\pi_p}: \EC \longrightarrow \conj[(p)]{\EC}
    \ .
\]
Observe that
\(\conj[(p)]{\pi_p}\circ\pi_p = \pi_{\EC}\)
and \(\pi_p\circ\conj[(p)]{\pi_p} = \pi_{\conj[(p)]{\EC}}\).
Composing \(\pi_p\) with \(\phi\) yields a degree-\(pd\) endomorphism
\[
    \psi := \pi_p\circ\phi \in \End(\EC) 
    \ .
\]
We also obtain an \(\FF_{p^2}\)-endomorphism \(\psi'\)
on the quadratic twist \(\EC'\).

If \(d\) is very small, then 
\(\psi\) and \(\psi'\) are efficient 
because \(\phi\) is defined by polynomials of degree about \(d\),
and \(\pi_p\) acts as a simple conjugation on coordinates in~\(\FF_{p^2}\)
(as in Eq.~\eqref{eq:FFp2-conj}).
In this article we concentrate on prime \(d\le 7\).

\begin{theorem}
    \label{th:main}
    The endomorphisms \(\psi\) and \(\psi'\) satisfy
    \[
        \phantom{(-')\ .}
        \psi^2 = [\epsilon_p d]\pi_{\EC}
        \qquad
        \text{ and }
        \qquad
        (\psi')^2 = [-\epsilon_p d]\pi_{\EC'}
        \ ,
    \]
    respectively.
    There exists an integer \(r\) satisfying\footnote{
        We warn the reader that the integer \(r\) here corresponds to \(\epsilon_pr\)
        in~\cite{Smith-asiacrypt}.
    }
    \begin{equation}
        \label{eq:dr2-2p-et}
        dr^2 = 2p + \epsilon_p \trace{\EC} 
    \end{equation}
    such that
    \begin{equation}
        \label{eq:rpsi}
        [r]\psi = [p] + \epsilon_p\pi_{\EC}
        \qquad
        \text{ and }
        \qquad
        [r]\psi' = [p] - \epsilon_p\pi_{\EC'}
        \ ;
    \end{equation}
    the characteristic polynomial
    of both \(\psi\) and \(\psi'\) 
    is
    \[
        P_\psi(T)
        =
        P_{\psi'}(T)
        =
        T^2 - rdT + dp 
        \ .
    \]
\end{theorem}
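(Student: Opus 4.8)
The plan is to deduce every assertion from the two quadratic identities, which in turn follow directly from the Frobenius factorizations already recorded above.

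\emph{Step 1 --- the quadratic identities.} I would start from $\psi^2 = \pi_p\circ\phi\circ\pi_p\circ\phi$ and use functoriality of the relative $p$-Frobenius to move the inner $\pi_p$ past $\phi$. Here $\pi_p\colon\conj[(p)]{\EC}\to\EC$ and $\conj[(p)]{\pi_p}\colon\EC\to\conj[(p)]{\EC}$ are the relative Frobenius isogenies of $\conj[(p)]{\EC}$ and $\EC$; since all the curves and isogenies involved are defined over $\FF_{p^2}$ (so that $(p)$ acts as an involution on their coefficients), functoriality applied to $\conj[(p)]{\phi}\colon\conj[(p)]{\EC}\to\EC$ reads $\conj[(p)]{\pi_p}\circ\conj[(p)]{\phi} = \phi\circ\pi_p$. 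Substituting this,
\[
    \psi^2
    = \pi_p\circ\bigl(\conj[(p)]{\pi_p}\circ\conj[(p)]{\phi}\bigr)\circ\phi
    = \bigl(\pi_p\circ\conj[(p)]{\pi_p}\bigr)\circ\bigl(\conj[(p)]{\phi}\circ\phi\bigr)
    = \pi_{\EC}\circ[\epsilon_p d]_{\EC}
    = [\epsilon_p d]\pi_{\EC}\ ,
\]
using that $\pi_p\circ\conj[(p)]{\pi_p}$ is the $p^2$-power Frobenius $\pi_{\EC}$ of $\EC$ and that $\conj[(p)]{\phi}\circ\phi = [\epsilon_p d]_{\EC}$. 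For the twist I would let $\delta\colon\EC\to\EC'$ be the twisting isomorphism; as $\trace{\EC'}=-\trace{\EC}$ one has $\delta\,\pi_{\EC}\,\delta^{-1}=-\pi_{\EC'}$, so conjugating $\psi^2=[\epsilon_p d]\pi_{\EC}$ by $\delta$ (and recalling $\psi'=\delta\psi\delta^{-1}$) gives $(\psi')^2=[-\epsilon_p d]\pi_{\EC'}$ at once.

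\emph{Step 2 --- the integer $r$, the linear relations, and the characteristic polynomials.} Since $\deg\psi=\deg\pi_p\cdot\deg\phi=pd$ is not a perfect square ($d$ is prime and $d\neq p$), $\psi\notin\ZZ$, so its characteristic polynomial has the form $P_\psi(T)=T^2-t_\psi T+pd$ with $t_\psi\in\ZZ$, and $\psi^2=[t_\psi]\psi-[pd]$. Comparing with Step~1, $[t_\psi]\psi=[\epsilon_p d]\pi_{\EC}+[pd]=[d]\bigl([p]+\epsilon_p\pi_{\EC}\bigr)$. Taking degrees of both sides --- using multiplicativity of degree, $\deg[n]=n^2$, and $\deg\bigl([p]+\epsilon_p\pi_{\EC}\bigr)=p^2+\epsilon_p p\,\trace{\EC}+p^2=p\,(2p+\epsilon_p\trace{\EC})$ (from $\pi_{\EC}\dualof{\pi_{\EC}}=p^2$ and $\pi_{\EC}+\dualof{\pi_{\EC}}=[\trace{\EC}]$) --- yields $t_\psi^2\,pd=d^2p\,(2p+\epsilon_p\trace{\EC})$, that is, $t_\psi^2=d\,(2p+\epsilon_p\trace{\EC})$. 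Hence $d\mid t_\psi^2$, and since $d$ is prime, $d\mid t_\psi$; writing $t_\psi=rd$ recovers~\eqref{eq:dr2-2p-et} and $P_\psi(T)=T^2-rdT+pd$. As $\End(\EC)$ is torsion-free, from $[rd]\psi=[d]\bigl([p]+\epsilon_p\pi_{\EC}\bigr)$ I can cancel $[d]$ to obtain $[r]\psi=[p]+\epsilon_p\pi_{\EC}$; conjugating by $\delta$ as before gives $[r]\psi'=[p]-\epsilon_p\pi_{\EC'}$. Finally $\deg\psi'=\deg\psi=pd$, and taking traces in $[r]\psi'=[p]-\epsilon_p\pi_{\EC'}$ gives $r\,t_{\psi'}=2p-\epsilon_p\trace{\EC'}=2p+\epsilon_p\trace{\EC}=dr^2$ by~\eqref{eq:dr2-2p-et}, so $t_{\psi'}=rd$ whenever $r\neq0$ (and if $r=0$ then $\pi_{\EC}=[-\epsilon_p p]$, whence $(\psi')^2=[-pd]$ forces $t_{\psi'}=0$ directly); hence $P_{\psi'}=P_\psi$.

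The step I expect to be most delicate is the conjugate bookkeeping in Step~1 --- making sure that each $\conj[(p)]{(\cdot)}$ is attached to the correct curve and that relative-Frobenius functoriality is applied to exactly the right isogeny. The one genuinely arithmetic input is the passage from $d\mid t_\psi^2$ to $d\mid t_\psi$, which is where primality of $d$ is used; everything else is formal manipulation inside the commutative ring $\ZZ[\psi,\pi_{\EC}]\subseteq\End(\EC)$, using only multiplicativity of the degree and additivity of the dual.
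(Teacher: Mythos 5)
Your argument follows the same skeleton as the paper's proof: Frobenius functoriality to reduce \(\psi^2\) to \(\conj[(p)]{\phi}\circ\phi\) composed with \(\pi_{\EC}\), then the quadratic relation \([t_\psi]\psi=\psi^2+[dp]=[d]([p]+\epsilon_p\pi_{\EC})\) to extract \(r\). The small divergences are fine and in one place an improvement: you get \(t_\psi^2=d(2p+\epsilon_p\trace{\EC})\) by taking degrees of \([t_\psi]\psi=[d]([p]+\epsilon_p\pi_{\EC})\), where the paper squares the relation and uses the characteristic polynomial of \(\pi_{\EC}\) (both are one-liners); and your trace computation pinning down \(t_{\psi'}=rd\) --- including the sign of \(r\) and the \(r=0\) case --- is more careful than the paper's ``a similar argument for \(\psi'\)'', which leaves the consistency of the sign of \(r\) between the two relations of Eq.~\eqref{eq:rpsi} implicit.

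The one step that does not stand as written is the justification of \(\delta\,\pi_{\EC}\,\delta^{-1}=-\pi_{\EC'}\) ``as \(\trace{\EC'}=-\trace{\EC}\)''. Knowing that \(\delta\pi_{\EC}\delta^{-1}\) has degree \(p^2\) and trace \(\trace{\EC}\) only tells you it is a root of \(T^2-\trace{\EC}T+p^2\), i.e.\ that it equals \(-\pi_{\EC'}\) \emph{or} \(-\dualof{\pi_{\EC'}}\); and the identity \(\trace{\EC'}=-\trace{\EC}\) is itself normally deduced from the isomorphism relation, so the inference runs backwards. The repair is short and is what the paper's twist computation actually does: writing \(\delta=\twistiso{\sqrt{\mu}}\) with \(\mu\) a nonsquare in \(\FF_{p^2}\), one has \(\sqrt{\mu}^{\,p^2}=-\sqrt{\mu}\), so \(\pi_{\EC'}\circ\twistiso{\sqrt{\mu}}=\twistiso{\sqrt{\mu}^{\,p^2}}\circ\pi_{\EC}=-\twistiso{\sqrt{\mu}}\circ\pi_{\EC}\), which gives \(\delta\pi_{\EC}\delta^{-1}=-\pi_{\EC'}\) directly. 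With that substitution your proof is complete.
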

\begin{proof}
    Clearly \( \pi_p\circ\phi = \conj[(p)]{\phi}\circ\conj[(p)]{\pi_p} \),
    so
    \[
        \psi^2 
        = 
        \pi_p\phi\pi_p\phi 
        =
        \pi_p\phi\conj[(p)]{\phi}\conj[(p)]{\pi_p}
        =
        \pi_p[\epsilon_p d]\conj[(p)]{\pi_p}
        =
        [\epsilon_p d]\pi_p\conj[(p)]{\pi_p}
        =
        [\epsilon_p d]\pi_{\EC}
        \ .
    \]
    Similarly,
    choosing a nonsquare \(\mu\) in \(\FF_{p^2}\), so \(\EC' =
    \twist{\sqrt{\mu}}{\EC}\) and \(\psi' = \twist{\sqrt{\mu}}{\psi}\), 
    we find
    \[
        (\psi')^2 
        = 
        \twistiso{\mu^{\frac{1}{2}}}\psi^2\twistiso{\mu^{-\frac{1}{2}}}
        =
        \twistiso{\mu^{\frac{1}{2}(1-p^2)}}[\epsilon_p d]\pi_{\EC'}
        =
        \twistiso{-1}[\epsilon_p d]\pi_{\EC'}
        =
        [-\epsilon_p d]\pi_{\EC'}
        \ .
    \]
    The degree of \(\psi\) (and \(\psi'\)) is \(dp\),
    so both have a characteristic polynomial in the form
    \(P_{\psi}(T) = T^2 - aT + dp\)
    for some integer \(a\).
    Hence,
    \begin{equation}
        \label{eq:a-eqn}
        [a]\psi = \psi^2 + [dp] = [\epsilon_pd]\pi + [dp]
        \ .
    \end{equation}
    Squaring both sides, replacing \(\psi^2\) with
    \([\epsilon_pd]\pi\),
    and then factoring out \([\epsilon_pd]\pi\),
    we find \( a^2 = 2dp + d\epsilon_p\trace{\EC} \).
    It follows that \(d\mid a^2\); but \(d\) is squarefree,
    so \(a = dr\) for some integer \(r\),
    whence Eq.~\eqref{eq:dr2-2p-et}
    and the characteristic polynomial.
    Putting \(a = dr\) in Eq.~\eqref{eq:a-eqn}
    yields \( [r]\psi = [p] + \epsilon_p\pi \);
    a similar argument for \(\psi'\), using 
    \(\psi'^2 = [-\epsilon_pd]\pi_{\EC'}\),
    yields 
    \( [r]\psi' = [p] - \epsilon_p\pi \),
    completing Eq.~\eqref{eq:rpsi}.
    \qed
\end{proof}

\begin{corollary}
    The curves \(\EC\) and \(\EC'\) are ordinary
    if and only if \(r \not= 0\);
    and then
    \[
        |r|
        =
        [ \ZZ[\psi] : \ZZ[\pi_{\EC}] ]
        =
        [ \ZZ[\psi'] : \ZZ[\pi_{\EC'}] ]
        \ .
    \]
\end{corollary}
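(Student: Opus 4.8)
The plan is to read supersingularity off the trace using the relations of Theorem~\ref{th:main}, and then to obtain the index by an elementary lattice computation built on Eq.~\eqref{eq:rpsi}.

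For the first claim, recall that a curve over \(\FF_{p^2}\) is supersingular exactly when \(p\) divides its trace of Frobenius. By Eq.~\eqref{eq:dr2-2p-et}, \(\epsilon_p\trace{\EC} = dr^2 - 2p\), and since \(p\) is prime to \(d\) we have \(p\mid\trace{\EC}\) if and only if \(p\mid r\). To upgrade this to \(r = 0\) I would invoke the Hasse bound \(|\trace{\EC}|\le 2p\): it gives \(0\le dr^2\le 4p\), hence \(r^2\le 4p/d\), which is \(< p^2\) under the standing hypotheses (\(p > 3\) and \(d\) a prime \(\le 7\)); so a nonzero multiple of \(p\) is impossible, and \(p\mid\trace{\EC}\) forces \(r = 0\). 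Therefore \(\EC\) is supersingular if and only if \(r = 0\), i.e.\ ordinary if and only if \(r\ne 0\); and since \(\trace{\EC'} = -\trace{\EC}\), the same equivalence holds for \(\EC'\).

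Now suppose \(r\ne 0\). Then \(\psi\notin\ZZ\): otherwise \(\psi^2 = [\epsilon_p d]\pi_{\EC}\) would put \(\pi_{\EC}\) in \(\ZZ\), contradicting ordinarity. Hence \(\psi\) is quadratic over \(\QQ\) with minimal polynomial \(P_\psi(T) = T^2 - rdT + dp\), and \(\QQ(\psi)\) equals the imaginary quadratic field \(K := \End(\EC)\otimes\QQ\) (which also contains \(\pi_{\EC}\)). Thus \(\ZZ[\psi] = \ZZ\oplus\ZZ\psi\) and \(\ZZ[\pi_{\EC}] = \ZZ\oplus\ZZ\pi_{\EC}\) are both orders in \(K\). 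Rearranging Eq.~\eqref{eq:rpsi} gives \(\pi_{\EC} = \epsilon_p([r]\psi - [p])\); substituting this and absorbing the integer \(\epsilon_p p\) into the \(\ZZ\)-summand shows \(\ZZ[\pi_{\EC}] = \ZZ\oplus\ZZ(r\psi)\) as a sublattice of \(\ZZ[\psi] = \ZZ\oplus\ZZ\psi\), of index \(|r|\). Hence \([\ZZ[\psi]:\ZZ[\pi_{\EC}]] = |r|\), and the identical computation starting from \([r]\psi' = [p] - \epsilon_p\pi_{\EC'}\) yields \([\ZZ[\psi']:\ZZ[\pi_{\EC'}]] = |r|\).

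The only step requiring real care is the passage from \(p\mid\trace{\EC}\) to \(r = 0\) — ruling out a nonzero multiple of \(p\) — which is exactly where the Hasse bound and the hypotheses \(p > 3\), \(d\le 7\) are used. The remaining ingredients are the standard characterization of supersingularity via the trace and a routine index count for lattices in \(\ZZ^2\).
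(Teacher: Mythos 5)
Your proof is correct. The first half (ordinary \(\iff r\neq 0\)) is essentially identical to the paper's: both reduce supersingularity to \(p\mid\trace{\EC}\), use Eq.~\eqref{eq:dr2-2p-et} with \(p\nmid d\) to translate this to \(p\mid r\), and then invoke the Hasse bound to rule out nonzero multiples of \(p\). For the index, however, you take a genuinely different (and arguably more elementary) route. The paper compares discriminants: \(\ZZ[\psi]\) has discriminant \(d^2r^2-4dp\) and \(\ZZ[\pi_{\EC}]\) has discriminant \(\trace{\EC}^2-4p^2=r^2(d^2r^2-4dp)\), so \(|r|\) is the conductor of the smaller order in the larger. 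You instead use the explicit relation \(\pi_{\EC}=\epsilon_p([r]\psi-[p])\) from Eq.~\eqref{eq:rpsi} to exhibit \(\ZZ[\pi_{\EC}]=\ZZ\oplus\ZZ(r\psi)\) directly as a sublattice of \(\ZZ\oplus\ZZ\psi\) of index \(|r|\). Your version makes the containment \(\ZZ[\pi_{\EC}]\subseteq\ZZ[\psi]\) (and hence the well-definedness of the index) completely explicit, whereas the paper's discriminant identity \(\trace{\EC}^2-4p^2=r^2(d^2r^2-4dp)\) requires a small algebraic verification via Eq.~\eqref{eq:dr2-2p-et} but generalizes to situations where no linear relation like Eq.~\eqref{eq:rpsi} is at hand. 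Your preliminary observation that \(\psi\notin\ZZ\) (so that \(\ZZ[\psi]\) is genuinely a rank-2 order) is a point the paper glosses over, and it is handled correctly.
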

\begin{proof}
    The curves \(\EC\) and \(\EC'\) are ordinary 
    (not supersingular) 
    if and only if \(p\nmid\trace{\EC}\),
    if and only if \(p\nmid r\)
    (using Eq.~\eqref{eq:dr2-2p-et} and \(p \nmid d\)).
    But the Hasse bound gives \(|\trace{\EC}| \le 2p\),
    so \(|r| \le 2\sqrt{p/d}\); the only \(r\) in this interval 
    divisible by \(p\) is \(0\), proving the first claim.
    If \(\EC\) is ordinary,
    then \(\ZZ[\pi_{\EC}]\) and \(\ZZ[\psi]\)
    are quadratic imaginary orders 
    of discriminant
    \(d^2r^2 - 4dp\) and \(\trace{\EC}^2 - 4p^2 = r^2(d^2r^2 - 4dp)\),
    respectively,
    so \(|r|\) is the conductor of \(\ZZ[\pi_{\EC}']\) in \(\ZZ[\psi]\).
    The same holds for \(\psi'\) and \(\pi_{\EC'}\) on \(\EC'\).
    \qed
\end{proof}

Equation~\eqref{eq:dr2-2p-et}
relates \(r\) to the orders of \(\EC\) and \(\EC'\):
we find
\begin{equation}
    \label{eq:group-order}
        \#\EC(\FF_{p^2})
        =
        (p + \epsilon_p)^2 - \epsilon_pdr^2
        \quad\text{and}\quad
        \#\EC'(\FF_{p^2})
        =
        (p - \epsilon_p)^2 + \epsilon_pdr^2
        \ .
\end{equation}
If \(\EC\) is supersingular, so \(r = 0\),
then~\cite[Theorem~1.1]{Zhu} yields a stronger statement:
\(\EC(\FF_{p^2}) \cong (\ZZ/(p + \epsilon_p)\ZZ)^2\)
and
\(\EC'(\FF_{p^2}) \cong (\ZZ/(p - \epsilon_p)\ZZ)^2\).

\begin{corollary}
    \label{cor:eigenvalue}
    Suppose \(\EC\) is ordinary.
    If \(\G\subseteq\EC(\FF_{p^2})\) 
    (resp.~\(\G'\subseteq\EC'(\FF_{p^2})\))
    is a cyclic subgroup 
    such that \(\psi(\G) \subseteq \G\)
    (resp.~\(\psi'(\G') \subseteq \G'\)),
    then the eigenvalue of \(\psi\) on \(\G\) 
    (resp.~\(\psi'\) on \(\G'\)) is
    \[
        \lambda_\psi
        \equiv
        \frac{p + \epsilon_p}{r}
        \pmod{\#\G}
        \quad
        \text{and}
        \quad
        \lambda_{\psi'}
        \equiv
        \frac{p - \epsilon_p}{r}
        \pmod{\#\G'}
        \ .
    \]
\end{corollary}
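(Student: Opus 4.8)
The plan is to reduce everything to the relations in Theorem~\ref{th:main}, combined with the elementary fact that the \(p^2\)-power Frobenius fixes rational points. Since \(\EC\) is ordinary, the preceding corollary gives \(r \not= 0\), so the quotients \(\frac{p+\epsilon_p}{r}\) and \(\frac{p-\epsilon_p}{r}\) at least make sense once we know \(r\) is invertible modulo the relevant group order.

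First I would observe that because \(\G \subseteq \EC(\FF_{p^2})\), every \(P \in \G\) satisfies \(\pi_{\EC}(P) = P\); that is, \(\pi_{\EC}|_\G = [1]_\G\), and likewise \(\pi_{\EC'}|_{\G'} = [1]_{\G'}\). Next, since \(\G\) is cyclic and \(\psi(\G) \subseteq \G\), the endomorphism \(\psi\) restricts to an endomorphism of \(\G \cong \ZZ/\#\G\ZZ\), hence acts as multiplication by an integer; by definition this integer is the eigenvalue \(\lambda_\psi\), so \(\psi|_\G = [\lambda_\psi]_\G\) (and similarly \(\psi'|_{\G'} = [\lambda_{\psi'}]_{\G'}\)).

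Now I would restrict the first identity of Eq.~\eqref{eq:rpsi} to \(\G\). Using \(\psi|_\G = [\lambda_\psi]_\G\) and \(\pi_{\EC}|_\G = [1]_\G\), the relation \([r]\psi = [p] + \epsilon_p\pi_{\EC}\) becomes
\[
    [r\lambda_\psi]_\G = [\,p + \epsilon_p\,]_\G
    \qquad\text{i.e.}\qquad
    r\lambda_\psi \equiv p + \epsilon_p \pmod{\#\G}
    \ .
\]
It remains to divide by \(r\). Here I would argue that \(\gcd(r,\#\G) = 1\): by the Hasse bound and the previous corollary, \(0 < |r| \le 2\sqrt{p/d}\), so whenever \(\#\G\) is prime — the case of cryptographic interest — we have \(\#\G \nmid r\), and hence \(r\) is invertible modulo \(\#\G\). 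Inverting \(r\) gives \(\lambda_\psi \equiv \frac{p+\epsilon_p}{r} \pmod{\#\G}\). The statement for \(\psi'\) follows verbatim from the second identity \([r]\psi' = [p] - \epsilon_p\pi_{\EC'}\) of Eq.~\eqref{eq:rpsi}, using \(\pi_{\EC'}|_{\G'} = [1]_{\G'}\).

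There is no serious obstacle; the only point requiring care is the invertibility of \(r\) modulo \(\#\G\), which for a general cyclic \(\G\) one must take as an implicit hypothesis (automatic when \(\#\G\) is a large prime, as it will be in practice). As an internal consistency check, squaring the congruence and substituting \(\#\EC(\FF_{p^2}) = (p+\epsilon_p)^2 - \epsilon_p dr^2\) from Eq.~\eqref{eq:group-order} forces \(\lambda_\psi^2 \equiv \epsilon_p d \pmod{\#\G}\), which agrees with \(\psi^2 = [\epsilon_p d]\pi_{\EC}\) acting as \([\epsilon_p d]\) on \(\G\).
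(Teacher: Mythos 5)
Your proof is correct and follows essentially the same route as the paper: both restrict the identity \([r]\psi = [p] + \epsilon_p\pi_{\EC}\) from Theorem~\ref{th:main} to \(\G\), use that \(\pi_{\EC}\) fixes the \(\FF_{p^2}\)-rational points of \(\G\), and then divide by \(r\). The only (minor) difference is the order of operations: the paper divides by \(r\) inside \(\End(\EC)\) first, observing that \(\EC[r]\subseteq\ker([p]+\epsilon_p\pi_{\EC})\) so that \(\psi\) is literally the quotient endomorphism, and only then restricts to \(\G\), whereas you restrict first and then invert \(r\) modulo \(\#\G\) --- which is why you need the explicit \(\gcd(r,\#\G)=1\) caveat that the paper leaves implicit in the meaning of the fraction \((p+\epsilon_p)/r \bmod \#\G\).
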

\begin{proof}
    Theorem~\ref{th:main} states that \([r]\psi = [p] + \epsilon_p\pi\)
    in \(\End(\EC)\),
    so \(\ker([p] + \epsilon_p\pi)\) contains \(\EC[r]\),
    and hence \([p] + \epsilon_p\pi\) is divisible by \(r\) in
    \(\End(\EC)\): indeed, the quotient is \(\psi\).
    The result follows on restricting to \(\G\);
    the argument for \(\psi'\) and \(\G'\) is the same.
    \qed
\end{proof}

\paragraph{\(\QQ\)-curves: Where from, and why?}

Now we just need a source of quadratic \(\QQ\)-curves of small degree.
Elkies~\cite{Elkies}
shows that all \(\QQ\)-curves
correspond to rational points on certain modular curves:\footnote{%
    The reader unfamiliar with modular curves 
    can get away with the following here:
    the modular curve \(X_0(d)\) parametrizes isomorphism
    classes of (cyclic) \(d\)-isogenies. 
    If \(d\) is prime---which is the only case we need here---then
    there is a unique Atkin--Lehner involution \(\omega\) on \(X_0(d)\):
    its action corresponds to exchanging isogenies \(\phi\) 
    with their duals \(\dualof{\phi}\).
    The quotient \(X_0(d) \to X^*(d) := X_0(d)/\subgrp{\omega}\)
    is a double cover, mapping (the isomorphism class of) an isogeny
    \(\phi\) to 
    the pair
    (of isomorphism classes) 
    \(\{\phi,\dualof{\phi}\}\).
}
Let \(X^*(d)\) be the quotient of
the modular curve \(X_0(d)\)
by all of its Atkin--Lehner involutions.
If \(e\) is a point in \(X^*(d)(\QQ)\)
and \(E\) is a preimage of \(e\) in \(X_0(d)(\QQ(\sqrt{\Delta}))\setminus X_0(d)(\QQ)\)
for some \(\Delta\),
then \(E\) parametrizes (up to \(\QQbar\)-isomorphism) a \(d\)-isogeny
\(\phiK: \ECK \to 
\conj{\ECK}\) over \(\QQ(\sqrt{\Delta})\),
where \(\sigma\) is the involution of \(\QQ(\sqrt{\Delta})/\QQ\).

Luckily enough, for very small \(d\), 
both \(X_0(d)\) and \(X^*(d)\) have genus zero---so not only 
can we
get plenty of rational points on \(X^*(d)\), 
we can get a whole one-parameter family of
\(\QQ\)-curves of degree~\(d\).
Hasegawa
gives explicit universal curves for \(d = 2\), \(3\), and~\(7\)
in~\cite[Theorem 2.2]{Hasegawa}:
for each squarefree integer \(\Delta \not=1\),
every \(\QQ\)-curve of degree \(d = 2, 3, 7\) over \(\QQ(\sqrt{\Delta})\)
is \(\QQbar\)-isomorphic to a rational specialization of one of these
families. 

Crucially, Hasegawa's families for \(d = 2, 3\), and \(7\)
are defined for \emph{any} squarefree
\(\Delta\); so we are free to start by fixing \(p\), before choosing a
\(\Delta\) to suit.
Indeed, 
the particular choice of \(\Delta\) is theoretically irrelevant---since 
all quadratic extensions of \(\FF_{p}\) are isomorphic---so
we may choose any practically convenient
value for~\(\Delta\),
such as one permitting faster arithmetic
in~\(\FF_{p^2} = \FF_{p}(\sqrt{\Delta})\).


Of course,
one might ask why it is necessary to use this characteristic-zero theory
when we could simply search for curves over \(\FF_{p^2}\) 
with a \(d\)-isogeny to their Galois conjugate.
But for low degrees \(d\) where \(X_0(d)\) has genus zero, 
every such curve 
arises as the reduction mod \(p\) of a \(\QQ\)-curve over
\(\QQ(\sqrt{\Delta})\) where \(\Delta\) is a nonsquare mod~\(p\).
Indeed, the curves over \(\FF_{p^2}\) with \(d\)-isogenies to their
Galois conjugates correspond (up to isomorphism) to points 
in \(X^*(d)(\FF_p)\) with preimages in \(X_0(d)(\FF_{p^2})\setminus X_0(d)(\FF_p)\).
But \(X_0(d)\) is isomorphic 
to \(\mathbb{P}^1\) over the ground field in the cases
we consider, 
so every such point lifts trivially to 
a point in \(X^*(d)(\QQ)\) with preimages in
\(X_0(d)(\QQ(\sqrt{\Delta}))\setminus X_0(d)(\QQ)\):
that is, to a \(\QQ\)-curve.

We therefore lose no candidate curves over \(\FF_{p^2}\)
by reducing \(\QQ\)-curves mod \(p\) instead of working directly over
\(\FF_{p^2}\).
What we gain by working with \(\QQ\)-curves 
is some simplicity\footnote{
    Especially since \(\QQ\)-curves (by definition)
    have no non-integer endomorphisms,
    while every elliptic curve over \(\FF_{p^2}\) has complex
    multiplication.
} 
and universality in our proofs,
to say nothing of 
the wealth of mathematical literature to be raided for examples and theorems.  
For example: instead of deriving and proving
defining equations for these families over \(\FF_{p^2}\), 
we can just conveniently borrow Hasegawa's universal \(\QQ\)-curve equations.

\paragraph{GLS as the degenerate case \(d = 1\).}

Suppose
\(\phiK: \ECK \to \conj{\ECK}\)
is an isogeny of degree \(d = 1\):
that is, an isomorphism.
Then \(j(\ECK) = j(\conj{\ECK}) = \conj{j(\ECK)}\),
so \(j(\ECK)\) is in \(\QQ\),
and \(\ECK\) is \(\QQ(\sqrt{\Delta})\)-isomorphic to 
a curve defined over~\(\QQ\).
Suppose, then, that
\(\ECK\) is defined over \(\QQ\)
and base-extended to \(\QQ(\sqrt{\Delta})\):
then \(\ECK = \conj{\ECK}\),
and we can apply our construction 
taking
\(\phiK: \ECK \to \conj{\ECK}\) to be the identity map.
Reducing modulo an inert~\(p\),
we have \(\psi=\pi_p\)
and \(\psi^2 = \pi_p^2 = \pi_{\EC}\),
so \(\psi\) has eigenvalue \(\pm1\) 
on cryptographic subgroups of
\(\EC(\FF_{p^2})\):
clearly, \(\psi\) is of no use to us
for scalar decompositions.
However, 
the twisted endomorphism \(\psi'\)
on~\(\EC'\)
satisfies \((\psi')^2 = -\pi_{\EC'}\),
so
the eigenvalue of \(\psi'\) on cryptographic subgroups
is a square root of \(-1\), which is large enough to yield good scalar
decompositions.
We have recovered the GLS endomorphism
(cf.~\cite[Theorem~2]{GLS}).

While \(\EC'(\FF_{p^2})\) may have prime order,
\(\EC(\FF_{p^2})\) cannot:
the fixed points of \(\pi_p\) form a subgroup of order \(p + 1 - t_0\),
where \(t_0^2 - 2p = \trace{\EC}\)
(and the complementary subgroup
has order \(p + 1 + t_0\)).
Hence,
the largest prime divisor of \(\#\EC(\FF_{p^2})\)
can be no larger than \(\Oh(p)\);
the curve \(\EC'\) can therefore never be twist-secure.

\section{
    Short Scalar Decompositions
}
\label{sec:decompositions}

Before moving on to concrete families and examples,
we will show that the endomorphisms developed in
\S\ref{sec:construction} yield short scalar decompositions.

Suppose \(\G\cong\ZZ/N\ZZ\) is a cyclic subgroup of \(\EC(\FF_{p^2})\) 
such that \(\psi(\G) \subseteq \G\).
Corollary~\ref{cor:eigenvalue} shows that
\(\psi\) acts as 
the eigenvalue
\(
    \lambda_\psi \equiv (p + \epsilon_p)/r \pmod{N} 
\)
(a square root of \(\epsilon_p d\))
on \(\G\).
Given an integer \(m\), we want to 
compute a decomposition
\[
    m = a + b\lambda_\psi \pmod N
\]
so as to efficiently compute
\( [m]P = [a]P \oplus [b]\psi(P) \)
for  \(P\) in \(\G\).
The decomposition 
is not unique: far from it.
The set of all decompositions \((a,b)\) of \(m\)
is the lattice coset \((m,0) + \Lattice\),
where
\[
    \Lattice
    := 
    \subgrp{
        (N,0),(-\lambda_\psi,1)
    }
    \subset \ZZ^2
\]
is the lattice of decompositions of \(0\):
that is, of integer pairs \((a,b)\) 
such that \(a + b\lambda_\psi \equiv 0\pmod{N}\).

We want to find a decomposition of \(m\)  
where \(a\) and \(b\) have minimal bitlength: 
that is, 
where \(\lceil\log_2\|(a,b)\|_\infty\rceil\) 
is as small as possible.
The following algorithm\footnote{
    Algorithm~\ref{alg:decomp}
    differs from the standard technique (cf.~\cite[\S4]{GLV}),
    based on Babai rounding~\cite{Babai},
    in Step~2.
    Instead of choosing \(\vv{c}\) to be the shortest of the four vectors,
    Babai rounding approximates it
    by selecting
    \(\vv{c}' = \roundoff{\alpha}\vv{e}_1 + \roundoff{\beta}\vv{e}_2\)
    (this is the correct choice for most \(m\)).
    In terms of bitlength, this means an excess of one bit in the worst
    case.
}
computes an optimal decomposition of \(m\)
given a reduced basis of~\(\Lattice\).

\begin{algorithm}
    \label{alg:decomp}
    Given a reduced basis \([\vv{b}_1,\vv{b}_2]\) for \(\Lattice\),
    computes a decomposition of minimal bitlength 
    (and bitlength at most \(\lceil{\log_2\|\vv{b}_2\|_\infty}\rceil\))
    for any given integer~\(m\).
    \begin{description}
        \item[Input] An integer \(m\) 
            and a reduced basis \([\vv{b}_1,\vv{b}_2]\) 
            for \(\Lattice = \subgrp{(N,0),(-\lambda_\psi,1)}\). 
        \item[Output] 
            A pair of integers \((a,b)\)
            such that \(m \equiv a + b\lambda_\psi \pmod{N}\)
        \item[Step 1] 
            Let \(\alpha := mb_{22}/N\)
            and \(\beta := -mb_{12}/N\).
        \item[Step 2]
            Let \(\vv{c}\) be the shortest of
            the four vectors
            \( \lfloor{\alpha}\rfloor\vv{b}_1 + \lfloor{\beta}\rfloor\vv{b}_2 \),
            \( \lfloor{\alpha}\rfloor\vv{b}_1 + \lceil{\beta}\rceil\vv{b}_2 \),
            \( \lceil{\alpha}\rceil\vv{b}_1 + \lfloor{\beta}\rfloor\vv{b}_2 \),
            and
            \( \lceil{\alpha}\rceil\vv{b}_1 + \lceil{\beta}\rceil\vv{b}_2 \).
        \item[Step 3]
            Return \((a,b) := (m,0) - \vv{c}\).
    \end{description}
\end{algorithm}
\begin{proof}
    It is easily checked that 
    \((\alpha,\beta)\) is the unique solution in \(\QQ^2\) 
    to the linear system \(\alpha\vv{b}_1 + \beta\vv{b}_2 = (m,0)\) 
    (here we use the fact that \(N = \det\Lattice\)).
    Then \(\vv{c}\) is the closest vector to \((m,0)\) in \(\Lattice\)
    by Theorem~19 of~\cite{Kaib},
    so \(\|(a,b)\|_\infty\) is minimal over all decompositions of~\(m\).
    For the bound on \(\|(a,b)\|_\infty\),
    set \(\vv{c}' := \roundoff{\alpha}\vv{b}_1 + \roundoff{\beta}\vv{b}_2\);
    then \(\|(m,0) - \vv{c}\|_\infty \le \|(m,0) - \vv{c}'\|_\infty\).
    The triangle inequality and \(|x - \roundoff{x}| \le 1/2\)
    for all \(x\) in \(\QQ\)
    imply
    \(
        \|(m,0) - \vv{c}'\|_\infty
        \le
        \max(\|\vv{b}_1\|_\infty,\|\vv{b}_2\|_\infty)
        =
        \|\vv{b}_2\|_\infty
    \).
    \qed
\end{proof}

It remains to precompute a reduced basis for \(\Lattice\).
If \(|\lambda_\psi|\) is not unusually small,
then there exists a reduced basis 
of size \(\Oh(\sqrt{N})\).\footnote{
    General bounds on the constant hidden by the \(\Oh(\cdot)\) 
    appear in~\cite{SCQ},
    but they are far from tight for inseparable endomorphisms.
    Lemma~\ref{lemma:magic-basis}
    gives much better results
    for our endomorphisms
    in cryptographic contexts.
}
Traditionally, we would compute it
using the Gauss reduction or Euclidean algorithms
(cf.~\cite{Kaib}, \cite[\S4]{GLV} and~\cite[\S17.1.1]{Galbraith}),
but in our case
lattice reduction algorithms are unnecessary:
following the approach outlined in~\cite{Smith-basis},
we can immediately write down a reduced basis for 
a large sublattice of \(\Lattice\), 
which coincides with \(\Lattice\) when \(\G = \EC(\FF_{p^2})\).
If \(\G\) has a small cofactor in \(\EC(\FF_{p^2})\), 
then we can easily modify the sublattice basis to give a proper reduced
basis for \(\Lattice\) 
(as we will do in
Lemmas~\ref{lemma:basis-d2}, \ref{lemma:basis-montgomery},
and~\ref{lemma:basis-d3} below).

\begin{lemma}
    \label{lemma:magic-basis}
    The vectors
    \( \vv{e}_1 = \left(p + \epsilon_p, -r\right) \)
    and
    \( \vv{e}_2 = \left(-\epsilon_pdr, p + \epsilon_p\right) \)
    generate a sublattice \(\Lattice_0\subseteq\Lattice\) 
    of index \([\Lattice:\Lattice_0] = \#\EC(\FF_{p^2})/N\).
    In particular, if \(\#\EC(\FF_{p^2}) = N\),
    then \(\Lattice = \Lattice_0\).
    For large \(p\), 
    \begin{itemize}
        \item
            if \(\epsilon_p = -1\),
            then \([\vv{e}_1,\vv{e}_2]\) is reduced;
        \item
            if \(\epsilon_p = 1\),
            then
            \([\vv{e}_1 + \vv{e}_2,\vv{e}_1]\) (if \(r > 0\))
            or \([\vv{e}_1 - \vv{e}_2,\vv{e}_1]\) (if \(r < 0\))
            is reduced.
    \end{itemize}
    In either case, 
    the bitlength of the reduced basis is
    \(
        \lceil{\log_2(p+\epsilon_p)}\rceil
    \).
\end{lemma}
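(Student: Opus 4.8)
\medskip
\noindent\textbf{Proof strategy.}
The statement bundles three claims, and the plan is to take them in turn. For the first --- that \(\Lattice_0 := \subgrp{\vv{e}_1,\vv{e}_2}\) is a sublattice of \(\Lattice\) of index \(\#\EC(\FF_{p^2})/N\) (hence \(\Lattice_0 = \Lattice\) when \(\#\EC(\FF_{p^2}) = N\)) --- I would first verify the two containments. Recall \(\Lattice = \{(a,b)\in\ZZ^2 : a+b\lambda_\psi\equiv 0\pmod N\}\), and note that --- by Corollary~\ref{cor:eigenvalue}, and since \(\psi^2 = [\epsilon_p d]\pi_{\EC}\) with \(\pi_{\EC}\) acting trivially on \(\G\) --- the eigenvalue satisfies both \(r\lambda_\psi\equiv p+\epsilon_p\) and \(\lambda_\psi^2\equiv\epsilon_p d\pmod N\). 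The first congruence puts \(\vv{e}_1\) in \(\Lattice\); for \(\vv{e}_2\) one computes
\[
    -\epsilon_p d r + (p+\epsilon_p)\lambda_\psi
    \equiv
    -r\lambda_\psi^2 + (p+\epsilon_p)\lambda_\psi
    =
    \lambda_\psi\bigl((p+\epsilon_p) - r\lambda_\psi\bigr)
    \equiv 0 \pmod N
    \ .
\]
(More structurally: \(\vv{e}_2\) is the image of \(\vv{e}_1\) under \((a,b)\mapsto(\epsilon_p d b,a)\), which is multiplication by \(\psi\) on the decomposition lattice, so the second containment comes for free.) I would then compare covolumes: \(\det[\vv{e}_1,\vv{e}_2] = (p+\epsilon_p)^2 - \epsilon_p d r^2 = \#\EC(\FF_{p^2})\) by~\eqref{eq:group-order}, while \(\det\Lattice = N\), so \([\Lattice:\Lattice_0] = \#\EC(\FF_{p^2})/N\), which is \(1\) exactly when \(\#\EC(\FF_{p^2})=N\).

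Next, for the norms: Equation~\eqref{eq:dr2-2p-et} and the Hasse bound \(|\trace{\EC}|\le 2p\) give \(d r^2\le 4p\), hence \(d|r|\le 2\sqrt{dp}\); once \(p\) is large enough that \(2\sqrt{dp}<p-1\) (true for every cryptographically sized \(p\) and every \(d\le 7\)) we get \(\|\vv{e}_1\|_\infty=\|\vv{e}_2\|_\infty=p+\epsilon_p\). I would then expand \(\vv{e}_1\pm\vv{e}_2\), use \(|r|,d|r|\ll p\) to fix the sign of each coordinate, and read off the two norms. The answer splits on \(\epsilon_p\): for \(\epsilon_p=-1\), \(\{\|\vv{e}_1-\vv{e}_2\|_\infty,\|\vv{e}_1+\vv{e}_2\|_\infty\} = \{\,p-1+|r|,\ p-1+d|r|\,\}\), both \(\ge p-1\), so \(\vv{e}_1,\vv{e}_2\) are a pair of shortest vectors; for \(\epsilon_p=+1\), \(\{\|\vv{e}_1-\vv{e}_2\|_\infty,\|\vv{e}_1+\vv{e}_2\|_\infty\} = \{\,p+1-|r|,\ p+1+d|r|\,\}\), so one of the two combinations is strictly shorter than \(\vv{e}_1\) and \(\vv{e}_2\).

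Feeding these into~\eqref{eq:reduced} then gives reducedness. If \(\epsilon_p=-1\): ordering the combinations so that \(\vv{e}_1-\vv{e}_2\) is the shorter one makes~\eqref{eq:reduced} read \(p-1\le p-1\le p-1+|r|\le p-1+d|r|\); that ordering is automatic when \(r>0\), and when \(r<0\) one uses \([\vv{e}_1,-\vv{e}_2]\) instead, which spans the same \(\Lattice_0\) with identical lengths. If \(\epsilon_p=+1\): the short combination (\(\vv{e}_1+\vv{e}_2\) for \(r>0\), \(\vv{e}_1-\vv{e}_2\) for \(r<0\), of length \(p+1-|r|\)) is a shortest nonzero vector; taking it as \(\vv{b}_1\) with \(\vv{b}_2 = \vv{e}_1\) makes~\eqref{eq:reduced} read \(p+1-|r|\le p+1\le\|\vv{e}_2\|_\infty = p+1\le\|2\vv{e}_1\pm\vv{e}_2\|_\infty\), the last vector having a coordinate \(\approx 2p\). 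In every case the longer reduced basis vector has infinity norm exactly \(p+\epsilon_p\), so by Algorithm~\ref{alg:decomp} the decompositions have bitlength \(\lceil\log_2(p+\epsilon_p)\rceil\).

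\medskip
\noindent\textbf{Expected difficulty.} Nothing here is deep: the sublattice/index claim is a determinant computation on top of Corollary~\ref{cor:eigenvalue}, and the norm estimates follow from \(|r| = \Oh(\sqrt p)\). The real work is the four-way case analysis on the signs of \(r\) and \(\epsilon_p\) in the last two steps, plus pinning down how large ``large \(p\)'' must be. The one point I would flag explicitly is that for \(\epsilon_p=-1\) and \(r<0\), the ordered pair literally satisfying~\eqref{eq:reduced} is \([\vv{e}_1,-\vv{e}_2]\) rather than \([\vv{e}_1,\vv{e}_2]\) --- but this changes neither \(\Lattice_0\) nor any relevant length, so the bitlength conclusion stands.
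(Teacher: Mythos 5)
Your proposal is correct and follows essentially the same route as the paper's proof: membership of \(\vv{e}_1,\vv{e}_2\) in \(\Lattice\) via the congruences \(r\lambda_\psi\equiv p+\epsilon_p\) and \(\lambda_\psi^2\equiv\epsilon_p d\pmod N\), the index via the determinant \((p+\epsilon_p)^2-\epsilon_p dr^2=\#\EC(\FF_{p^2})\), and the norm bounds from \(|r|=\Oh(\sqrt p)\); the only difference is that you actually carry out the four-way sign check that the paper compresses into ``the bases satisfy Inequality~\eqref{eq:reduced}.'' Your flag about \(\epsilon_p=-1\), \(r<0\) is a legitimate catch: there \(\|\vv{e}_1-\vv{e}_2\|_\infty=p-1+d|r|>p-1+|r|=\|\vv{e}_1+\vv{e}_2\|_\infty\), so the ordered pair literally satisfying~\eqref{eq:reduced} is \([\vv{e}_1,-\vv{e}_2]\), a harmless sign adjustment that changes neither the lattice nor the stated bitlength.
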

\begin{proof}
    Corollary~\eqref{cor:eigenvalue} 
    implies
    \(r\lambda_\psi \equiv p + \epsilon_p\)
    and \(r\epsilon_pd \equiv (p + \epsilon_p)\lambda_\psi \pmod{N}\),
    so 
    \(\vv{e}_1\)
    and \(\vv{e}_2\) 
    are in \(\Lattice\);
    they are linearly independent,
    so they generate a sublattice.
    The determinant is \((p+\epsilon_p)^2 - \epsilon_pdr^2\),
    which is \(\#\EC(\FF_{p^2})\) by Eq.~\eqref{eq:group-order}.
    Recall that \(r\) is in \(\Oh(\sqrt{p})\) and \(d\) is very small,
    so \(\|\vv{e}_1\|_\infty = \|\vv{e}_2\|_\infty = p + \epsilon_p\).
    The bases satisfy
    Inequality~\eqref{eq:reduced}, and hence are reduced.
    \qed
\end{proof}

\section{
    Endomorphisms from Quadratic \(\QQ\)-curves of Degree 2
}
\label{sec:degree-2}

Let \(\Delta\) be a squarefree integer.
Hasegawa
defines a one-parameter family 
\[
    \label{eq:Hasegawa-2}
    \ECK_{2,\Delta,s}: 
    y^2 
    = 
    x^3 + 2(C_{2,\Delta}(s) - 24)x - 8(C_{2,\Delta}(s) - 16)
\]
of \(\QQ\)-curves
of degree 2
over \(\QQ(\sqrt{\Delta})\)
in~\cite[Theorem 2.2]{Hasegawa}, 
where
\[
    C_{2,\Delta}(s) := 9(1 + s\sqrt{\Delta})
\]
and \(s\) is a free parameter taking values in \(\QQ\).
Observe that
\(
    \conj{\ECK_{2,\Delta,s}} = \ECK_{2,\Delta,-s}
\).

To realise the \(\QQ\)-curve structure,
observe that \(\ECK_{2,\Delta,s}\) has 
a rational 2-torsion point \((4,0)\).
We compute
the normalized quotient isogeny 
\(
    \ECK_{2,\Delta,s}
    \to 
    \ECK_{2,\Delta,s}/\subgrp{(4,0)}
\)
using Eq.~\eqref{eq:Velu-2};
but then
we observe that
\(
    \ECK_{2,\Delta,s}/\subgrp{(4,0)} 
    =
    \twist{\sqrt{-2}}{(\conj{\ECK_{2,\Delta,s}})}
\),
so composing the quotient 
with the twisting isomorphism \(\twistiso{1/\sqrt{-2}}\)
yields a 2-isogeny
\[
    \phiK_{2,\Delta,s} : 
    \ECK_{2,\Delta,s} 
    \longrightarrow 
    \conj{\ECK_{2,\Delta,s}}
\]
defined by the rational map
\[
    \phiK_{2,\Delta,t}:
    (x,y)
    \longmapsto
    \left(
        \frac{-x}{2} - \frac{C_{2,\Delta}(s)}{x-4}
        , 
        \frac{y}{\sqrt{-2}}
        \left(
            \frac{-1}{2} + \frac{C_{2,\Delta}(s)}{(x-4)^2}
        \right)
    \right)
    \ .
\]
(The arbitrary choice of one of the two square roots of
\(-2\) results in an arbitrary sign on \(\phiK_{2,\Delta,s}\).)
Conjugating and composing again,
we find that
\begin{equation}
    \label{eq:epsilon-2}
    \conj{\phiK_{2,\Delta,s}}
    \circ
    {\phiK_{2,\Delta,s}}
    =
    [\epsilon 2]_{\ECK_{2,\Delta,s}}
    \quad
    \text{ where }
    \epsilon 
    = 
    \begin{cases}
        -1 & \text{if }\ \conj{\sqrt{-2}} = \sqrt{-2}  
        \\
        +1 & \text{if }\ \conj{\sqrt{-2}} = -\sqrt{-2} 
    \end{cases}
\end{equation}
---and similarly, 
\(
    {\phiK_{2,\Delta,s}}
    \circ
    \conj{\phiK_{2,\Delta,s}}
    =
    [\epsilon 2]_{\conj{\ECK_{2,\Delta,s}}}
\).

The discriminant of the family \(\ECK_{2,\Delta,s}\) is 
\( 2^9\cdot C_{2,\Delta}(s)^2\cdot\conj{C_{2,\Delta}(s)} \),
and 
\[
    j(\ECK_{2,\Delta,s})
    =
    \frac{ 
        -12^3 (C_{2,\Delta}(s) - 24)^3 
    }{
        C_{2,\Delta}(s)^2\cdot\conj{C_{2,\Delta}(s)} 
    }
\]
(letting \(s \to \infty\),\footnote{%
    ``Letting \(s\to\infty\)''
    has a proper technical meaning here 
    (and also in \S\ref{sec:degree-3}, \S\ref{sec:degree-7}, and
    \S\ref{sec:CM})---even over finite fields.
    The parameter \(s\) is defined by Hasegawa's choice 
    (following Fricke) of a
    rational parametrization of the modular curve \(X_0(2)\):
    that is, a birational map between \(\PP^1\) and \(X_0(2)\).
    Under this parametrization,
    the point at infinity on
    \(\PP^1\) 
    corresponds to the isomorphism class of the \(2\)-isogeny (in fact,
    the endomorphism) \(1 + \iota\),
    where \(\iota\) is an automorphism of order \(4\) of the curve
    with \(j\)-invariant \(1728\);
    this reflects what we find when we put \(s =
    \infty\) in the formula for \(j(\ECK_{2,\Delta,s})\).
} we find
\(j(\ECK_{2,\Delta,\infty}) = 1728\)).
We see that \(\ECK_{2,\Delta,s}\) reduces modulo any inert \(p > 3\)
to give a family of elliptic curves over \(\FF_{p^2}\),
and then every value of \(s\) in \(\FF_p\)
yields an elliptic curve over \(\FF_{p^2}\).
(Proposition~\ref{prop:j-d2} below
shows that at most two of these curves are isomorphic.)

\begin{theorem}
    \label{th:d2}
    Let \(p > 3\) be prime,
    fix a nonsquare
    \(\Delta\) modulo \(p\),
    so 
    \(\FF_{p^2} = \FF_p(\sqrt{\Delta})\),
    and let \(\EC_{2,\Delta,s}\) and \(\phi_{2,\Delta,s}\)
    be the reductions modulo \(p\)
    of \(\ECK_{2,\Delta,s}\)
    and \(\phiK_{2,\Delta,s}\).

    For each \(s\) in \(\FF_p\),
    the curve \(\EC_{2,\Delta,s}/\FF_{p^2}\) 
    has an efficient \(\FF_{p^2}\)-endomorphism
    \[
        \psi_{2,\Delta,s} := \pi_p\circ\phi_{2,\Delta,s}
    \]
    of degree \(2p\)
    such that
    \(
        \psi_{2,\Delta,s}^2 = [\epsilon_p 2]\pi_{\EC_{2,\Delta,s}} 
    \)
    and
    \(
        (\psi_{2,\Delta,s}')^2 = [-\epsilon_p 2]\pi_{\EC_{2,\Delta,s}'}
    \),
    where
    \[
        \epsilon_p 
        := 
        {-\Legendre{-2}{p}}
        =
        \begin{cases}
            -1 & \text{if }\ p \equiv 1, 3 \pmod{8} \ , \\
            +1 & \text{if }\ p \equiv 5, 7 \pmod{8} \ . \\
        \end{cases}
    \]
    There exists an integer 
    \(r\) satisfying
    \(
        2r^2 = 2p + \epsilon_p \trace{\EC_{2,\Delta,s}} 
    \)
    such that
    \[
        [r]\psi_{2,\Delta,s}
        =
        [p] + \epsilon_p\pi_{\EC_{2,\Delta,s}}
        \qquad \text{and} \qquad 
        [r]\psi_{2,\Delta,s}' 
        = 
        [p] - \epsilon_p\pi_{\EC_{2,\Delta,s}'}
        \ ;
    \]
    the characteristic polynomial
    of
    \(\psi_{2,\Delta,s}\)
    and \(\psi_{2,\Delta,s}'\)
    is
    \(
        P_{2,\Delta,s}(T)
        =
        T^2 - 2rT + 2p 
    \).

    In particular,
    if \(\EC_{2,\Delta,s}\) is ordinary and
    \(\G \subseteq \EC_{2,\Delta,s}(\FF_{p^2})\) 
    is a cyclic subgroup of order \(N\)
    such that \(\psi_{2,\Delta,s}(\G) \subseteq \G\),
    then the eigenvalue of \(\psi_{2,\Delta,s}\) on \(\G\) is 
    \[
        \lambda_{2,\Delta,s} 
        \equiv
        \left( p + \epsilon_p \right)/r
        \equiv 
        \pm\sqrt{\epsilon_p 2} 
        \pmod{N}
        \ .
    \]
\end{theorem}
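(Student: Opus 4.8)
The plan is to derive Theorem~\ref{th:d2} directly from Theorem~\ref{th:main} and Corollary~\ref{cor:eigenvalue} by specializing to $d = 2$, with $\ECK_{2,\Delta,s}$ in the role of $\ECK$ and $\phiK_{2,\Delta,s}$ in the role of $\phiK$. First I would note that $\phiK_{2,\Delta,s}$ is defined over $\QQ(\sqrt{\Delta},\sqrt{-2})$ --- its $y$-coordinate map divides by $\sqrt{-2}$ --- which is exactly the standing assumption of \S\ref{sec:construction} with $\gamma = \sqrt{-d} = \sqrt{-2}$. The defining coefficients involve only powers of $2$, $3$, and $C_{2,\Delta}(s)$, so $\ECK_{2,\Delta,s}$ has good reduction at every inert $p > 3$ prime to $2$. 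Theorem~\ref{th:main} then applies verbatim and yields: the degree-$2p$ endomorphism $\psi_{2,\Delta,s} = \pi_p\circ\phi_{2,\Delta,s}$ (efficient because $\phi_{2,\Delta,s}$ is the explicit low-degree rational map displayed above and $\pi_p$ is a coordinate conjugation on $\FF_{p^2}$); the identities $\psi_{2,\Delta,s}^2 = [\epsilon_p 2]\pi_{\EC_{2,\Delta,s}}$ and $(\psi_{2,\Delta,s}')^2 = [-\epsilon_p 2]\pi_{\EC_{2,\Delta,s}'}$; the existence of $r$ with $2r^2 = 2p + \epsilon_p\trace{\EC_{2,\Delta,s}}$ and the relations of Eq.~\eqref{eq:rpsi}; and the characteristic polynomial $T^2 - 2rT + 2p$.

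The only genuinely new point is the explicit value of the sign $\epsilon_p$. By Eq.~\eqref{eq:epsilon-2}, the sign $\epsilon$ attached to $\phiK_{2,\Delta,s}$ over $\QQ(\sqrt{\Delta},\sqrt{-2})$ equals $-1$ when $\sigma$ fixes $\sqrt{-2}$ and $+1$ when $\sigma$ sends $\sqrt{-2}$ to $-\sqrt{-2}$. Here $\sigma$ is, as fixed throughout \S\ref{sec:construction}, the extension to $\QQ(\sqrt{\Delta},\sqrt{-2})$ of the conjugation of $\QQ(\sqrt{\Delta})/\QQ$ that is the image of the $p$-power map $(p)$ --- this is the one that survives reduction modulo the inert prime $p$. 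Reducing and applying the explicit formula Eq.~\eqref{eq:FFp2-conj} for this conjugation, we get $\conj[(p)]{\sqrt{-2}} = \Legendre{-2}{p}\sqrt{-2}$ in $\FF_{p^2}$, so $\epsilon_p = -1$ when $\Legendre{-2}{p} = 1$ and $\epsilon_p = +1$ when $\Legendre{-2}{p} = -1$; that is, $\epsilon_p = -\Legendre{-2}{p}$.

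It remains to turn $-\Legendre{-2}{p}$ into a condition modulo $8$. Writing $\Legendre{-2}{p} = \Legendre{-1}{p}\Legendre{2}{p}$ and using the supplementary laws $\Legendre{-1}{p} = 1 \iff p \equiv 1 \pmod 4$ and $\Legendre{2}{p} = 1 \iff p \equiv \pm1 \pmod 8$, a four-case check over $p \equiv 1,3,5,7 \pmod 8$ shows $\Legendre{-2}{p} = 1$ exactly for $p \equiv 1,3 \pmod 8$, which gives the displayed case split for $\epsilon_p$. Finally, for the eigenvalue: Corollary~\ref{cor:eigenvalue} gives $\lambda_{2,\Delta,s} \equiv (p + \epsilon_p)/r \pmod N$ whenever $\EC_{2,\Delta,s}$ is ordinary; and since $\G \subseteq \EC_{2,\Delta,s}(\FF_{p^2}) = \ker(\pi_{\EC_{2,\Delta,s}} - 1)$, the Frobenius restricts to the identity on $\G$, so $\psi_{2,\Delta,s}^2 = [\epsilon_p 2]\pi_{\EC_{2,\Delta,s}}$ acts on $\G$ as $[\epsilon_p 2]$, whence $\lambda_{2,\Delta,s}^2 \equiv \epsilon_p 2 \pmod N$ and $\lambda_{2,\Delta,s} \equiv \pm\sqrt{\epsilon_p 2} \pmod N$.

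The main obstacle is a mild one: the bookkeeping in the second paragraph, where one must keep straight that $\epsilon$ in Eq.~\eqref{eq:epsilon-2} is defined by an abstract automorphism of $\QQ(\sqrt{\Delta},\sqrt{-2})$ and that, because $p$ is inert, the relevant automorphism is the one reducing to the $p$-power Frobenius, whose action on $\sqrt{-2}$ modulo $p$ is governed by $\Legendre{-2}{p}$ through Eq.~\eqref{eq:FFp2-conj}. Everything else is either a citation of Theorem~\ref{th:main} and Corollary~\ref{cor:eigenvalue} or an elementary quadratic-reciprocity computation.
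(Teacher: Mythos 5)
Your proposal is correct and takes essentially the same approach as the paper, which simply cites Theorem~\ref{th:main} and Corollary~\ref{cor:eigenvalue} together with Eq.~\eqref{eq:epsilon-2}; you have merely unpacked the details, in particular the derivation of $\epsilon_p = -\Legendre{-2}{p}$ from the action of the Frobenius-induced conjugation on $\sqrt{-2}$ via Eq.~\eqref{eq:FFp2-conj}, and the translation into a mod-$8$ condition.
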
 
\begin{proof}
    Apply Theorem~\ref{th:main} and Corollary~\ref{cor:eigenvalue} to 
    \(\phiK_{2,\Delta,s}\)
    using Eq.~\eqref{eq:epsilon-2}.
    \qed
\end{proof}

\begin{proposition}
    \label{prop:j-d2}
    If \(p > 7\), then
    \( \#\big\{ j(\EC_{2,\Delta,s}) : s \in \FF_p \big\} = p \)
    if \(-7\) is a square in~\(\FF_p\),
    and \(p-1\) otherwise.
\end{proposition}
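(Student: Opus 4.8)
The plan is to reduce the statement to an explicit count of collisions of a degree-$3$ rational map. Set $u := s\sqrt{\Delta}$, so that $C_{2,\Delta}(s) = 9(1+u)$ and its conjugate is $9(1-u)$; substituting into the displayed formula for $j$ and cancelling powers of $2$ and $3$ gives the compact expression
\[
    j(\EC_{2,\Delta,s}) = J(u) \, ,
    \qquad
    J(u) := \frac{-64\,(3u - 5)^3}{(1+u)^2(1-u)}
    \ \in \FF_p(u) \, .
\]
Thus $J$ is a degree-$3$ cover $\PP^1 \to \PP^1$ defined over $\FF_p$, and as $s$ runs over $\FF_p$ the parameter $u = s\sqrt{\Delta}$ runs over $S := \{\, u \in \FF_{p^2} : u^p = -u \,\}$, a set of exactly $p$ elements. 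Since $\#J(S) = p - \sum_{j_0 \in J(S)}\bigl(n_{j_0} - 1\bigr)$, where $n_{j_0} := \#\bigl(J^{-1}(j_0) \cap S\bigr) \le 3$, everything comes down to locating the values $j_0$ with two or three preimages in $S$.

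First I would record the ramification of $J$: differentiating shows the only critical points are $u = 5/3$ (over $j = 0$), $u = -1$ (over $j = \infty$), and $u = 7/9$ (over $j = 1728$), and Riemann--Hurwitz confirms there is no further ramification. None of these lies in $S$, so the ``collision curve'' $\{(u,v) : J(u) = J(v)\}$ does not meet the diagonal inside $S \times S$; the nontrivial collisions among the values $J(u)$, $u \in S$, are therefore exactly its points in $S \times S$ with $u \ne v$. The key remark is that $u \in S$ implies $J(u)^p = J(u^p) = J(-u)$ (as $J$ has coefficients in $\FF_p$), so $J(u) \in \FF_p$ if and only if $J(u) = J(-u)$. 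Clearing denominators in $J(u) = J(-u)$ and using $u \ne \pm 1$ for $u \in S$, one reduces to $324u^3 + 700u = 0$, that is, to
\[
    u = 0
    \qquad\text{or}\qquad
    u^2 = -\tfrac{175}{81} = -7\bigl(\tfrac{5}{9}\bigr)^2 \, .
\]
A nonzero element of $S$ has square equal to a nonsquare of $\FF_p$, so such a nonzero $u \in S$ exists precisely when $-175/81$ --- equivalently $-7$ --- is a nonsquare modulo $p$. In that case $\{u, -u\}$ is a two-point $S$-fibre over $j_0 = -3375$ (the third preimage being $65/63 \in \FF_p \setminus S$), and it contributes exactly $1$ to $\sum_{j_0}(n_{j_0} - 1)$; this is the origin of the symbol $\Legendre{-7}{p}$.

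It remains --- and this is the main obstacle --- to prove the converse: for $p > 7$, every $(u,v) \in S \times S$ with $J(u) = J(v)$ and $u \ne v$ lies on the branch $v = -u$ just analysed. The natural tool is the correspondence polynomial: with $f(u) := (3u-5)^3$ and $g(u) := (1+u)^2(1-u)$, the antisymmetric polynomial $f(u)g(v) - f(v)g(u)$ factors as $(u - v)\,Q(u,v)$ with $Q$ symmetric of bidegree $(2,2)$, and $J(u) = J(v)$ with $u \ne v$ is equivalent to $Q(u,v) = 0$. Writing $u = a\sqrt{\Delta}$, $v = b\sqrt{\Delta}$ with $a, b \in \FF_p$ and splitting $Q(u,v)$ into its $\FF_p$-part and its $\sqrt{\Delta}$-part turns $Q(u,v) = 0$ into a pair of equations over $\FF_p$: the $\sqrt{\Delta}$-part forces $a = -b$ (which, substituted into the other equation, recovers exactly $u^2 = -175/81$) or else fixes the value of $\Delta a b$, and in this second branch one feeds that value back into the $\FF_p$-part and shows the resulting system has no $\FF_p$-solution once $p$ lies outside a short explicit list of small primes. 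Carrying out this case analysis, together with a direct check of the finitely many primes for which the auxiliary quantities above degenerate modulo $p$, is the hard part; granting it, $\#J(S)$ equals $p$ when $-7$ is a square in $\FF_p$ and $p - 1$ otherwise.
\qed
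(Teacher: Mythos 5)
Your reduction to the degree-\(3\) map \(J(u) = -64(3u-5)^3/\big((1+u)^2(1-u)\big)\) on the trace-zero set \(S\), the ramification data, the computation \(J(u)=J(-u)\iff 4u(81u^2+175)=0\), and the identification of the fibre \(\{\pm\tfrac59\sqrt{-7},\,\tfrac{65}{63}\}\) over \(j=-3375\) are all correct, and up to the substitution \(u=s\sqrt{\Delta}\) your strategy is the paper's: the paper separates the collision condition into an \(\FF_p\)-part \(F_0\) and a \(\sqrt{\Delta}\)-part \(F_1(s_1,s_2)=(s_1+s_2)(63\Delta s_1s_2-65)\), both of which must vanish; your \(Q(u,v)\) equals \(-2(F_0-2\sqrt{\Delta}F_1)\). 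The gap is exactly the step you flag as ``the hard part'' and then grant: ruling out the branch where the \(\sqrt{\Delta}\)-part vanishes because \(uv=65/63\) rather than because \(v=-u\). Without that, the proof is incomplete, since a priori that branch could contribute further collisions.

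Worse, that step cannot be completed, because the excluded branch \emph{does} contribute collisions for a positive density of primes. On the branch \(uv=65/63\), the vanishing of the \(\FF_p\)-part forces \((u+v)^2 = 5\,(128/147)^2\) and hence \((u-v)^2=(u+v)^2-4uv=-15\,(22/147)^2\). Both \(u+v\) and \(u-v\) lie in \(S\) exactly when these two quantities are nonsquares in \(\FF_p\), i.e.\ when \(\Legendre{5}{p}=-1\) and \(\Legendre{-3}{p}=+1\); for such \(p\) one obtains \(u\ne v\) in \(S\) with \(J(u)=J(v)\) and \(v\ne -u\), plus the conjugate pair \((-u,-v)\), so \(\#J(S)\le p-2\) before the \(\Legendre{-7}{p}\) correction is even applied. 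Concretely, for \(p=37\) and \(\Delta=2\) one checks directly that \(J(25\sqrt2)=J(13\sqrt2)=3+27\sqrt2\) (equivalently \(F_0(25,13)\equiv F_1(25,13)\equiv 0 \bmod 37\) with \(13\ne-25\)), so \(\#J(S)\le 35\), whereas \(-7\equiv 30\) is a square mod \(37\) and the proposition predicts \(37\). (For \(p=13\) there is a further degeneration: \(65/63\equiv 0\in S\), and the fibre over \(8000\equiv-3375\equiv 5\) contains three points of \(S\).) So the proposition as stated is false for infinitely many \(p\), and the correct count must involve \(\Legendre{5}{p}\) and \(\Legendre{-3}{p}\) as well as \(\Legendre{-7}{p}\). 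The paper's own proof has the identical blind spot---it asserts without justification that \(F_0=F_1=0\) forces \(s_2=-s_1\)---so the lesson is not that your outline is wrong, but that carrying out your ``hard part'' honestly refutes the statement rather than proving it.
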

\begin{proof}
    Suppose \(j(\EC_{2,\Delta,s_1}) = j(\EC_{2,\Delta,s_2})\),
    with \(s_1 \not=s_2\).
    Equating the \(j\)-invariants symbolically, we must have
    \(F_0(s_1,s_2) = 2\sqrt{\Delta}F_1(s_1,s_2)\),
    where the polynomials
    \(F_0(T_1,T_2) = (\Delta T_1T_2 + 1)(81 \Delta T_1T_2 - 175) + 49\Delta(T_1 + T_2)^2\)
    and
    \(F_1(T_1,T_2) = (T_1 + T_2)(63\Delta T_1T_2 - 65)\)
    have coefficients in \(\FF_p\).
    If \(s_1\) and \(s_2\) are in \(\FF_p\),
    then \(F_0(s_1,s_2) = F_1(s_1,s_2) = 0\);
    this happens if and only if
    \(s_2 = -s_1\) and 
    either \(s_i\sqrt{\Delta} = \pm1\) 
    (which is impossible) 
    or
    \(\pm\frac{5}{9}\sqrt{-7}\),
    whence the result.
    \qed
\end{proof}

Both
\(\EC_{2,\Delta,s}(\FF_{p^2})\) and \(\EC_{2,\Delta,s}'(\FF_{p^2})\) 
contain points of order 2:
they generate the kernels of \(\psi_{2,\Delta,s}\) 
and \(\psi_{2,\Delta,s}'\).
If 
\( \#\EC_{2,\Delta,s}(\FF_{p^2}) = 2^kN\)
and \( \#\EC_{2,\Delta,s}'(\FF_{p^2}) = 2^{k'}N'\)
with \(N\) and \(N'\) odd,
then 
Eq.~\eqref{eq:twist-cardinalities}
modulo 8
implies that
either \(k = k' = 1\),
or \(k = 2\) and \(k' \ge 3\), or \(k \ge 3\) and \(k' = 2\).
Equation~\eqref{eq:group-order} modulo 3
implies that if \(p \equiv 2 \pmod{3}\) then
either \(\EC_{2,\Delta,s}(\FF_{p^2})\) 
or \(\EC'_{2,\Delta,s}(\FF_{p^2})\) 
contains a point of order 3.
%
%

\paragraph{Optimal decompositions.}
In view of the Pohlig--Hellman--Silver reduction~\cite{Pohlig--Hellman}
and the rational \(2\)-torsion point on \(\EC_{2,\Delta,s}\),
the ``optimal'' situation for discrete-log based cryptosystems
is when \(\EC_{2,\Delta,s}(\FF_{p^2}) \cong \ZZ/2\ZZ\times\G\)
with \(\#\G\) prime
(though for faster arithmetic, we may want a cofactor of 4 instead of 2;
we consider this later).
Lemma~\ref{lemma:basis-d2}
constructs an optimal basis
for the GLV lattice \(\Lattice\) in this case.
We can use this basis in Algorithm~\ref{alg:decomp}
to decompose scalar multiplications in \(\G\)
as \([m]P = [a]P \oplus [b]\psi_{2,\Delta,s}(P)\)
where \(a\) and \(b\) have at most \(\lceil\log_2p\rceil\) bits.

\begin{lemma}
    \label{lemma:basis-d2}
    Suppose 
    \(\EC_{2,\Delta,s}(\FF_{p^2}) \cong \ZZ/2\ZZ\times\ZZ/N\ZZ\)
    with \(N\) odd, 
    and let \(\Lattice = \subgrp{(N,0),(-\lambda_{2,\Delta,s},1)}\).
    Let \(\vv{e}_1\) and \(\vv{e}_2\)
    be defined as in Lemma~\ref{lemma:magic-basis}.
    For large \(p\), 
    \begin{itemize}
        \item
            if \(\epsilon_pr \ge 0\),
            then 
            \([-\vv{e}_2/2,\vv{e}_1+\vv{e}_2/2]\)
            is a reduced basis for \(\Lattice\);
        \item
            if \(\epsilon_pr < 0\),
            then
            \([-\vv{e}_2/2,\vv{e}_1-\vv{e}_2/2]\)
            is a reduced basis for \(\Lattice\).
    \end{itemize}
    In either case,
    the bitlength of the reduced basis is
    \(\lceil{\log_2(p+\epsilon_p-|r|)}\rceil\).
\end{lemma}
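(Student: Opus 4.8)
The plan is to realise $\Lattice$ as a small overlattice of the sublattice $\Lattice_0 = \subgrp{\vv{e}_1,\vv{e}_2}$ furnished by Lemma~\ref{lemma:magic-basis}, and then to clean up the resulting basis with a single Gauss--reduction step. Since $\EC_{2,\Delta,s}(\FF_{p^2}) \cong \ZZ/2\ZZ \times \ZZ/N\ZZ$, Lemma~\ref{lemma:magic-basis} gives $[\Lattice:\Lattice_0] = 2$, so $\Lattice$ is generated by $\Lattice_0$ together with a single coset representative of order $2$. The first step is to show that this representative is $\tfrac12\vv{e}_2 = (-\epsilon_p r, \tfrac12(p+\epsilon_p))$. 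It lies in $\ZZ^2$ because $p$ is odd and $\epsilon_p = \pm 1$, so $p+\epsilon_p$ is even. It lies in $\Lattice$ because, writing $\lambda := \lambda_{2,\Delta,s}$, Corollary~\ref{cor:eigenvalue} gives $r\lambda \equiv p+\epsilon_p \pmod{N}$, while Theorem~\ref{th:d2} gives $\lambda^2 \equiv 2\epsilon_p \pmod{N}$ (from $\psi_{2,\Delta,s}^2 = [2\epsilon_p]\pi_{\EC_{2,\Delta,s}}$ and the fact that $\pi_{\EC_{2,\Delta,s}}$ acts trivially on $\EC_{2,\Delta,s}(\FF_{p^2})$); hence $(p+\epsilon_p)\lambda \equiv r\lambda^2 \equiv 2\epsilon_p r \pmod{N}$, and dividing by $2$ (legitimate since $N$ is odd) gives $\tfrac12(p+\epsilon_p)\lambda \equiv \epsilon_p r \pmod{N}$, i.e.\ $-\epsilon_p r + \tfrac12(p+\epsilon_p)\lambda \equiv 0 \pmod{N}$, so that $\tfrac12\vv{e}_2 \in \Lattice$.

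The second step is a determinant count. The matrix with rows $\vv{e}_1$ and $\tfrac12\vv{e}_2$ has determinant $\tfrac12\big((p+\epsilon_p)^2 - 2\epsilon_p r^2\big)$, which equals $\tfrac12\#\EC_{2,\Delta,s}(\FF_{p^2}) = N = \det\Lattice$ by Equation~\eqref{eq:group-order}; hence $[\vv{e}_1, \tfrac12\vv{e}_2]$ is already a $\ZZ$-basis of $\Lattice$. Each basis named in the statement is obtained from $[\vv{e}_1, \tfrac12\vv{e}_2]$ by a unimodular change of basis (replace $\vv{e}_1$ by $\vv{e}_1 \pm \tfrac12\vv{e}_2$, then reorder and negate), so is itself a basis of $\Lattice$; all that remains is to check reducedness.

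For that I would verify Inequality~\eqref{eq:reduced} by direct estimation, using $|r| \le \sqrt{2p}$ (from the corollary to Theorem~\ref{th:main}) so that for large $p$ the $\Theta(p)$ terms dominate. One finds $\|\tfrac12\vv{e}_2\|_\infty = \tfrac12(p+\epsilon_p)$ (the first coordinate being $O(\sqrt p)$), and $\|\vv{e}_1 \pm \tfrac12\vv{e}_2\|_\infty = p + \epsilon_p \mp \epsilon_p r$ with the first coordinate dominating (since $\tfrac12(p+\epsilon_p) > 2|r|$). Taking the sign that makes this equal to $p+\epsilon_p-|r|$ --- the $+$ sign when $\epsilon_p r \ge 0$ and the $-$ sign when $\epsilon_p r < 0$ --- picks out the ordered basis $[\vv{b}_1,\vv{b}_2]$ of the statement, with $\|\vv{b}_1\|_\infty = \tfrac12(p+\epsilon_p)$ and $\|\vv{b}_2\|_\infty = p+\epsilon_p-|r|$, so that the bitlength $\lceil\log_2\|\vv{b}_2\|_\infty\rceil = \lceil\log_2(p+\epsilon_p-|r|)\rceil$ is as claimed. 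The two sum/difference vectors $\vv{b}_1 \pm \vv{b}_2$ come out, up to sign, as $\vv{e}_1$ (of norm exactly $p+\epsilon_p$) and $\vv{e}_1 \pm \vv{e}_2$; inspecting their two coordinates --- this is where the sign of $\epsilon_p r$ enters --- one checks that the four norms sit in the order $\|\vv{b}_1\|_\infty \le \|\vv{b}_2\|_\infty \le \|\vv{b}_1-\vv{b}_2\|_\infty \le \|\vv{b}_1+\vv{b}_2\|_\infty$ demanded by~\eqref{eq:reduced}, possibly after a sign flip of a basis vector (which is immaterial for Algorithm~\ref{alg:decomp}).

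The substantive step is the first one: pinning down $\tfrac12\vv{e}_2$, rather than $\tfrac12\vv{e}_1$ or $\tfrac12(\vv{e}_1+\vv{e}_2)$, as the half-lattice point lying in $\Lattice$, which rests entirely on the two congruences $r\lambda \equiv p+\epsilon_p$ and $\lambda^2 \equiv 2\epsilon_p$ modulo $N$. Everything after that is routine; the one delicate point is the last inequality of~\eqref{eq:reduced}, since the opposite choice of $\pm$ still produces a basis of $\Lattice$ with the same two lengths but interchanges the roles of the two vectors $\vv{b}_1 \pm \vv{b}_2$, so that the longer of them becomes $\vv{b}_1-\vv{b}_2$ and $\|\vv{b}_1-\vv{b}_2\|_\infty > \|\vv{b}_1+\vv{b}_2\|_\infty$, destroying reducedness.
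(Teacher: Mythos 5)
Your proof is correct, and it reaches the key intermediate claim \(\Lattice = \subgrp{\vv{e}_1,\tfrac12\vv{e}_2}\) by a genuinely different route from the paper. The paper argues by elimination: since \([\Lattice:\Lattice_0]=2\), exactly one of \(\tfrac12\vv{e}_1\), \(\tfrac12\vv{e}_2\), \(\tfrac12(\vv{e}_1+\vv{e}_2)\) lies in \(\Lattice\); it then reads \(2N = (p+\epsilon_p)^2 - 2\epsilon_p r^2\) modulo \(4\) to conclude that \(r\) is odd, so the first and third candidates are not even in \(\ZZ^2\), and \(\tfrac12\vv{e}_2\) wins by default. You instead verify membership positively, via the congruences \(r\lambda \equiv p+\epsilon_p\) and \(\lambda^2 \equiv 2\epsilon_p \pmod{N}\), and then close the argument with a determinant count against \(\det\Lattice = N\). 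Both are sound; the paper's parity argument is shorter and yields the (otherwise unused) fact that \(r\) is odd, while your version is a self-contained positive check that does not rely on knowing in advance that only one half-vector can lie in \(\Lattice\). Your closing caveat about signs is also well taken, and in fact applies to the paper's own proof, which asserts Inequality~\eqref{eq:reduced} without computation: in the sub-case \(\epsilon_p = -1\), \(r<0\) (which belongs to the branch \(\epsilon_p r \ge 0\)) one finds \(\|\vv{b}_1-\vv{b}_2\|_\infty = \|\vv{e}_1+\vv{e}_2\|_\infty = p-1+|r| > p-1 = \|\vv{e}_1\|_\infty = \|\vv{b}_1+\vv{b}_2\|_\infty\), so the ordered pair exactly as printed violates the last inequality of~\eqref{eq:reduced} and needs its second vector negated. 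As you say, this changes nothing about minimality, the claimed norms and bitlength, or the use of the basis in Algorithm~\ref{alg:decomp}.
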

\begin{proof}
    Let
    \( \Lattice_0 := \subgrp{\vv{e}_1,\vv{e}_2} \)
    with
    \( \vv{e}_1 := ( p + \epsilon_p, -r ) \)
    and
    \( \vv{e}_2 := ( -\epsilon_p2r, p + \epsilon_p ) \)
    as in Lemma~\ref{lemma:magic-basis};
    then
    \([\Lattice:\Lattice_0] = 2\),
    so exactly one of \(\vv{e}_1/2\), \(\vv{e}_2/2\), 
    or \((\vv{e}_1+\vv{e}_2)/2\) is in~\(\Lattice\).
    Equation~\eqref{eq:group-order} 
    shows that \(2N = (p + \epsilon_p)^2 - 2\epsilon_pr^2\)
    with \(|r| < \sqrt{2p}\);
    since \(N\) is odd, 
    \(r\) must also be odd,
    so \(\vv{e}_2/2\) is in \(\ZZ^2\)
    but \(\vv{e}_1/2\) and 
    \((\vv{e}_1+\vv{e}_2)/2\) are not---and
    hence they cannot be in \(\Lattice\), either.
    We conclude that \(\Lattice =
    \subgrp{\vv{e}_1,\vv{e}_2/2}\).
    Inequality~\eqref{eq:reduced}
    is satisfied
    by
    \(\vv{b}_1 := -\vv{e}_2/2\)
    and \(\vv{b}_2 := \vv{e}_1\pm\vv{e}_2/2\)
    (with the sign chosen according to whether \(\epsilon_pr\) is positive or
    negative),
    so \([\vv{b}_1,\vv{b}_2]\)
    is a reduced basis for \(\Lattice\).
    The longer of the vectors is \(\vv{b}_2\),
    and \(\|\vv{b}_2\|_\infty = p + \epsilon_p-|r|\).
    \qed
\end{proof}

\begin{example}
    \label{ex:d=2}
    Let \(p = 2^{127}-1\) and \(\Delta=-1\).
    Taking \(s = 28106\) 
    in the family \(\EC_{2,-1,s}/\FF_{p}(\sqrt{\Delta})\)
    yields a twist-secure curve at the 128-bit security level:
    we have \(\epsilon_p = 1\) and 
    \(
        \trace{\EC_{2,-1,28106}}
        =
        -272082382382015736940757543628153813996
    \),
    so 
    \begin{align*}
        \#\EC_{2,-1,28106}(\FF_{p^2})
        & 
        = p^2 + 1 - \trace{\EC_{2,-1,28106}}
        = 2\cdot N
        \quad
        \text{and}
        \\
        \#\EC'_{2,-1,28106}(\FF_{p^2})
        & 
        = p^2 + 1 + \trace{\EC_{2,-1,28106}}
        = 2\cdot N'
    \end{align*}
    where
    \(N\) and \(N'\) are 253-bit primes.\footnote{
        We computed the traces for all of our examples
        using a new specialized variant 
        of the SEA algorithm~\cite{Schoof}
        under development with Fran\c{c}ois Morain and Charlotte Scribot,
        implemented in NTL~\cite{NTL}.
    }
    Algorithm~\ref{alg:decomp} and Lemma~\ref{lemma:basis-d2}
    transform 253-bit scalar multiplications
    in \(\EC_{2,-1,28106}(\FF_{p^2})[N]\)
    into 128-bit multiexponentations.
    This value of \(s\) 
    is the ``smallest'' (counting upwards from 1)
    yielding a curve-twist pair such that both curve orders
    are twice a prime.
    The curve coefficients, being linear in~\(s\),
    are relatively small;
    but 
    while small coefficients are important in optimized
    implementations, here 
    this is no more than a happy coincidence---we did not explicitly 
    search for an example 
    with convenient coefficients.  
\end{example}

\paragraph{Montgomery models.}
The curve \(\EC_{2,\Delta,s}\)
has a Montgomery model over \(\FF_{p^2}\)
if and only if \(2C_{2,\Delta}(s)\)
is a square in \(\FF_{p^2}\)
by~\cite[Proposition~1]{Okeya--Kurumatani--Sakurai}---or equivalently, if \(1 + s\sqrt{\Delta}\) is a square in \(\FF_{p^2}\)
(since \(2\) is always a square in \(\FF_{p^2}\)).
Setting
\[
    \BM := (2C_{2,\Delta}(s))^{1/2}
    \qquad
    \text{and}
    \qquad
    \AM := 12/\BM
    \ ,
\]
the birational map
\( 
    (x,y) 
    \mapsto 
    (X/Z,Y/Z) 
    =
    \big(
        (x-4)/\BM ,
        y/\BM^2
    \big) 
\)
takes us from \(\EC_{2,\Delta,s}\) to the projective Montgomery model
\[
    \EC_{2,\Delta,s}^\mathrm{M}
    : 
    \BM Y^2Z 
    = X\big(X^2 + \AM XZ + Z^2\big) 
\]
(we may replace
the term \(B_{2,\Delta}^\mathrm{M}(s)Y^2Z\) in the defining equation
with a conveniently small multiple of \(Y^2Z\),
if desired, by scaling the \(Y\) coordinate).

Montgomery models offer a particularly efficient arithmetic
using only the \(X\) and \(Z\) coordinates~\cite{Montgomery}.
The induced endomorphism on the \((X:Z)\)-line 
is 
\[
    \psi_{2,\Delta,s}^\mathrm{M}
    :
    (X:Z)
    \longmapsto
    \big(
        X^{2p} + \AM^p X^pZ^p + Z^{2p} 
        :
        -2\AM^{p-1}X^pZ^p
    \big)
    \ ;
\]
an implementation of fast scalar multiplication using
\(\psi_{2,\Delta,s}^\mathrm{M}\) is detailed
in~\cite{Costello--Hisil--Smith}.

\paragraph{Twisted Edwards models.}
Every Montgomery model corresponds to a twisted Edwards model,
and vice versa (cf.~\cite{BBJLP} and \cite{Hisil--Wong--Carter--Dawson}).
Indeed, 
\(\EC_{2,\Delta,s}\)
is isomorphic to the twisted Edwards model
\[
    \EC_{2,\Delta,s}^{\textrm{TE}}:
    \big(12 + 2\BM\big) x_1^2 + x_2^2 
    = 
    1 + \big(12 - 2\BM\big) x_1^2x_2^2
\]
via
\( 
    (x,y) 
    \mapsto
    (x_1,x_2)
    =
    \left(
        (x-4)/y
        ,
        (
            x - 4 - \BM 
            )/(
            x - 4 + \BM
        )
    \right)
\).
Composing with \(\psi_{2,\Delta,s}\)
yields an endomorphism 
\(
    \psi_{2,\Delta,s}^\mathrm{TE}
\)
of \(\EC_{2,\Delta,s}^\mathrm{TE}\).

\paragraph{Optimal decompositions for cofactor 4.}
Every curve with a twisted Edwards or Montgomery model 
has order divisible by 4;
indeed,
\(C_{2,\Delta}(s)\) is a square in \(\FF_{p^2}\)
(so \(\EC_{2,\Delta,s}^\mathrm{M}\)
and \(\EC_{2,\Delta,s}^\mathrm{TE}\)
are defined over \(\FF_{p^2}\))
if and only if \(\EC_{2,\Delta,s}\) has full rational
\(2\)-torsion.
The optimal situation for discrete log-based cryptography on
these curves is therefore when
\(\EC_{2,\Delta,s}(\FF_{p^2}) \cong (\ZZ/2\ZZ)^2\times\G\)
with \(\#\G\) prime.
Lemma~\ref{lemma:basis-montgomery} gives a reduced
basis for the GLV lattice in this case,
which we can use in Algorithm~\ref{alg:decomp}
to decompose scalar multiplications in \(\G\)
as \([m]P = [a]P \oplus [b]\psi_{2,\Delta,s}(P)\)
where \(a\) and \(b\) have at most \(\lceil{\log_2p}\rceil - 1\) bits.

\begin{lemma}
    \label{lemma:basis-montgomery}
    Suppose 
    \(\EC_{2,\Delta,s}(\FF_{p^2}) \cong
    (\ZZ/2\ZZ)^2\times\ZZ/N\ZZ\)
    with \(N\) odd,
    and let
    \(\Lattice=\subgrp{(N,0),(-\lambda_{2,\Delta,s},1)}\).
    Let \(\vv{e}_1\) and \(\vv{e}_2\) be defined as in
    Lemma~\ref{lemma:magic-basis}. For large \(p\),
    \begin{itemize}
        \item
            if \(\epsilon_p = 1\)
            and \(r \ge 0\),
            then \([(\vv{e}_1 + \vv{e}_2)/2,\vv{e}_2/2]\)
            is a reduced basis for \(\Lattice\);
        \item
            if \(\epsilon_p = 1\)
            and \(r < 0\),
            then \([(\vv{e}_1 - \vv{e}_2)/2,-\vv{e}_2/2]\)
            is a reduced basis for \(\Lattice\);
        \item
            if \(\epsilon_p = -1\) and \(r \ge 0\),
            then \([\vv{e}_1/2,\vv{e}_2/2]\)
            is a reduced basis for \(\Lattice\);
        \item
            otherwise,
            if \(\epsilon_p = -1\) and \(r < 0\),
            then \([\vv{e}_1/2,-\vv{e}_2/2]\) 
            is a reduced basis for \(\Lattice\).
    \end{itemize}
    In each case,
    the bitlength of the reduced basis is 
    \(\lceil{\log_2(p + \epsilon_p)}\rceil - 1\).
\end{lemma}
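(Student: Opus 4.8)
The plan is to follow the proof of Lemma~\ref{lemma:basis-d2} closely, the one structural change being that the index is now $[\Lattice:\Lattice_0] = \#\EC_{2,\Delta,s}(\FF_{p^2})/N = 4$ instead of $2$. First I would extract the parity information from Eq.~\eqref{eq:group-order}: with $d = 2$ it reads $4N = (p+\epsilon_p)^2 - 2\epsilon_pr^2$, and since $p$ is odd we may write $p + \epsilon_p = 2u$. Substituting gives $2N = 2u^2 - \epsilon_pr^2$, so $r$ must be even, say $r = 2v$; substituting once more gives $N = u^2 - 2\epsilon_pv^2$, and since $N$ is odd, $u$ must be odd (so in fact $p + \epsilon_p \equiv 2 \pmod 4$, which is why the hypothesis can only be met when $p \equiv 3, 5 \pmod 8$). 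This also rules out $r = 0$, so $\EC_{2,\Delta,s}$ is ordinary and Corollary~\ref{cor:eigenvalue} applies.

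Next I would determine $\Lattice$ precisely. The congruences $r\lambda_{2,\Delta,s} \equiv p + \epsilon_p$ and $2\epsilon_pr \equiv (p+\epsilon_p)\lambda_{2,\Delta,s} \pmod N$, from Corollary~\ref{cor:eigenvalue} and the proof of Lemma~\ref{lemma:magic-basis}, may be divided by $2$ (legitimate as $N$ is odd) to give $v\lambda_{2,\Delta,s} \equiv u$ and $2\epsilon_pv \equiv u\lambda_{2,\Delta,s} \pmod N$; these say exactly that the integral vectors $\vv{e}_1/2 = (u,-v)$ and $\vv{e}_2/2 = (-2\epsilon_pv,u)$ lie in $\Lattice$. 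They generate a sublattice of $\ZZ^2$ containing $\Lattice_0 = \subgrp{\vv{e}_1,\vv{e}_2}$ with index $4 = [\Lattice:\Lattice_0]$, forcing $\Lattice = \subgrp{\vv{e}_1/2,\vv{e}_2/2}$.

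Finally I would verify the four candidate bases. Each arises from $\{\vv{e}_1/2,\pm\vv{e}_2/2\}$ by an elementary unimodular transformation, hence is a basis of $\Lattice$; what remains is to check Inequality~\eqref{eq:reduced}. Here I would invoke the Hasse bound $|r| \le \sqrt{2p}$, so that $2|v|$ is negligible beside $u = (p+\epsilon_p)/2$ once $p$ is large, and then evaluate $\|\vv{b}_1\|_\infty$, $\|\vv{b}_2\|_\infty$, $\|\vv{b}_1-\vv{b}_2\|_\infty$, and $\|\vv{b}_1+\vv{b}_2\|_\infty$ case by case: they come out as $u-|v| \le u \le u \le 2u-|v|$ when $\epsilon_p = 1$ and as $u \le u \le u+|v| \le u+2|v|$ when $\epsilon_p = -1$, confirming the chain in each case; the longer basis vector always has $\infty$-norm $u = (p+\epsilon_p)/2$, so the bitlength is $\lceil\log_2((p+\epsilon_p)/2)\rceil = \lceil\log_2(p+\epsilon_p)\rceil - 1$. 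The main obstacle will be clerical rather than conceptual: keeping the signs of $\epsilon_p$ and $r$ consistent while matching each candidate basis to the correct form of the inequality chain. Once the parity facts ($r$ even, $u$ odd) are in hand, everything else is arithmetic entirely parallel to Lemma~\ref{lemma:basis-d2}.
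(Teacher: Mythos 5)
Your proof is correct and follows the paper's argument essentially step for step: deduce from Eq.~\eqref{eq:group-order} modulo $4$ that $r$ is even, halve the defining congruences to see that $\vv{e}_1/2$ and $\vv{e}_2/2$ lie in $\Lattice$ (and, by the index count, generate it), then check Inequality~\eqref{eq:reduced} for each of the four proposed bases and read off the bitlength $\lceil\log_2((p+\epsilon_p)/2)\rceil = \lceil\log_2(p+\epsilon_p)\rceil - 1$. One small imprecision worth flagging: the parity facts alone do not exclude $r=0$ --- if $r=0$ then $N = u^2$ with $u$ odd, which is still odd --- what excludes it is the cyclicity of the odd part of $\EC_{2,\Delta,s}(\FF_{p^2})$, since a supersingular curve would have group $(\ZZ/(p+\epsilon_p)\ZZ)^2$; the paper glosses over this point as well, so it is not a substantive gap.
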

\begin{proof}
    The sublattice
    \(
        \Lattice_0 = \subgrp{
            \vv{e}_1,
            \vv{e}_2
        }
    \)
    has index \(4\) in \(\Lattice\).
    Equation~\eqref{eq:group-order} 
    implies
    \((p+\epsilon_p)^2 \equiv 2\epsilon_pr^2 \pmod{4}\),
    and since \(p + \epsilon_p\) is even,
    \(r\) must be even as well;
    so we can replace the relation 
    \(\epsilon_pr\lambda_{2,\Delta,s} \equiv p + \epsilon_p\)
    with
    \(
        \epsilon_p(r/2)\lambda_{2,\Delta,s} \equiv (p + \epsilon_p)/2
        \pmod{N}
    \).
    Hence, both \(\vv{e}_1/2\)
    and \(\vv{e}_2/2\)
    are in \(\Lattice\);
    so they must form a basis for \(\Lattice\).
    The listed bases 
    are combinations of \(\vv{e}_1/2\) and \(\vv{e}_2/2\)
    satisfying Ineq.~\eqref{eq:reduced},
    and are therefore reduced.
    The longest vector in each basis has length
    \((p+\epsilon_p)/2\).
    \qed
\end{proof}

\paragraph{Doche--Icart--Kohel models.}
Doubling-oriented Doche--Icart--Kohel models 
are defined by equations of the form
\(
    y^2 = x(x^2 + D x + 16D) 
\).
These curves have a rational \(2\)-isogeny \(\phi\)
with kernel \(\subgrp{(0,0)}\),
and both
\(\phi\) and its dual \(\dualof{\phi}\) are in a special form
that allows marginally faster doubling
using the factorization \([2] = \dualof{\phi}{\phi}\) 
(see~\cite[\S3.1]{DIK} for details).
Our curves \(\EC_{2,\Delta,s}\) come equipped with a rational \(2\)-isogeny,
so it is natural to try putting them in Doche--Icart--Kohel
form.
The same \(2\)-isogeny plays two r\^oles in this
situation: as a factor of our endomorphism for scalar decomposition,
and as a factor of the doubling map for Doche--Icart--Kohel arithmetic.
We emphasize that these two applications are distinct and complementary, 
and their benefits are cumulative.
We have an isomorphism
from \(\EC_{2,\Delta,s}\) to
the Doche--Icart--Kohel model
\[
    \EC_{2,\Delta,s}^\mathrm{DIK} 
    : 
    v^2 = u\big(u^2 + \tfrac{1152}{C_{2,\Delta}(s)}u 
    + 16\cdot\tfrac{1152}{C_{2,\Delta}(s)}\big)
    \ ,
\]
defined by
\(
    (x,y) 
    \mapsto 
    (u,v) 
    = 
    (
        \alpha(x-4),
        \alpha^{3/2}y
    )
\)
where
\(\alpha = 96/C_{2,\Delta}(s)\);
if \(C_{2,\Delta}(s)\) is not a square in \(\FF_{p^2}\),
then \(\EC_{2,\Delta,s}^\mathrm{DIK}\) 
is 
\(\FF_{p^2}\)-isomorphic to \(\EC_{2,\Delta,s}'\).

\section{
    Endomorphisms from Quadratic \(\QQ\)-curves of Degree 3
}
\label{sec:degree-3}

Let \(\Delta\) be a squarefree integer.
Hasegawa defines a one-parameter family 
\[
    \label{eq:Hasegawa-3}
    \ECK_{3,\Delta,s}: 
    y^2 =
    x^3 - 3(2C_{3,\Delta}(s) + 1)x
    + \big(C_{3,\Delta}(s)^2 + 10C_{3,\Delta}(s) - 2\big)
\]
of \(\QQ\)-curves
of degree 3 over \(\QQ(\sqrt{\Delta})\) in~\cite[Theorem 2.2]{Hasegawa},
where 
\[
    C_{3,\Delta}(s) := 2(1 + s\sqrt{\Delta})
\]
and \(s\) is a free parameter taking values in \(\QQ\).
Observe that
\(
    \conj{\ECK_{3,\Delta,s}} = \ECK_{3,\Delta,-s}
\).

To realize the degree-3 \(\QQ\)-curve structure,
note that
\(x - 3\)
defines an order-3 subgroup
\(\mathcal{S} = \{0, (3,\pm\conj{C_{3,\Delta}(s)})\}\)
of 
\(\ECK_{3,\Delta,s}(\QQ(\sqrt{\Delta}))\).
Computing the normalized quotient isogeny
\(\ECK_{3,\Delta,s} \to \ECK_{3,\Delta,s}/\mathcal{S}\)
using Eqs.~\eqref{eq:Velu-odd-A}, \eqref{eq:Velu-odd-B},
and~\eqref{eq:Velu-odd-map},
we observe that 
\(
    \ECK_{3,\Delta,s}/\mathcal{S}
    = 
    \twist{\sqrt{-3}}{(\conj{\ECK_{3,\Delta,s}})}
\);
so composing the quotient with \(\twistiso{1/\sqrt{-3}}\)
yields an explicit 3-isogeny 
\(\phiK_{3,\Delta,s}: \ECK_{3,\Delta,s} \to \conj{\ECK_{3,\Delta,s}}\)
defined by the rational map
\[
    \phiK_{3,\Delta,s}
    :
    (x,y)
    \longmapsto
    \Big(
        (\phiK_{3,\Delta,s})_x(x),
        \frac{y}{\sqrt{-3}}\frac{d(\phiK_{3,\Delta,s})_x}{dx}(x)
    \Big)
\]
where
\[
    (\phiK_{3,\Delta,s})_x(x)
    =
    -\frac{1}{3}\Big(
        x
        + \frac{12\cdot\conj{C_{3,\Delta}}(s)}{x-3}
        + \frac{4\cdot\conj{C_{3,\Delta}(s)}^2}{(x-3)^2}
    \Big)
    \ .
\]
Conjugating and composing again, we see that
\begin{equation}
    \label{eq:epsilon-3}
    \conj{\phiK_{3,\Delta,s}}\circ\phiK_{3,\Delta,s}
    =
    \epsilon[3]_{\ECK_{3,\Delta,s}}
    \quad
    \text{where}
    \quad 
    \epsilon
    =
    \begin{cases}
        -1 & \text{if }\ \conj{\sqrt{-3}} = \sqrt{-3} 
        \\
        +1 & \text{if }\ \conj{\sqrt{-3}} = -\sqrt{-3}
    \end{cases}
\end{equation}
(and similarly, 
\( 
    \phiK_{3,\Delta,s}\circ\conj{\phiK_{3,\Delta,s}}
    =
    \epsilon[3]_{\conj{\ECK_{3,\Delta,s}}}
\)).


This family has discriminant 
\( 2^4\cdot3^3\cdot C_{3,\Delta}(s)\cdot \conj{C_{3,\Delta}(s)}^3 \)
and \(j\)-invariant
\[
    j(\ECK_{3,\Delta,s})
    =
    \frac{ 
        2^8\cdot3^3\cdot(2C_{3,\Delta}(s) + 1)^3
    }{ 
        C_{3,\Delta}(s)\cdot \conj{C_{3,\Delta}(s)}^3
    }
\]
(letting \(s \to \infty\), 
we see that \(j(\ECK_{3,\Delta,\infty}) = 0\)).
Hence,
\(\ECK_{3,\Delta,s}\) 
reduces modulo any inert \(p > 3\)
to give a family of elliptic curves over \(\FF_{p^2}\),
and then every value of~\(s\) in~\(\FF_p\)
yields an elliptic curve over \(\FF_{p^2}\).
A calculation similar to Proposition~\ref{prop:j-d2}
shows that we get at least \(p-8\) non-isomorphic curves
in this way.

\begin{theorem}
    \label{th:d3}
    Let \(p > 3\) be prime,
    fix a nonsquare 
    \(\Delta\)
    modulo \(p\),
    so 
    \(\FF_{p^2} = \FF_p(\sqrt{\Delta})\),
    and 
    let \(\EC_{3,\Delta,s}\) and \(\phi_{3,\Delta,s}\)
    be the reductions modulo \(p\) of 
    \(\ECK_{3,\Delta,s}\) 
    and \(\phiK_{3,\Delta,s}\). 

    For each \(s\) in \(\FF_p\),
    the curve \(\EC_{3,\Delta,s}/\FF_{p^2}\)
    has an efficient \(\FF_{p^2}\)-endomorphism 
    \[
        \psi_{3,\Delta,s} := \pi_p\circ\phi_{3,\Delta,s}
    \]
    of degree \(3p\), such that
    \(
        \psi_{3,\Delta,s}^2 
        = 
        [\epsilon_p 3]\pi_{\EC_{3,\Delta,s}}
    \)
    and
    \(
        (\psi_{3,\Delta,s}')^2
        = 
        [-\epsilon_p 3]\pi_{\EC_{3,\Delta,s}'}
    \),
    where
    \[
        \epsilon_p 
        := {-\Legendre{-3}{p}}
        =
        \begin{cases}
            +1 & \text{if }\ p \equiv 2 \pmod{3} \ , \\
            -1 & \text{if }\ p \equiv 1 \pmod{3} \ .
        \end{cases}
    \]
    There exists an integer \(r\) satisfying
    \(
        3r^2 = 2p + \epsilon_p\trace{\EC_{3,\Delta,s}}
    \)
    such that
    \[
        [r]\psi_{3,\Delta,s}
        =
        [p] + \epsilon_p\pi_{\EC_{3,\Delta,s}}
        \qquad
        \text{and}
        \qquad
        [r]\psi_{3,\Delta,s}'
        =
        [p] - \epsilon_p\pi_{\EC_{3,\Delta,s}}
        \ ;
    \]
    the characteristic polynomial
    of
    \(\psi_{3,\Delta,s}\) and \(\psi_{3,\Delta,s}'\)
    is
    \(
        P_{3,\Delta,s}(T)
        =
        T^2 - 3rT + 3p
    \).

    In particular,
    if \(\EC_{3,\Delta,s}\) is ordinary 
    and \(\G \subseteq \EC_{3,\Delta,s}(\FF_{p^2})\)
    is a cyclic subgroup of order \(N\)
    such that \(\psi_{3,\Delta,s}(\G) \subseteq \G\),
    then the eigenvalue of \(\psi_{3,\Delta,s}\) on \(\G\) 
    is
    \[
        \lambda_{3,\Delta,s} 
        \equiv
        (p + \epsilon_p) /r
        \equiv
        \pm\sqrt{\epsilon_p 3} \pmod{N}
        \ .
    \]
\end{theorem}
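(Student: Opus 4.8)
The plan is to reduce everything to the general machinery of \S\ref{sec:construction}. The family $\ECK_{3,\Delta,s}$ is a quadratic $\QQ$-curve of prime degree $d = 3$; the isogeny $\phiK_{3,\Delta,s}$ is V\'elu's $3$-isogeny to the quotient by $\mathcal{S}$ post-composed with $\twistiso{1/\sqrt{-3}}$, so it is defined over $\QQ(\sqrt{\Delta},\sqrt{-3})$; and $p > 3$ is prime to $d$ and inert in $\QQ(\sqrt{\Delta})$ by the choice of $\Delta$ as a nonsquare mod $p$. Hence all the hypotheses of Theorem~\ref{th:main} and Corollary~\ref{cor:eigenvalue} are satisfied with $d = 3$. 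Applying those results verbatim to $\psi_{3,\Delta,s} = \pi_p\circ\phi_{3,\Delta,s}$ immediately yields $\psi_{3,\Delta,s}^2 = [\epsilon_p 3]\pi$, the analogue for the twist, the existence of an integer $r$ with $3r^2 = 2p + \epsilon_p\trace{\EC_{3,\Delta,s}}$ and $[r]\psi_{3,\Delta,s} = [p] + \epsilon_p\pi$ (and similarly for $\psi_{3,\Delta,s}'$), the characteristic polynomial $T^2 - 3rT + 3p$, and the eigenvalue congruence $\lambda_{3,\Delta,s} \equiv (p + \epsilon_p)/r \pmod{N}$, where $\epsilon_p\in\{\pm1\}$ is the value $\epsilon$ of Eq.~\eqref{eq:epsilon-3} for the conjugation $\sigma = (p)$. (One technical point: Theorem~\ref{th:main} was phrased for $\Delta$ prime to $d$, but the reduction mod $p$ depends only on the isomorphism class $\FF_{p^2}$ of the quadratic extension, so we may freely replace $\Delta$ by a nonsquare prime to $3$ without changing $\EC_{3,\Delta,s}$ or $\psi_{3,\Delta,s}$ up to $\FF_{p^2}$-isomorphism — or simply rerun the argument of \S\ref{sec:construction} directly over $\FF_{p^2}$.)

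The only genuinely new computation is to pin down $\epsilon_p$ explicitly. By Eq.~\eqref{eq:epsilon-3}, $\epsilon_p = -1$ precisely when the $p$-power map fixes $\sqrt{-3}$ inside $\FF_{p^2}$, that is, when $\sqrt{-3}\in\FF_p$, that is, when $-3$ is a square mod $p$; so $\epsilon_p = -\Legendre{-3}{p}$. It then remains to invoke the elementary supplement to quadratic reciprocity, $\Legendre{-3}{p} = \Legendre{p}{3}$, which equals $1$ if and only if $p\equiv 1\pmod 3$; this gives the stated case split $\epsilon_p = +1$ for $p\equiv 2\pmod 3$ and $\epsilon_p = -1$ for $p\equiv 1\pmod 3$.

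Finally, the second description $\lambda_{3,\Delta,s}\equiv\pm\sqrt{\epsilon_p 3}\pmod N$ follows because $\pi = \pi_{\EC_{3,\Delta,s}}$ is the $p^2$-power Frobenius and so acts as the identity on $\EC_{3,\Delta,s}(\FF_{p^2})\supseteq\G$; hence $\psi_{3,\Delta,s}^2 = [\epsilon_p 3]\pi$ restricts to $[\epsilon_p 3]$ on $\G$, forcing $\lambda_{3,\Delta,s}^2\equiv\epsilon_p 3\pmod N$ (consistency with the first formula is Eq.~\eqref{eq:group-order}, since $N\mid\#\EC_{3,\Delta,s}(\FF_{p^2}) = (p+\epsilon_p)^2 - \epsilon_p 3r^2$ gives $(p+\epsilon_p)^2\equiv\epsilon_p 3r^2\pmod N$). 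I expect no substantive obstacle: the argument is a straight specialization of Theorem~\ref{th:main}, and the only place needing care is the bookkeeping around $\epsilon_p$ — checking that the sign coming from the arbitrary choice of $\sqrt{-3}$ in the definition of $\phiK_{3,\Delta,s}$ matches the Legendre symbol $\Legendre{-3}{p}$, exactly as in the degree-$2$ case via Eq.~\eqref{eq:epsilon-2}.
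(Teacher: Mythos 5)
Your proposal is correct and takes essentially the same route as the paper, which proves Theorem~\ref{th:d3} simply by specializing Theorem~\ref{th:main} and Corollary~\ref{cor:eigenvalue} to $d=3$ using Eq.~\eqref{eq:epsilon-3}. Your extra bookkeeping --- deriving $\epsilon_p = -\Legendre{-3}{p}$ from the action of the $(p)$-conjugation on $\sqrt{-3}$ via Eq.~\eqref{eq:FFp2-conj}, the quadratic-reciprocity case split, the $\pm\sqrt{\epsilon_p 3}$ form of the eigenvalue, and the remark about replacing $\Delta$ by a representative prime to $3$ --- is exactly the detail the paper's terse proof leaves implicit, and it is all correct.
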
 
\begin{proof}
    Follows from 
    Theorem~\ref{th:main}
    and Corollary~\ref{cor:eigenvalue}
    using Eq.~\eqref{eq:epsilon-3}.
    \qed
\end{proof}

\paragraph{Optimal decompositions.}
The kernel of \(\psi_{3,\Delta,s}\)
is generated by the rational points
\((3,\pm\conj{C_{3,\Delta}(s)})\),
so \(\#\EC_{3,\Delta,s}(\FF_{p^2})\) is always divisible by \(3\).
However, 
the nontrivial points in the kernel of the twisted endomorphism \(\psi_{3,\Delta,s}'\) 
are \emph{not} defined over \(\FF_{p^2}\) 
(they are conjugates), 
so it is possible for \(\EC_{3,\Delta,s}'(\FF_{p^2})\) 
to have prime order.

From the point of view of the Pohlig--Hellman--Silver reduction,
the ``most secure'' curves in \(\EC_{3,\Delta,s}/\FF_{p^2}\) 
have
\(\EC_{3,\Delta,s}(\FF_{p^2}) \cong \ZZ/3\ZZ\times\G\),
with \(\G\) of prime order.
Lemma~\ref{lemma:basis-d3} gives an optimal
basis for the GLV lattice \(\Lattice\) in this case
(for a prime-order twist,
the basis of Lemma~\ref{lemma:magic-basis} is already optimal).

\begin{lemma}
    \label{lemma:basis-d3}
    Suppose 
    \(\EC_{3,\Delta,s}(\FF_{p^2}) \cong \ZZ/3\ZZ\times\ZZ/N\ZZ\)
    with \(N\) prime to \(3\), 
    and let \(\Lattice = \subgrp{(N,0),(-\lambda_{3,\Delta,s},1)}\).
    Let \(\vv{e}_1\) and \(\vv{e}_2\) 
    be defined as in Lemma~\ref{lemma:magic-basis}.
    For large \(p\),
    \begin{itemize}
        \item
            if \(\epsilon_pr \ge 0\), 
            then
            \([\vv{e}_2/3,\vv{e}_1+2\vv{e}_2/3]\)
            is a reduced basis of \(\Lattice\);
        \item
            if \(\epsilon_pr < 0\),
            then
            \([\vv{e}_2/3,\vv{e}_1-2\vv{e}_2/3]\)
            is a reduced basis of \(\Lattice\).
    \end{itemize}
    In either case,
    the bitlength of the reduced basis 
    is \(\lceil{\log_2(p+\epsilon_p-2|r|)}\rceil\).
\end{lemma}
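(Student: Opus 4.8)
The plan is to mimic the proof of Lemma~\ref{lemma:basis-d2}, starting from the sublattice \(\Lattice_0 = \subgrp{\vv{e}_1,\vv{e}_2}\) of Lemma~\ref{lemma:magic-basis}, which here has index \([\Lattice:\Lattice_0] = \#\EC_{3,\Delta,s}(\FF_{p^2})/N = 3\). The index-\(3\) overlattices of \(\Lattice_0\) are obtained by adjoining one of \(\vv{e}_1/3\), \(\vv{e}_2/3\), \((\vv{e}_1+\vv{e}_2)/3\), or \((\vv{e}_1-\vv{e}_2)/3\), and I would first note that only the second is even an integer vector: by the formula for \(\epsilon_p\) in Theorem~\ref{th:d3} we have \(3 \mid p+\epsilon_p\), so writing \(p+\epsilon_p = 3k\) and reducing Eq.~\eqref{eq:group-order} (which reads \((p+\epsilon_p)^2 - 3\epsilon_p r^2 = 3N\)) gives \(3k^2 = N + \epsilon_p r^2\); since \(3 \nmid N\) this forces \(3 \nmid r\), so the second coordinates \(-r/3\) and \((\pm(p+\epsilon_p)-r)/3\) are not integers. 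Hence \(\Lattice = \subgrp{\vv{e}_1,\vv{e}_2/3}\) as soon as \(\vv{e}_2/3 \in \Lattice\).

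To confirm \(\vv{e}_2/3 = (-\epsilon_p r,\ (p+\epsilon_p)/3) \in \Lattice\), I would multiply its defining congruence \(-\epsilon_p r + \tfrac{p+\epsilon_p}{3}\lambda_{3,\Delta,s} \equiv 0 \pmod N\) by \(3\)---legitimate since \(3\nmid N\)---to obtain \((p+\epsilon_p)\lambda_{3,\Delta,s} \equiv 3\epsilon_p r \pmod N\), which is exactly the relation \(r\epsilon_p d \equiv (p+\epsilon_p)\lambda_\psi \pmod N\) established in the proof of Lemma~\ref{lemma:magic-basis}, specialised to \(d = 3\). Since \([\Lattice:\Lattice_0] = 3\) is prime and evidently \(\vv{e}_2/3 \notin \Lattice_0\), this gives \(\Lattice = \subgrp{\vv{e}_1,\vv{e}_2/3}\); the listed pair \([\vv{e}_2/3,\ \vv{e}_1 \pm 2\vv{e}_2/3]\) differs from \([\vv{e}_2/3,\vv{e}_1]\) by adding \(\pm 2\) times the first vector to the second, hence is also a \(\ZZ\)-basis of \(\Lattice\).

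It remains to verify Inequality~\eqref{eq:reduced} for \([\vv{b}_1,\vv{b}_2] = [\vv{e}_2/3,\ \vv{e}_1 \pm 2\vv{e}_2/3]\), choosing the sign so that \(\epsilon_p r = |r| \ge 0\) (the case \(\epsilon_p r < 0\) is the mirror image, with the opposite sign on \(2\vv{e}_2/3\)). Substituting \(\vv{e}_1 = (p+\epsilon_p,-r)\) and \(\vv{e}_2/3 = (-\epsilon_p r,\ (p+\epsilon_p)/3)\), and recalling that \(r\) is in \(\Oh(\sqrt p)\) (from Theorem~\ref{th:main} and the Hasse bound), one checks that for large \(p\) the second coordinate dominates \(\vv{b}_1\), giving \(\|\vv{b}_1\|_\infty = (p+\epsilon_p)/3\), while the first coordinate dominates \(\vv{b}_2\) and \(\vv{b}_1 - \vv{b}_2 = -(\vv{e}_1 + \vv{e}_2/3)\), giving \(\|\vv{b}_2\|_\infty = p+\epsilon_p - 2|r|\) and \(\|\vv{b}_1-\vv{b}_2\|_\infty = p+\epsilon_p - |r|\); and since \(\vv{b}_1 + \vv{b}_2 = \vv{e}_1 + \vv{e}_2\) one gets \(\|\vv{b}_1+\vv{b}_2\|_\infty \ge p+\epsilon_p-|r|\) as well, so the chain \((p+\epsilon_p)/3 \le p+\epsilon_p-2|r| \le p+\epsilon_p-|r| \le \|\vv{b}_1+\vv{b}_2\|_\infty\) holds and the basis is reduced, with longest vector \(\vv{b}_2\) of infinity norm \(p+\epsilon_p-2|r|\). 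I do not expect any genuine difficulty beyond carefully tracking the signs of coordinates across the four combinations of the signs of \(\epsilon_p\) and \(r\); the ``for large \(p\)'' hypothesis collapses each norm comparison to its leading term, so no delicate estimates are needed.
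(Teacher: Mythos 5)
Your proposal is correct and takes essentially the same route as the paper: both identify the four index-3 overlattices of \(\Lattice_0\), use \(3 \mid p+\epsilon_p\) and \(3 \nmid r\) to rule out all but \(\subgrp{\vv{e}_1,\vv{e}_2/3}\) by integrality, and then check Inequality~\eqref{eq:reduced} for the displayed vectors. The only difference is that your second paragraph separately re-verifies \(\vv{e}_2/3 \in \Lattice\) via the eigenvalue congruence, which is redundant once you have observed that \(\Lattice\) is an index-3 overlattice of \(\Lattice_0\) inside \(\ZZ^2\) and only one of the four candidates lies in \(\ZZ^2\) at all; the paper dispenses with that extra check.
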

\begin{proof}
    The proof is essentially the same as for Lemma~\ref{lemma:basis-d2},
    with \(3\) in place of~\(2\).
    The sublattice \(\subgrp{\vv{e}_1,\vv{e}_2}\)
    has index 3 in \(\Lattice\),
    so exactly one of \(\frac{1}{3}\vv{e}_1\),
    \(\frac{1}{3}\vv{e}_2\),
    \(\frac{1}{3}(\vv{e}_1 + \vv{e}_2)\),
    and
    \(\frac{1}{3}(\vv{e}_1 - \vv{e}_2)\)
    is in \(\Lattice\).
    Equation~\eqref{eq:group-order}
    gives
    \( 3N = (p + \epsilon_p)^2 - 3\epsilon_pr^2 \);
    but \(p \equiv -\epsilon_p \pmod{3}\),
    so \(\frac{1}{3}\vv{e}_2\) is in \(\ZZ^2\).
    On the other hand, \(3\nmid r\) 
    (since otherwise \(3\mid N\)),
    so neither \(\frac{1}{3}\vv{e}_1\)
    nor \(\frac{1}{3}(\vv{e}_1 \pm \vv{e}_2)\)
    is in \(\ZZ^2\).
    Hence \(\subgrp{\vv{e}_1,\frac{1}{3}\vv{e}_2}\)
    is the only lattice in \(\ZZ^2\)
    containing \(\subgrp{\vv{e}_1,\vv{e}_2}\) with index 3,
    so \(\Lattice = \subgrp{\vv{e}_1,\frac{1}{3}\vv{e}_2}\).
    The vectors \(\frac{1}{3}\vv{e}_2\)
    and \(\vv{e}_1 \pm 2\vv{e}_2/3\) 
    satisfy Ineq.~\eqref{eq:reduced},
    so they form a reduced basis for~\(\Lattice\);
    the longest of their components is \(p+\epsilon_p - 2|r|\).
    \qed
\end{proof}

\begin{example}
    \label{ex:d=3}
    Let \(p = 2^{127}-1\);
    then \(\Delta = -1\) is a nonsquare in \(\FF_p\).
    The parameter value \(s = 10400\)
    yields a twist-secure curve at the 128-bit security level:
    \(
        \#\EC_{3,-1,10400}(\FF_{p^2}) = 3\cdot N
    \)
    and
    \(
        \#\EC_{3,-1,10400}'(\FF_{p^2}) = N' 
    \),
    where \(N\) and \(N'\) are 253- and 254-bit primes,
    respectively.
    As in Example~\ref{ex:d=2},
    this is the smallest value of~\(s\) yielding a curve-twist pair 
    with orders in this form.
    Any scalar multiplication 
    in \(\EC_{3,-1,10400}(\FF_{p^2})[N]\) 
    or \(\EC_{3,-1,10400}'(\FF_{p^2})[N']\) 
    can be computed via a 127-bit multiexponentiation
    using 
    Algorithm~\ref{alg:decomp}
    with the basis of Lemma~\ref{lemma:basis-d3}.
    (We warn the reader that 
    here, one of the curve coefficients is quadratic in \(s\);
    so small values of~\(s\) may not yield particularly convenient
    coefficients for serious implementations.)
\end{example}


\paragraph{Doche--Icart--Kohel models.}
We can exploit the \(3\)-isogeny on \(\EC_{3,\Delta,s}\)
for faster tripling (cf.~\cite[\S3.2]{DIK}):
\(\EC_{3,\Delta,s}\)
is isomorphic 
to the tripling-oriented Doche--Icart-Kohel model
\[
    \EC_{3,\Delta,s}^\mathrm{DIK} 
    : 
    v^2 = u^3 + 3\cdot\frac{9}{{C_{3,\Delta}(s)}^p}(u + 1)^2 
\]
via
\(
    (x,y)
    \mapsto
    (u,v)
    =
    \big(
        \alpha(x-3) ,
        \alpha^{3/2}y
    \big)
\)
where \(\alpha := 3C_{3,\Delta}(s)^{-p}\).
This is an \(\FF_{p^2}\)-isomorphism 
if \(C_{3,\Delta}(s)\) is a square in \(\FF_{p^2}\);
otherwise, 
\(\EC_{3,\Delta,s}^{\mathrm{DIK}}\cong_{\FF_{p^2}}\EC_{3,\Delta,s}'\).

\section{
    Endomorphisms from Quadratic \(\QQ\)-curves of Degree 5
}
\label{sec:degree-5}

For \(d = 5\), Hasegawa notes that it is impossible to give a universal
\(\QQ\)-curve for arbitrary squarefree \(\Delta\):
there exists a quadratic \(\QQ\)-curve of degree \(5\) 
over \(\QQ(\sqrt{\Delta})\)
if and only if 
\( \Legendre{5}{p_i} = 1 \)
for every prime  \(p_i \not= 5\) dividing \(\Delta\)
(see \cite[Proposition~2.3]{Hasegawa}).
This restricts our choice of \(\Delta\) for a given \(p\).

The special case \(\Delta = -1\) is particularly interesting:
by the above, there exists 
a family of \(\QQ\)-curves of degree \(5\) over \(\QQ(\sqrt{-1})\),
and every prime \(p \equiv 3\pmod{4}\) is inert in \(\QQ(\sqrt{-1})\).
We work this case out in detail below.
The remaining case \(p \equiv 1 \pmod{4}\) is a straightforward exercise:
given a fixed prime \(p > 5\),
we choose a squarefree \(\Delta\) meeting the condition above,
then apply~\cite[Theorem 2.4]{Hasegawa}
to derive a family of degree-5 \(\QQ\)-curves over
\(\QQ(\sqrt{\Delta})\)
amenable to the construction of~\S\ref{sec:construction}.\footnote{
    In~\cite{Smith-asiacrypt},
    the author suggested that for any \(p \equiv 1 \pmod{4}\)
    one could use \(\Delta=-11\)
    with Hasegawa's parameters in~\cite[Table 6]{Hasegawa}
    in the construction of \S\ref{sec:construction}.
    This is incorrect: by Dirichlet's theorem,
    half of the \(p \equiv 1 \pmod{4}\) 
    are not inert in \(\QQ(\sqrt{-11})\). 
}

Of course, compared with \(d = 2\) and \(3\),
endomorphisms with separable degree \(d = 5\) 
are intrinsically slower.
The chief interest of this family is that 
unlike with \(d = 2\) and \(3\),
here neither the generic curve nor its twist
have rational torsion points,
so it is possible for reductions and their twists 
to both have prime order.

Let \(\ECK_{5,-1,s}\) be the family of elliptic curves over
\(\QQ(\sqrt{-1})\)
defined by
\[
    \ECK_{5,-1,s}
    :
    y^2 = x^3 + A_{5,-1}(s)x + B_{5,-1}(s)
\]
where
\begin{align*}
    A_{5,-1}(s) 
    & := 
    -27s(11s-2)\left( 3(6s^2 + 6s - 1) - 20s(s - 1)\sqrt{-1} \right)
    \ ,
    \\
    B_{5,-1}(s)
    & :=
    54s^2(11s-2)^2\left( (13s^2 + 59s - 9) - 2(s - 1)(20s + 9)\sqrt{-1} \right)
    \ ,
\end{align*}
and \(s\) is a free parameter taking values in \(\QQ\).

The family \(\ECK_{5,-1,s}\)
is a family of \(\QQ\)-curves of degree 5:
the polynomial
\[
    (1 + 2\sqrt{-1})(x - 3s(11s-2)(2-\sqrt{-1}))^2 + 81s(11s-2)(1 + s\sqrt{-1})^2 
\]
defines the kernel
\(\mathcal{S}\)
of a \(5\)-isogeny
\(\phiK_{5,-1,s}: \ECK_{5,-1,s} \to \conj{\ECK_{5,-1,s}}\)
over \(\QQ(\sqrt{-1})\),
which is the composition of
the normalized quotient 
\(\ECK_{5,-1,s} \to \ECK_{5,-1,s}/\mathcal{S}\)
(as in Eqs.~\eqref{eq:Velu-odd-A}, 
\eqref{eq:Velu-odd-B}, and~\eqref{eq:Velu-odd-map})
with the twisting isomorphism
\( \twistiso{5/(1 + 2\sqrt{-1})} \).
Conjugating and composing again, we find 
\begin{equation}
    \label{eq:d5-comp}
    \conj{\phiK_{5,\Delta,s}}\circ\phiK_{5,\Delta,s}
    =
    [5]_{\EC_{5,\Delta,s}} 
    \qquad
    \text{and}
    \qquad
    \phiK_{5,\Delta,s}\circ\conj{\phiK_{5,\Delta,s}}
    =
    [5]_{\conj{\EC_{5,\Delta,s}}}
    \ .
\end{equation}

The family has discriminant
\(
    -2^63^{12}s^3(11s-2)^3(1 + s^2)(1+s\sqrt{-1})^4
\),
and
\[
    j(\ECK_{5,-1,s}) 
    = 
    \frac{
        -64\left(3 (6 s^2 + 6 s - 1) - 20 (s^2 - s) \sqrt{-1}\right)^3
    }{
        (1 + s^2)(1 + s\sqrt{-1})^4
    }
    \ .
\]
Hence, \(\ECK_{5,-1,s}\) is an elliptic curve 
for all \(s\) in \(\QQ\setminus\{0,2/11\}\),
and these \(\ECK_{5,-1,s}\)
have good reduction at any \(p > 5\) inert in \(\QQ(\sqrt{-1})\).
The analogue of Proposition~\ref{prop:j-d2}
shows that we get at least \(p-25\) non-isomorphic curves
in this way.

\begin{theorem}
    \label{th:d5}
    Let \(p\) be a prime congruent to \(3\) modulo \(4\),
    so 
    \(\FF_{p^2} = \FF_{p}(\sqrt{-1})\),
    and let \(\EC_{5,-1,s}\)
    and \(\phi_{5,-1,s}\) 
    be the reductions mod \(p\)
    of \(\ECK_{5,-1,s}\)
    and \(\phiK_{5,-1,s}\). 

    For each \(s \not=0\) or \(2/11\) in \(\FF_p\),
    the curve \(\EC_{5,-1,s}/\FF_{p^2}\)
    has an efficient \(\FF_{p^2}\)-endomorphism 
    \[
        \psi_{5,-1,s} := \pi_p\circ\phi_{5,-1,s}
    \]
    of degree \(5p\) such that
    \(
        \psi_{5,-1,s}^2 = [5]\pi_{\EC_{5,-1,s}}
    \)
    and
    \(
        (\psi_{5,-1,s}')^2 = [-5]\pi_{\EC_{5,-1,s}'}
    \).
    There exists an integer \(r\) satisfying
    \(
        5r^2 = 2p + \trace{\EC_{5,-1,s}}
    \)
    such that
    \[
        [r]\psi_{5,-1,s} 
        = 
        [p] + \pi_{\EC_{5,-1,s}}
        \qquad
        \text{and}
        \qquad
        [r]\psi_{5,-1,s}'
        = 
        [p] - \pi_{\EC_{5,-1,s}}
        \ ;
    \]
    the characteristic polynomial
    of
    \(\psi_{5,-1,s}\) and \(\psi_{5,-1,s}'\)
    is
    \(
        P_{5,-1,s}(T) = T^2 - 5rT + 5p
    \).

    In particular,
    if \(\EC_{5,-1,s}\) is ordinary
    and \(\G \subseteq \EC_{5,-1,s}(\FF_{p^2})\)
    is a cyclic subgroup of order \(N\) such that
    \(\psi_{5,-1,s}(\G) \subseteq \G\),
    then the eigenvalue of \(\psi_{5,-1,s}\) on \(\G\) is
    \[
        \lambda_{5,-1,s} \equiv (p + 1)/r
        \equiv
        \pm\sqrt{5}
        \pmod{N}
        \ .
    \]
\end{theorem}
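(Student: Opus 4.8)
The plan is to deduce Theorem~\ref{th:d5} from the general Theorem~\ref{th:main} and Corollary~\ref{cor:eigenvalue}, exactly as Theorems~\ref{th:d2} and~\ref{th:d3} were deduced in the degree-2 and degree-3 cases. The only step requiring any real care is identifying the sign $\epsilon_p$, which here turns out to be $+1$ for every admissible $p$.

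First I would check that the reduction $\EC_{5,-1,s}$ is a genuine elliptic curve over $\FF_{p^2}$ for each $s\in\FF_p\setminus\{0,2/11\}$. In the reduced discriminant $-2^63^{12}s^3(11s-2)^3(1 + s^2)(1+s\sqrt{-1})^4$ the factor $2^63^{12}$ is nonzero since $p>5$; moreover $\sqrt{-1}\notin\FF_p$ (as $p\equiv3\pmod4$), so for $s\in\FF_p$ neither $1+s\sqrt{-1}$ nor its conjugate $1-s\sqrt{-1}$ (whose product is $1+s^2$) can vanish. Hence the reduced discriminant is nonzero precisely when $s\neq0$ and $s\neq2/11$. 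Since such a $p$ is inert in $\QQ(\sqrt{-1})$, prime to $5$, and of good reduction for $\ECK_{5,-1,s}$, the $5$-isogeny $\phiK_{5,-1,s}$ reduces to a $5$-isogeny $\phi_{5,-1,s}\colon\EC_{5,-1,s}\to\conj[(p)]{\EC_{5,-1,s}}$ over $\FF_{p^2}$, and $\psi_{5,-1,s}=\pi_p\circ\phi_{5,-1,s}$ is efficient since $\phi_{5,-1,s}$ is given by rational maps of degree about $5$ while $\pi_p$ acts as the coordinate conjugation $a+b\sqrt{-1}\mapsto a-b\sqrt{-1}$ on $\FF_{p^2}$.

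Next I would pin down $\epsilon_p$. Because $p$ is inert, reduction identifies $\Gal(\FF_{p^2}/\FF_p)=\subgrp{(p)}$ with $\Gal(\QQ(\sqrt{-1})/\QQ)=\subgrp{\sigma}$, so $\conj[(p)]{\phi_{5,-1,s}}$ is exactly the reduction of $\conj{\phiK_{5,-1,s}}$; reducing Eq.~\eqref{eq:d5-comp} modulo $p$ then gives $\conj[(p)]{\phi_{5,-1,s}}\circ\phi_{5,-1,s}=[5]_{\EC_{5,-1,s}}$, i.e.\ $\epsilon_p=+1$ in the notation of Theorem~\ref{th:main}. This is precisely where the degree-5 family differs from $d=2,3$: there the element $\sqrt{-d}$ generated a proper quadratic extension of $\QQ(\sqrt\Delta)$, so $\conj{\phiK}\circ\phiK$---and hence $\epsilon_p$---depended on which extension of $\sigma$ was the image of $(p)$. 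Here $5$ is the norm of $1+2\sqrt{-1}$ from $\QQ(\sqrt{-1})$, which lets us realise $\phiK_{5,-1,s}$ using $\twistiso{5/(1+2\sqrt{-1})}$ in place of $\twistiso{1/\sqrt{-5}}$, so $\phiK_{5,-1,s}$ is already defined over $\QQ(\sqrt{-1})=\QQ(\sqrt\Delta)$ itself (a fortiori over $\QQ(\sqrt\Delta,\sqrt{-d})$, as Theorem~\ref{th:main} requires); there is no second extension of $\sigma$ in play, and the sign is forced.

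With $d=5$ and $\epsilon_p=1$ in hand, the rest is a direct specialisation. Theorem~\ref{th:main} applied to $\phiK_{5,-1,s}$ yields $\psi_{5,-1,s}^2=[5]\pi_{\EC_{5,-1,s}}$ and $(\psi_{5,-1,s}')^2=[-5]\pi_{\EC_{5,-1,s}'}$, an integer $r$ with $5r^2=2p+\trace{\EC_{5,-1,s}}$, the relations $[r]\psi_{5,-1,s}=[p]+\pi_{\EC_{5,-1,s}}$ and $[r]\psi_{5,-1,s}'=[p]-\pi_{\EC_{5,-1,s}'}$, and the characteristic polynomial $T^2-5rT+5p$. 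Corollary~\ref{cor:eigenvalue} gives $\lambda_{5,-1,s}\equiv(p+1)/r\pmod N$; substituting $r\lambda_{5,-1,s}\equiv p+1$ into the characteristic congruence $\lambda_{5,-1,s}^2-5r\lambda_{5,-1,s}+5p\equiv0\pmod N$ collapses it to $\lambda_{5,-1,s}^2\equiv5\pmod N$, so $\lambda_{5,-1,s}\equiv\pm\sqrt5$. The only non-routine ingredient is the $\epsilon_p$ computation of the previous paragraph; everything else is quotation of Theorem~\ref{th:main} and Corollary~\ref{cor:eigenvalue}.
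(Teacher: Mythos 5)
Your proposal is correct and follows the same route as the paper, whose proof is simply a citation of Theorem~\ref{th:main} and Corollary~\ref{cor:eigenvalue} together with Eq.~\eqref{eq:d5-comp}; your added detail (the nonvanishing of the reduced discriminant for \(s\neq 0, 2/11\), and the observation that \(\epsilon_p=+1\) is forced because \(\phiK_{5,-1,s}\) is already defined over \(\QQ(\sqrt{-1})\) via the norm-form twist \(\twistiso{5/(1+2\sqrt{-1})}\), so Eq.~\eqref{eq:d5-comp} carries no sign ambiguity) correctly fills in what the paper leaves implicit.
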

\begin{proof}
    Follows from Theorem~\ref{th:main}
    and Corollary~\ref{cor:eigenvalue}
    using Eq.~\eqref{eq:d5-comp}.
    \qed
\end{proof}

Reductions of curves in \(\ECK_{5,-1,s}\) may have
prime order, and so can their twists.
In this situation,
Algorithm~\ref{alg:decomp} with the basis of
Lemma~\ref{lemma:magic-basis} 
computes optimal 
scalar decompositions for \(\psi_{5,-1,s}\)
(of bitlength at most \(\lceil{\log_2(p+1)}\rceil\)).

\begin{example}
    Let \(p = 2^{127}-1\) and \(\Delta=-1\).
    Taking \(s = 7930\) in the degree-5 family
    yields a twist-secure curve at the 128-bit security level:
    the trace of \(\EC_{5,-1,7930}\) is
    \[
        \trace{\EC_{2,-1,28106}}
        =
        160084314926568661653252069280514036151
        \ ,
    \]
    so
    \( 
        \#\EC_{5,-1,7930}(\FF_{p^2})
    \)
    and
    \(
        \#\EC'_{5,-1,7930}(\FF_{p^2})
    \)
    are \emph{both} 254-bit primes.
    We transform 254-bit scalar multiplications in
    \(\EC_{5,-1,7930}(\FF_{p^2})\) into a 127-bit multiexponentiations
    using
    Algorithm~\ref{alg:decomp}
    with the basis of Lemma~\ref{lemma:magic-basis}.
    As in Examples~\ref{ex:d=2} and~\ref{ex:d=3},
    this is the smallest value of \(s\) yielding a curve-twist pair
    with both curves of prime order. (Here the curve coefficients
    are quartic and sextic in \(s\), so the smallness of \(s\) has
    little effect on the convenience of the coefficients for
    implementations---however, as we remarked above, this family
    is essentially of theoretical interest.)
\end{example}

\section{
    Endomorphisms from Quadratic \(\QQ\)-curves of Degree 7
}
\label{sec:degree-7}

For completeness, we include a family of \(\QQ\)-curves 
of degree \(7\).  These curves are less interesting for
practical applications, since the higher degree renders
the endomorphism intrinsically slower than the curves
with \(d = 2, 3\), and \(5\).

Let \(\Delta\) be a squarefree integer.
Hasegawa
defines a one-parameter family 
\[
    \ECK_{7,\Delta,s} : 
    y^2 = x^3 + A_{7,\Delta}(s)x + B_{7,\Delta}(s)
\]
of \(\QQ\)-curves of degree 7 over \(\QQ(\sqrt{\Delta})\)
in~\cite[Theorem~2.2]{Hasegawa},
where
\begin{align*}
    A_{7,\Delta}(s) & = 
    -3C_{7,\Delta}(s)(85 + 96s\sqrt{\Delta} + 15s^2\Delta)
    \ ,
    \\
    B_{7,\Delta}(s) & = 
    14C_{7,\Delta}(s)\big(
        9(3s^4\Delta^2 + 130s^2\Delta + 171)
        +
        16(9s^2\Delta + 163)s\sqrt{\Delta}
    \big)
    \ , 
    \\
    C_{7,\Delta}(s) & = 7(27 + s^2\Delta)
    \ ,
\end{align*}
and
\(s\) is a free parameter taking values in \(\QQ\).
Observe that \(\conj{\ECK_{7,\Delta,s}} = \ECK_{7,\Delta,-s}\).

The family \(\ECK_{7,\Delta,s}\) is 
a family of quadratic \(\QQ\)-curves of degree 7.
More explicitly:
\(\ECK_{7,\Delta,s}\) has a subgroup \(\mathcal{S}\) of order \(7\) 
defined by the kernel polynomial
\[
    (x-C_{7,\Delta}(s))^3 
    - 
    4^2(1 - s\sqrt{\Delta})^2C_{7,\Delta}(s)\big[
    3(x-C_{7,\Delta}(s)) + 4(1 - s\sqrt{\Delta})(27 + s\sqrt{\Delta})
    \big]
    \ .
\]
While \(\mathcal{S}\) is defined over \(\QQ(\sqrt{\Delta})\),
none of its nontrivial points are.
Computing the normalized quotient 
\(\ECK_{7,\Delta,s} \to \ECK_{7,\Delta,s}/\mathcal{S}\)
(using Eqs.~\eqref{eq:Velu-odd-A}, \eqref{eq:Velu-odd-B},
and~\eqref{eq:Velu-odd-map})
and composing with the twisting isomorphism \(\twistiso{1/\sqrt{-7}}\) 
yields an explicit \(7\)-isogeny
\(\phiK_{7,\Delta,s} : \ECK_{7,\Delta,s} \to \ECK_{7,\Delta,s}\).
Conjugating and composing again, 
we see that 
\begin{equation}
    \label{eq:epsilon-7}
    \conj{\phiK_{7,\Delta,s}}\circ\phiK_{7,\Delta,s}
    =
    [\epsilon7]_{\EC_{7,\Delta,s}}
    \quad 
    \text{ where }
    \epsilon
    =
    \begin{cases}
        -1 & \text{if }\ \conj{\sqrt{-7}} = \sqrt{-7}  \\
        +1 & \text{if }\ \conj{\sqrt{-7}} = -\sqrt{-7}  \\
    \end{cases}
\end{equation}
(and similarly, 
\(
    \phiK_{7,\Delta,s}\circ\conj{\phiK_{7,\Delta,s}}
    =
    [\epsilon7]_{\conj{\EC_{7,\Delta,s}}}
\)).

The discriminant of \(\ECK_{7,\Delta,s}\)
is 
\(
    2^{12}\cdot3^6\cdot7\cdot
    C_{7,\Delta}(s)^2
    (1 - s^2\Delta)
    (1 - s\sqrt{\Delta})^6
\),
and
\[
    j\big(\ECK_{7,\Delta,s}\big)
    =
    \frac{
        (27+s^2\Delta)\big(85 + 96s\sqrt{\Delta} + 15s^2\Delta\big)^3
    }{
        (1 - s^2\Delta)(1 - s\sqrt{\Delta})^6
    }
\]
(letting \(s \to \infty\), 
we find \(j(\ECK_{7,\Delta,\infty}) = -3375\));
so \(\ECK_{7,\Delta,s}\) reduces modulo any inert \(p > 7\) 
to give a family of elliptic curves
\(\EC_{7,\Delta,s}/\FF_{p^2}\),
and then any value of \(s\) in \(\FF_p\) 
such that \(s^2 \not= -27/\Delta\) yields an elliptic curve over
\(\FF_{p^2}\).
A calculation similar to Proposition~\ref{prop:j-d2}
shows that we get at least \(p-48\) non-isomorphic curves
in this way, when \(p\) is sufficiently large.

\begin{theorem}
    \label{th:d7}
    Let \(p > 7\) be prime,
    fix a nonsquare \(\Delta\)
    modulo \(p\),
    so 
    \(\FF_{p^2} = \FF_p(\sqrt{\Delta})\),
    and let \(\EC_{7,\Delta,s}\)
    and \(\phi_{7,\Delta,s}\)
    be the reductions modulo \(p\)
    of \(\ECK_{7,\Delta,s}\)
    and \(\phiK_{7,\Delta,s}\).

    For each \(s\) in \(\FF_p\) such that \(s^2 \not= -27/\Delta\),
    the curve \(\EC_{7,\Delta,s}/\FF_{p^2}\)
    has an efficient \(\FF_{p^2}\)-endomorphism
    \(\psi_{7,\Delta,s} := \pi_p\circ\phi_{7,\Delta,s}\)
    of degree \(7p\) satisfying
    \[
        \psi_{7,\Delta,s}^2 
        = 
        [\epsilon_p 7]\pi_{\EC_{7,\Delta,s}}
        \quad
        \text{and}
        \quad
        (\psi_{3,\Delta,s}')^2
        = 
        [-\epsilon_p 7]\pi_{\EC_{7,\Delta,s}'}
    \]
    where
    \[
        \epsilon_p := -\Legendre{-7}{p} 
        = 
        \begin{cases}
            +1 & \textrm{if}\ p \equiv 3, 5, 6 \pmod{7}
            \ ,
            \\
            -1 & \textrm{if}\ p \equiv 1, 2, 4 \pmod{7}
            \ .
        \end{cases}
    \]
    There exists an integer \(r\) satisfying
    \(
        7r^2 = 2p + \epsilon_p\trace{\EC_{7,\Delta,s}}
    \)
    such that
    \[
        [r]\psi_{7,\Delta,s}
        =
        [p] + \epsilon_p\pi_{\EC_{7,\Delta,s}}
        \qquad
        \text{and}
        \qquad
        [r]\psi_{7,\Delta,s}'
        =
        [p] - \epsilon_p\pi_{\EC_{7,\Delta,s}}
        \ ;
    \]
    the characteristic polynomial
    of
    \(\psi_{7,\Delta,s}\) and \(\psi_{7,\Delta,s}'\)
    is
    \(
        P_{7,\Delta,s}(T)
        =
        T^2 - 7rT + 7p
    \).

    In particular,
    if \(\EC_{7,\Delta,s}\) is ordinary and
    \(\G \subseteq \EC_{7,\Delta,s}(\FF_{p^2})\) 
    is a cyclic subgroup of order \(N\)
    such that \(\psi_{7,\Delta,s}(\G) \subseteq \G\),
    then the eigenvalue of \(\psi_{7,\Delta,s}\) on \(\G\) is 
    \[
        \lambda_{7,\Delta,s} 
        \equiv
        \left( p + \epsilon_p \right)/r
        \equiv 
        \pm\sqrt{\epsilon_p 7} 
        \pmod{N}
        \ .
    \]
\end{theorem}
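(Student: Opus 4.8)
The plan is to deduce Theorem~\ref{th:d7} as a direct specialization of Theorem~\ref{th:main} and Corollary~\ref{cor:eigenvalue} to the degree-7 family, exactly as was done for $d = 2, 3, 5$ in Theorems~\ref{th:d2}, \ref{th:d3}, and~\ref{th:d5}. First I would check that $\ECK_{7,\Delta,s}$ and $\phiK_{7,\Delta,s}$ meet the standing hypotheses of \S\ref{sec:construction}: $\ECK_{7,\Delta,s}$ is a quadratic $\QQ$-curve of prime degree $d = 7$ over $\QQ(\sqrt{\Delta})$, with no CM since its $j$-invariant is a nonconstant rational function of $s$ and so, for all but finitely many $s$, avoids the finitely many CM values; $\Delta$ may be taken prime to $7$, and in any case $p \nmid \Delta$ suffices; $p > 7$ is inert in $\QQ(\sqrt{\Delta})$, prime to $7$, and of good reduction for the chosen $s$, which is guaranteed when $s^2 \neq -27/\Delta$, as the displayed discriminant formula shows. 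The one point requiring attention is that $\phiK_{7,\Delta,s}$ is defined over $\QQ(\sqrt{\Delta},\sqrt{-7})$ — i.e. $\gamma = \sqrt{-7}$ — which is read off from its construction as the normalized V\'elu quotient by $\mathcal{S}$ composed with the twisting isomorphism $\twistiso{1/\sqrt{-7}}$. Once these are in place, Theorem~\ref{th:main} applied to $\phiK_{7,\Delta,s}$ delivers at once the endomorphism $\psi_{7,\Delta,s} = \pi_p \circ \phi_{7,\Delta,s}$ of degree $7p$, the relations $\psi_{7,\Delta,s}^2 = [\epsilon_p 7]\pi_{\EC_{7,\Delta,s}}$ and $(\psi_{7,\Delta,s}')^2 = [-\epsilon_p 7]\pi_{\EC_{7,\Delta,s}'}$, the integer $r$ with $7 r^2 = 2p + \epsilon_p \trace{\EC_{7,\Delta,s}}$, the relations $[r]\psi_{7,\Delta,s} = [p] + \epsilon_p \pi$ and $[r]\psi_{7,\Delta,s}' = [p] - \epsilon_p \pi$, and the characteristic polynomial $T^2 - 7 r T + 7 p$.

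The remaining task is to identify the sign $\epsilon_p$ with an explicit congruence on $p$. By construction, $\epsilon_p$ is the sign $\epsilon$ of Eq.~\eqref{eq:epsilon-7} attached to the extension of $\sigma$ to $\QQ(\sqrt{\Delta},\sqrt{-7})$ that is the image of the $p$-power map $(p)$; by Eq.~\eqref{eq:FFp2-conj} with $d = 7$, that extension sends $\sqrt{-7} \mapsto \Legendre{-7}{p}\sqrt{-7}$. Matching against the case split in Eq.~\eqref{eq:epsilon-7} gives $\epsilon_p = -\Legendre{-7}{p}$. I would then evaluate the Legendre symbol: a quadratic reciprocity computation (using $-7 \equiv 1 \pmod 4$) gives $\Legendre{-7}{p} = \Legendre{p}{7}$, and the nonzero squares modulo $7$ are exactly $\{1,2,4\}$, so $\epsilon_p = -1$ precisely when $p \equiv 1, 2, 4 \pmod 7$ and $\epsilon_p = +1$ precisely when $p \equiv 3, 5, 6 \pmod 7$ (the value $p = 7$ being excluded by hypothesis, so $\Legendre{-7}{p} \neq 0$). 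This is the asserted formula for $\epsilon_p$.

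For the eigenvalue statement, suppose $\EC_{7,\Delta,s}$ is ordinary and $\G \subseteq \EC_{7,\Delta,s}(\FF_{p^2})$ is cyclic of order $N$ with $\psi_{7,\Delta,s}(\G) \subseteq \G$. Corollary~\ref{cor:eigenvalue} gives $\lambda_{7,\Delta,s} \equiv (p + \epsilon_p)/r \pmod N$ straight from $[r]\psi_{7,\Delta,s} = [p] + \epsilon_p \pi$. To see this is a square root of $\epsilon_p 7$, note that $\pi_{\EC_{7,\Delta,s}}$ fixes every $\FF_{p^2}$-rational point, hence acts as $[1]$ on $\G$; so $\psi_{7,\Delta,s}^2 = [\epsilon_p 7]\pi$ restricts to $[\epsilon_p 7]$ on $\G$, giving $\lambda_{7,\Delta,s}^2 \equiv \epsilon_p 7 \pmod N$. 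The only genuinely delicate ingredient is Eq.~\eqref{eq:epsilon-7} itself — the claim that $\conj{\phiK_{7,\Delta,s}} \circ \phiK_{7,\Delta,s} = [\pm 7]$ with the stated dependence on $\conj{\sqrt{-7}}$ — since this rests on the explicit V\'elu computation for the given order-$7$ kernel polynomial together with the absence of CM (which forces this degree-$49$ endomorphism to be $[\pm 7]$); everything downstream of that verification is then formal.
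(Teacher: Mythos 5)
Your proposal is correct and takes exactly the route of the paper, whose entire proof is the one-line citation ``Follows from Theorem~\ref{th:main} and Corollary~\ref{cor:eigenvalue} using Eq.~\eqref{eq:epsilon-7}.'' The extra details you supply --- verifying the standing hypotheses of \S\ref{sec:construction}, matching the sign in Eq.~\eqref{eq:epsilon-7} to the action of \((p)\) on \(\sqrt{-7}\) via Eq.~\eqref{eq:FFp2-conj}, and evaluating \(\Legendre{-7}{p}\) by quadratic reciprocity --- are precisely the steps the paper leaves implicit, and they are all carried out correctly.
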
 
\begin{proof}
    Follows from Theorem~\ref{th:main}
    and Corollary~\ref{cor:eigenvalue}
    using Eq.~\eqref{eq:epsilon-7}.
    \qed
\end{proof}

\section{
    Exceptional CM and 4-dimensional Decompositions
}
\label{sec:CM}

By definition, \(\QQ\)-curves do not have CM.
However, if \(\ECK_s\) is a family of \(\QQ\)-curves 
then some isolated curves in \(\ECK_s\) may have CM.
These exceptional curves are of interest for 4-dimensional
scalar decompositions: they form a natural generalization of the GLV+GLS
curves described by Longa and Sica~\cite{Longa--Sica}.

Briefly: if \(\ECK/\QQ(\sqrt{\Delta})\) 
has 
CM by an order of small discriminant,
then we can 
compute an explicit endomorphism \(\tilde{\rho}\) of \(\ECK\)
of small degree
(using Stark's algorithm~\cite{Stark}, say),
which then yields an efficient endomorphism \(\rho\) 
on the reduction \(\EC\) of \(\ECK\) modulo~\(p\),
exactly as in the GLV construction.
If \(\ECK\) is \(d\)-isogenous to \(\conj{\ECK}\) 
and \(p\) is inert in \(\QQ(\sqrt{\Delta})\), 
then \(\EC\) also has the degree-\(dp\) 
endomorphism \(\psi\) constructed in \S\ref{sec:construction}.  
The endomorphisms \([1], \rho, \psi\), and \(\rho\psi\)
may then be used as a basis for the 4-dimensional decomposition techniques 
elaborated in~\cite{Longa--Sica}.

\paragraph{Practical limitations of 4-dimensional decompositions.}

``\(\QQ\)-curves with CM''
inherit the chief drawback of the GLV construction:
as noted in~\S\ref{sec:intro},
we cannot hope to find secure (and twist-secure) curves 
when \(p\) is fixed.
This scarcity of secure curves is easily explained:
reductions of CM endomorphisms (including GLV endomorphisms)
are \emph{separable}, and
efficient separable endomorphisms have
extremely small degree, so that their (dense) defining polynomials
can be evaluated quickly.\footnote{%
    By dense, we mean that these polynomials have many nonzero terms;
    the cost of their evaluation therefore depends linearly on the degree.
}
But the degree of an endomorphism is 
the norm of the corresponding CM-order element;
and to have non-integers of very small norm, 
the CM-order must have a tiny discriminant.
Up to twists, the number of elliptic curves 
with CM discriminant \(-D\) is 
the class number \(h(-D)\) 
(which is asymptotically in \(\Oh(\sqrt{D})\)).
The six orders containing endomorphisms of degree 
\(\le 3\) have class number 1,
and hence only one corresponding \(j\)-invariant.
For \(-D = -4\), corresponding to \(j = 1728\),
there are two or four \(\FF_{p^2}\)-isomorphism classes;
for \(-D = -3\), corresponding to \(j = 0\),
we have two or six;
and otherwise we have only two.
In particular,
there are at most 18 pairwise non-isomorphic curves over~\(\FF_{p^2}\)
with a nontrivial endomorphism of degree at most 3.

Over a fixed finite field, 
the probability that any of these curves 
will have a secure group order,
let alone be twist-secure, is very low:
roughly speaking, we expect to try \(O(\log^2p)\) random curves
over \(\FF_{p^2}\)
before finding a twist-secure one
(see~\cite{Shparlinski--Sutantyo}, for example, for more accurate heuristics).
In practice, then, we cannot use these curves when \(p\) is fixed for
efficiency.
Higher-dimensional scalar decomposition 
speedups therefore come at the cost of suboptimal field arithmetic:
we pay for shorter loop lengths with comparatively slower field
(and hence group)
operations, to say nothing of a more complicated multiexponentiation
algorithm.

We must therefore choose between 4-dimensional decompositions and faster
underlying field arithmetic.
Here we have chosen the latter,
so we do not treat CM curves in depth.
However, 
we enumerate 
the exceptional CM curves in our families
in Theorem~\ref{th:CM-fibres},
to provide a convenient source of curves for readers interested in
exploring and implementing 4-dimensional techniques.

\paragraph{Exceptional CM curves.}
Any one-dimensional family of
\(\QQ\)-curves has only finitely many exceptional CM curves,
up to isomorphism, 
and it is easy to compute them.

\begin{theorem}
    \label{th:CM-fibres}
    The exceptional CM curves in the families
    \(\ECK_{2,\Delta,s}\),
    \(\ECK_{3,\Delta,s}\),
    \(\ECK_{5,-1,s}\),
    and
    \(\ECK_{7,\Delta,s}\)
    are as follows.
    (In each table,
    if \(s\sqrt{\Delta}\) takes the given value
    then \(\ECK_{d,\Delta,s}\) has CM by the 
    order of discriminant~\(-D_0f^2\),
    where \(-D_0\) is the fundamental discriminant
    and \(f\) is the conductor.)
    \begin{enumerate}
        \item
            The following table
            lists the CM fibres in \(\ECK_{2,\Delta,s}\)
            (completing Quer's list~\cite[\S5]{Quer-1},
            where \(s\sqrt{\Delta} = 0\), \(\pm\frac{5}{9}\sqrt{-7}\),
            and \(\infty\) are missing).
            \begin{center}
                \begin{tabular}{|r|r|@{\ }|r|r|@{\ }|r|r|@{\ }|r|r|}
                    \hline
                    \(s\sqrt{\Delta}\) & \(-D_0f^2\) &
                    \(s\sqrt{\Delta}\) & \(-D_0f^2\) &
                    \(s\sqrt{\Delta}\) & \(-D_0f^2\) &
                    \(s\sqrt{\Delta}\) & \(-D_0f^2\) \\
                    \hline
                    \hline
                    \(\infty\) & \(-4\cdot1^2\) &
                    \(\pm\frac{5}{9}\sqrt{-7}\) & \(-7\cdot1^2\) &
                    \(\pm\frac{1}{2}\sqrt{5}\) & \(-20\cdot1^2\) &
                    \(\pm\frac{5}{18}\sqrt{13}\) & \(-52\cdot1^2\) 
                    \\
                    \cline{3-8}
                    \(\pm\frac{7}{12}\sqrt{3}\) & \(-4\cdot3^2\) &
                    \(0\) & \(-8\cdot1^2\) &
                    \(\pm\frac{2}{3}\sqrt{2}\) & \(-24\cdot1^2\) & 
                    \(\pm\frac{70}{99}\sqrt{2}\) & \(-88\cdot1^2\) 
                    \\
                    \cline{5-8}
                    \(\pm\frac{161}{360}\sqrt{5}\) & \(-4\cdot5^2\) &
                    \(\pm\frac{20}{49}\sqrt{6}\) & \(-8\cdot3^2\) &
                    \(\pm\frac{4}{9}\sqrt{5}\) & \(-40\cdot1^2\) &
                    \(\pm\frac{145}{882}\sqrt{37}\) & \(-148\cdot1^2\) 
                    \\
                    \hline
                    \multicolumn{6}{c|}{ }
                    &
                    \(\pm\frac{1820}{9801}\sqrt{29}\) & \(-232\cdot1^2\) \\
                    \cline{7-8}
                \end{tabular}
            \end{center}
        \item
            The following table
            lists the CM fibres in \(\ECK_{3,\Delta,s}\)
            (completing Quer's list~\cite[\S6]{Quer-2},
            where \(s\sqrt{\Delta} = 0\),
            \(\pm\frac{1}{4}\sqrt{-11}\), 
            \(\pm\frac{5}{2}\sqrt{-2}\),
            and \(\infty\) are missing).
            \begin{center}
                \begin{tabular}{|r|r|@{\ }|r|r|@{\ }|r|r|}
                    \hline
                    \(s\sqrt{\Delta}\) & \(-D_0f^2\) &
                    \(s\sqrt{\Delta}\) & \(-D_0f^2\) &
                    \(s\sqrt{\Delta}\) & \(-D_0f^2\) 
                    \\
                    \hline
                    \hline
                    \(\infty\) & \(-3\cdot1^2\) &
                    \(\pm\frac{5}{2}\sqrt{-2}\)    & \(-8\cdot1^2\) &
                    \(\pm\frac{1}{2}\sqrt{2}\)     & \(-24\cdot1^2\) 
                    \\
                    \cline{3-6}
                    \(0\)                          & \(-3\cdot2^2\) &
                    \(\pm\frac{1}{4}\sqrt{-11}\)   & \(-11\cdot1^2\) &
                    \(\pm\frac{1}{4}\sqrt{17}\)     & \(-51\cdot1^2\) 
                    \\
                    \cline{3-6}
                    \(\pm\frac{5}{9}\sqrt{3}\)     & \(-3\cdot4^2\) &
                    \(\pm\sqrt{5}\)                & \(-15\cdot1^2\) &
                    \(\pm\frac{5}{32}\sqrt{41}\)   & \(-123\cdot1^2\)
                    \\
                    \cline{5-6}
                    \(\pm\frac{9}{20}\sqrt{5}\)    & \(-3\cdot5^2\) &
                    \(\pm\frac{11}{25}\sqrt{5}\)   & \(-15\cdot2^2\) &
                    \(\pm\frac{53}{500}\sqrt{89}\)   & \(-267\cdot1^2\)
                    \\
                    \cline{3-6}
                    \(\pm\frac{55}{252}\sqrt{21}\) & \(-3\cdot7^2\) &
                    \multicolumn{3}{c}{ }
                    \\
                    \cline{1-2}
                \end{tabular}
            \end{center}
        \item
            The only CM fibres 
            in \(\ECK_{5,-1,s}\)
            are
            \(\ECK_{5,-1,1}\)
            (defined over~\(\QQ\))
            and \(\ECK_{5,-1,-9/13}\);
            both have \(j\)-invariant~\(66^3\)  
            and CM by 
            the order of discriminant \(-4\cdot2^2\). 
        \item
            The following table
            lists the CM fibres in \(\ECK_{7,\Delta,s}\).
            \begin{center}
                \begin{tabular}{|r|r|@{\ }|r|r|}
                    \hline
                    \(s\sqrt{\Delta}\) & \(-D_0f^2\) &
                    \(s\sqrt{\Delta}\) & \(-D_0f^2\) 
                    \\
                    \hline
                    \hline
                    \(\infty\) & \(-7\cdot1^2\) &
                    \(\pm\sqrt{5}\) & \(-35\cdot1^2\) 
                    \\
                    \cline{3-4}
                    \(0\) & \(-7\cdot2^2\) &
                    \(\pm\frac{1}{3}\sqrt{13}\) & \(-91\cdot1^2\) 
                    \\
                    \cline{3-4}
                    \(\pm\frac{1}{3}\sqrt{7}\) & \(-7\cdot4^2\) &
                    \(\pm\frac{5}{39}\sqrt{61}\) & \(-427\cdot1^2\)
                    \\
                    \hline
                \end{tabular}
            \end{center}
    \end{enumerate}
\end{theorem}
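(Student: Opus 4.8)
The plan is to reduce the classification to a finite explicit search, using the fact that each family has a $j$-invariant living in the quadratic field, together with the classical fact that a CM $j$-invariant generates an extension of $\QQ$ whose degree is the class number of the corresponding order. Concretely, writing $u := s\sqrt{\Delta}$, the formula displayed above for $j(\ECK_{d,\Delta,s})$ exhibits it as a rational function $J_d(u) \in \QQ(u)$; hence $j(\ECK_{d,\Delta,s}) \in \QQ(\sqrt{\Delta})$ whenever $\ECK_{d,\Delta,s}$ is actually an elliptic curve (that is, for all but the finitely many $u$ at which the denominator of $J_d$ vanishes). Now $\ECK_{d,\Delta,s}$ has CM precisely when $J_d(u)$ equals some CM $j$-invariant $j_0$; and since $J_d(u)$ has degree at most $2$ over $\QQ$, the associated order $\mathcal{O}=\mathcal{O}_{-D_0f^2}$ must satisfy $h(-D_0f^2)\le 2$. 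The imaginary quadratic orders of class number $1$ or $2$ form an explicit finite list (the discriminants with $h=1$ are classical; those with $h=2$ follow from Baker--Stark and Montgomery--Weinberger), so only finitely many $-D_0f^2$ can occur --- and these are exactly the ones tabulated.

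Second, for each candidate order $\mathcal{O}$ I would solve $J_d(u)=j_0$ for $u\in\QQbar$, for each CM $j$-invariant $j_0$ of $\mathcal{O}$, discard the roots that fall outside a quadratic field or at which the family degenerates, and read off the admissible value of $s\sqrt{\Delta}$; assembling these over all four families and all orders produces the four tables. Note that one does \emph{not} need to check separately that the specialized curve is $d$-isogenous to its conjugate over the stated quadratic field: every specialization (at good $u$) inherits the $d$-isogeny $\phiK_{d,\Delta,s}$ constructed in \S\ref{sec:construction}, so matching $j$-invariants suffices. One may, however, pre-filter the list of orders by the necessary condition $\Phi_d\big(j_0,\conj{j_0}\big)=0$, where $\Phi_d$ is the modular polynomial of level $d$ --- equivalently $d$ is non-inert in $\mathcal{O}$ --- which removes most candidates before any equation is solved. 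The recovery of Quer's lists~\cite{Quer-1,Quer-2} and the identification of the fibres he missed (those at $u=0$, at $u=\infty$, and, for $d=2$, the fibre at $s\sqrt{\Delta}=\pm\tfrac{5}{9}\sqrt{-7}$) is then immediate: these are exactly the ``boundary'' specializations --- where the curve descends to $\QQ$, or sits over a cusp or an Atkin--Lehner fixed point of $X_0(d)$ --- which a purely generic analysis overlooks.

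The main obstacle is the bookkeeping needed to guarantee completeness. One must handle the spurious roots of $J_d(u)=j_0$ occurring at the bad values of $u$ (where $C_{d,\Delta}(s)$, or $1\pm u$, or a similar factor vanishes), keep track of which of the up-to-$\deg J_d$ preimages of a given $j_0$ are Galois conjugate to one another, lie in the wrong quadratic field, or coincide --- so that each CM fibre is listed exactly once --- and, crucially, read off the correct conductor $f$ from the order $\mathcal{O}$ rather than merely the fundamental discriminant $-D_0$ (this is where fibres such as $-D_0f^2=-4\cdot 3^2$ or $-8\cdot 3^2$ in the $d=2$ table arise, which a fundamental-discriminant-only analysis would misreport). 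The underlying computations are routine but must be carried out with care for each of the four families.
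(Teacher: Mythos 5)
Your proposal is correct and follows essentially the same route as the paper: observe that the $j$-invariant lies in a quadratic field, so the Hilbert class polynomial of the CM order has degree (class number) at most $2$; then run through the explicit finite list of such discriminants and solve the resulting $j$-invariant equations for $s\sqrt{\Delta}$ in each family. The extra remarks on modular-polynomial pre-filtering and on tracking conductors are refinements of the same computation, not a different argument.
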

\begin{proof}
    Suppose \(\ECK/\QQ(\sqrt{\Delta})\) is isogenous to
    \(\conj{\ECK}\). 
    If \(\ECK\) has CM by the order
    of discriminant \(-D_0 f^2\),
    then so does \(\conj{\ECK}\);
    hence, both \(j(\ECK)\) and \(\conj{j(\ECK)} = j(\conj{\ECK})\)
    are roots of the Hilbert class polynomial \(H_{-D_0 f^2}\).
    But \(H_{-D_0 f^2}\) is irreducible over~\(\QQ\),
    so either 
    \(H_{-D_0 f^2}(T) = T - j(\ECK)\)
    with \(j(\ECK) = j(\conj{\ECK})\) in~\(\QQ\),
    or \( H_{-D_0 f^2}(T) = (T - j(\ECK))(T - j(\conj{\ECK}))\).
    Tables~\ref{tab:CM-j-1} and~\ref{tab:CM-j-2} list 
    every quadratic imaginary discriminant \(-D_0 f^2\)
    such that \(\deg H_{-D_0 f^2}\)
    (which is the class number \(h(-D_0 f^2)\))
    is \(1\) or \(2\),
    along with the associated \(j\)-invariants:
    these can be found
    in the Echidna database~\cite{Echidna}, 
    or (re)computed using Magma~(\cite{Magma-Handbook}, \cite{Magma})
    or Sage~\cite{Sage}.
    To find the exceptional CM curves in each of our families,
    we solve for rational \(s\) and squarefree \(\Delta\) such that
    the \(j\)-invariant of the family appears in Table~\ref{tab:CM-j-1}
    or~\ref{tab:CM-j-2}.
    \qed
\end{proof}

Theorem~\ref{th:CM-fibres} gives
a simple alternative construction for some of the curves investigated
by Guillevic and Ionica in~\cite{Guillevic--Ionica}:
the curves \(E_{1,c}\) and \(E_{2,c}\) of~\cite[\S2]{Guillevic--Ionica}
are \(\twist{\sqrt{3}}{\EC_{2,\Delta,s}}\)
with \(c = s\sqrt{\Delta}\)
and \(\EC_{3,\Delta,s}\) with \(c = -2s\sqrt{\Delta}\),
respectively.
The 255-bit curve of~\cite[Ex.~1]{Guillevic--Ionica}
is a twist of \(\EC_{2,5,4/9}\) by \(\sqrt{3}\).
This curve is not twist-secure.

\begin{table}
    \caption{The thirteen quadratic imaginary
        discriminants \(-D_0 f^2\) 
        of class number 1,
        together with the \(j\)-invariants of 
        the elliptic curves over \(\QQbar\)
        with CM by the quadratic order of each discriminant.
    }
    \label{tab:CM-j-1}
    \begin{center}
        \begin{tabular}{|r|l|@{\ }|r|l|@{\ }|r|l|@{\ }|r|l|}
            \hline
            \(-D_0 f^2\) & \(j\)-invariant &
            \(-D_0 f^2\) & \(j\)-invariant &
            \(-D_0 f^2\) & \(j\)-invariant &
            \(-D_0 f^2\) & \(j\)-invariant \\
            \hline
            \hline
            \(-3\cdot1^2\) & \(0\) 
            &
            \(-4\cdot1^2\) & \(12^3\) 
            &
            \(-8\cdot1^2\) & \(20^3\) 
            &
            \(-43\cdot1^2\) & \(-960^3\) 
            \\
            \cline{5-8}
            \(-3\cdot2^2\) & \(2\cdot30^3\) 
            &
            \(-4\cdot2^2\) & \(66^3\) 
            &
            \(-11\cdot1^2\) & \(-2^{15}\) 
            &
            \(-67\cdot1^2\) & \(-5280^3\) 
            \\
            \cline{3-8}
            \(-3\cdot3^2\) & \(-3\cdot20^3\) 
            &
            \(-7\cdot1^2\) & \(-15^3\) 
            &
            \(-19\cdot1^2\) & \(-96^3\) 
            &
            \(-163\cdot1^2\) & \(-640320^3\) 
            \\
            \cline{1-2}
            \cline{5-8}
            \multicolumn{2}{c|@{\ }|}{ }
            &
            \(-7\cdot2^2\) & \(255^3\) 
            \\
            \cline{3-4}
        \end{tabular}
    \end{center}
\end{table}

\begin{table}
    \caption{The twenty-nine quadratic imaginary
        discriminants \(-D_0 f^2\) 
        of class number~2,
        together with the \(j\)-invariants of 
        the elliptic curves over \(\QQbar\)
        with CM by the quadratic order of each discriminant.
    }
    \label{tab:CM-j-2}
    \begin{center}
        \begin{tabular}{|r|l|}
            \hline
            \(-D_0 f^2\) & \(j\)-invariants \\
            \hline
            \hline
            \(-3\cdot4^2\) & \({12\cdot15^3}( 35010 \pm 20213\sqrt{3} ) \) \\
            \(-3\cdot5^2\) & \({-96^3}( 369830 \pm 165393\sqrt{5} ) \) \\
            \(-3\cdot7^2\) & \({-3\cdot480^3}( 52518123 \pm 11460394\sqrt{21} ) \) \\
            \hline
            \(-4\cdot3^2\) & \({3\cdot4^3}( 399849 \pm 230888\sqrt{3} ) \) \\
            \(-4\cdot4^2\) & \({2\cdot3^3}( 761354780 \pm 538359129\sqrt{2} ) \) 
            \\
            \(-4\cdot5^2\) & \({12^3}( 12740595841 \pm 5697769392\sqrt{5} ) \) \\
            \hline
            \(-7\cdot4^2\) & \({15^3}( 40728492440 \pm 15393923181\sqrt{7} ) \) \\
            \hline
            \(-8\cdot2^2\) & \({10^3}( 26125 \pm 18473\sqrt{2} ) \) \\
            \(-8\cdot3^2\) & \({20^3}( 23604673 \pm 9636536\sqrt{6}) \) \\
            \hline
            \(-11\cdot3^2\) & \({-44\cdot16^3}( 104359189 \pm 18166603\sqrt{33} ) \) \\
            \hline
            \(-15\cdot1^2\) & \(-5\cdot{3^3}( 1415 \pm 637\sqrt{5})/2 \) \\
            \(-15\cdot2^2\) & \(5\cdot{3^3}( 274207975 \pm 122629507\sqrt{5} )/2 \) \\
            \hline
            \(-20\cdot1^2\) & \({5\cdot4^3}( 1975 \pm 884\sqrt{5} ) \) \\
            \hline
            \(-24\cdot1^2\) & \({12^3}( 1399 \pm 988\sqrt{2} ) \) \\
            \hline
            \(-35\cdot1^2\) & \({-5\cdot32^3}( 360 \pm 161\sqrt{5} ) \) \\
            \hline
            \(-40\cdot1^2\) & \({5\cdot12^3}( 24635 \pm 11016\sqrt{5} ) \) \\
            \hline
            \(-51\cdot1^2\) & \({-4\cdot48^3}( 6263 \pm 1519\sqrt{17} ) \) \\
            \hline
            \(-52\cdot1^2\) & \({60^3}( 15965 \pm 4428\sqrt{13} ) \) \\
            \hline
            \(-88\cdot1^2\) & \({60^3}( 14571395 \pm 10303524\sqrt{2} ) \) \\
            \hline
            \(-91\cdot1^2\) & \({-96^3}( 5854330 \pm 1623699\sqrt{13} ) \) \\
            \hline
            \(-115\cdot1^2\) & \({-5\cdot96^3}( 48360710 \pm 21627567\sqrt{5} ) \) \\
            \hline
            \(-123\cdot1^2\) & \({-480^3}( 6122264 \pm 956137\sqrt{41} ) \) \\
            \hline
            \(-148\cdot1^2\) & \({60^3}( 91805981021 \pm 15092810460\sqrt{37} ) \) \\
            \hline
            \(-187\cdot1^2\) & \({-68\cdot240^3}( 2417649815 \pm 586366209\sqrt{17} ) \) \\
            \hline
            \(-232\cdot1^2\) & \({60^3}( 1399837865393267 \pm 259943365786104\sqrt{29} ) \) \\
            \hline
            \(-235\cdot1^2\) & \({-5\cdot1056^3}( 69903946375 \pm 31261995198\sqrt{5} ) \) \\
            \hline
            \(-267\cdot1^2\) & \({-4\cdot240^3}( 177979346192125 \pm 18865772964857\sqrt{89} ) \) \\
            \hline
            \(-403\cdot1^2\) & \({-480^3}( 11089461214325319155 \pm 3075663155809161078\sqrt{13} ) \) \\
            \hline
            \(-427\cdot1^2\) & \({-5280^3}( 53028779614147702 \pm 6789639488444631\sqrt{61} ) \) \\
            \hline
        \end{tabular}
    \end{center}
\end{table}


\begin{thebibliography}{XX}
    \bibitem{Babai}
        Babai, L.:
        On Lovasz' lattice reduction and the nearest lattice point
        problem.
        Combinatorica~\textbf{6}, 1--13 (1986)
    \bibitem{Bernstein}
        Bernstein, D.~J.:
        Curve25519: new Diffie-Hellman speed records.
        In Yung, M., Dodis, Y., Kiayias, A., Malkin, T. (eds.)
        PKC 2006. 
        LNCS~\textbf{3958}, pp.~207--228 (2006)
    \bibitem{BBJLP}
        Bernstein, D.~J., Birkner, P., Joye, M., Lange, T., Peters, C.:
        Twisted Edwards curves. 
        In: 
        Vaudenay, S. (ed.)
        AFRICACRYPT 2008.
        LNCS~\textbf{5023}, pp.~389--405 (2008)
    \bibitem{BCHL}
        Bos, J.~W., Costello, C., Hisil, H., Lauter, K.:
        Fast cryptography in genus 2.
        In:
        Johansson, T., Nguyen, P.~Q. (eds.)
        EUROCRYPT 2013.
        LNCS~\textbf{7881}, pp.~194--210 (2013)
    \bibitem{Boyd--Montague--Nguyen}
        Boyd, C., Montague, P., Nguyen, K.:
        Elliptic curve based password authenticated key exchange
        protocols.
        In: 
        Varadharajan, V., Mu, Y. (eds.)
        ACISP 2001.
        LNCS~\textbf{2119}, pp.~487--501 (2001)
    \bibitem{Magma-Handbook}
        Bosma, W., Cannon, J.~J., Fieker, C., Steel, A. (eds.) 
        Handbook of Magma functions. Edition 2.19 (2013)
    \bibitem{Chevassut--Fouque--Gaudry--Pointcheval}
        Chevassut, O., Fouque, P.-A., Gaudry, P., Pointcheval, D.:
        The twist-augmented technique for key exchange.
        In:
        Yung, M., Dodis, Y., Kiayias, A., Malkin, T. (eds.)
        PKC 2006.
        LNCS~\textbf{3958}, pp. 410--426 (2006) 
    \bibitem{Handbook}
        Cohen, H., Frey, G. (eds.)
        Handbook of elliptic and hyperelliptic curve cryptography.
        Chapman \& Hall / CRC (2006)
    \bibitem{Costello--Hisil--Smith}
        Costello, C., Hisil, H., Smith, B.:
        Faster compact Diffie--Hellman: endomorphisms on the \(x\)-line.
        In: Nguyen, P.~Q., Oswald, E. (eds.)
        EUROCRYPT 2014.
        LNCS~\textbf{8441}, pp. 183--200 (2014) 
    \bibitem{DIK}
        Doche, C., Icart, T., Kohel, D.~R.:
        Efficient scalar multiplication by isogeny decompositions.
        In:
        Yung, M., Dodis, Y., Kiayias, A., Malkin, T. (eds.)
        PKC 2006.
        LNCS~\textbf{3958}, pp. 191--206 (2006) 
    \bibitem{Elkies}
        Elkies, N.~D.:
        On elliptic \(k\)-curves.
        In: 
        Cremona, J., Lario, J.-C., Quer, J., Ribet, K. (eds.)
        Modular Curves and Abelian Varieties,
        pp. 81--92.
        Birkh\"auser, Basel (2004) 
    \bibitem{Ellenberg} 
        Ellenberg, J.~S.:
        \emph{\(\QQ\)-curves and Galois representations}.
        In: 
        Cremona, J., Lario, J.-C., Quer, J., Ribet, K. (eds.)
        Modular Curves and Abelian Varieties,
        pp. 93--103.
        Birkh\"auser, Basel (2004) 
    \bibitem{Fouque--Lercier--Real--Valette}
        Fouque, P.-A., Lercier, R., R\'eal, D., Valette, F.:
        Fault attack on elliptic curve with Montgomery ladder.
        In:
        FDTC '08, pp.~92--98. IEEE-CS (2008)
    \bibitem{Frey--Ruck}
        Frey, G., M\"uller, M., R\"uck, H.-G.:
        The Tate pairing and the discrete logarithm applied to elliptic
        curve cryptosystems.
        IEEE Trans.~Inform.~Theory~\textbf{45} \#5, 1717--1719 (1999) 
    \bibitem{Galbraith}
        Galbraith, S.~D.:
        Mathematics of public key cryptography.
        Cambridge University Press (2012)
    \bibitem{GLS} 
        Galbraith, S.~D., Lin, X., Scott, M.:
        Endomorphisms for faster elliptic curve cryptography 
        on a large class of curves.
        J.~Crypt.~\textbf{24} \#3, 446--469 (2011)
    \bibitem{GLV} 
        Gallant, R.~P., Lambert, R.~J., Vanstone, S.~A.:
        Faster point multiplication on elliptic curves with efficient
        endomorphisms.
        In:
        J.~Kilian (ed.) 
        CRYPTO 2001.
        LNCS~\textbf{2139} pp.~190--200 (2001)
    \bibitem{Gonzalez} 
        Gonz\'alez, J.:
        Isogenies of polyquadratic \(\QQ\)-curves to their Galois conjugates.
        Arch.\ Math.\ \textbf{77}, 383--390 (2001) 
    \bibitem{Guillevic--Ionica}
        Guillevic, A., Ionica, S.:
        Four-dimensional GLV via the Weil restriction.
        In:
        Sako, K., Sarkar, P. (eds.)
        ASIACRYPT 2013.  LNCS~\textbf{8269}, pp.~79--96 (2013)
    \bibitem{Hasegawa} 
        Hasegawa, Y.:
        \(\QQ\)-curves over quadratic fields.
        Manuscripta Math.~\textbf{94} \#1, 347--364 (1997)
    \bibitem{Hisil--Wong--Carter--Dawson}
        Hisil, H., Wong, K., Carter, G., Dawson, E.:
        Twisted Edwards curves revisited. 
        In:
        Pieprzyk, J. (ed.)
        ASIACRYPT 2008.
        LNCS~\textbf{5350}, pp.~326--343 (2008)
    \bibitem{Kaib}
        Kaib, M.:
        The Gauss lattice basis reduction algorithm succeeds with any
        norm.
        In Budach, L. (ed.)
        Fundamentals of computation theory.
        LNCS~\textbf{529}, pp.~275--286 (1991)
    \bibitem{Kaliski-1}
        Kaliski, Jr., B.~S.: 
        A pseudo-random bit generator based on elliptic logarithms.
        In: Odlyzko, A.~M. (ed.)
        CRYPTO 1986.
        LNCS~\textbf{263}, pp.~84--103 (1987)
    \bibitem{Kaliski-2}
        Kaliski, Jr., B.~S.:
        One-way permutations on elliptic curves.
        J.~Cryptology~\textbf{3}, 187--199 (1991)
    \bibitem{Echidna}
        Kohel, D.~R.:
        Echidna databases for elliptic curves and higher dimensional
        analogues.
        \url{http://echidna.maths.usyd.edu.au/kohel/dbs/}
    \bibitem{Kohel}
        Kohel, D.~R.:
        Endomorphism rings of elliptic curves over finite fields.
        Ph.~D.~thesis, University of California at Berkeley (1996)
    \bibitem{Kohel--Smith}
        Kohel, D.~R., Smith, B.:
        Efficiently computable endomorphisms for hyperelliptic curves.
        In:
        Hess, F., Pauli, S., Pohst, M. (eds.)
        ANTS-VII.
        LNCS~\textbf{4076}, pp.~495--509 (2006)
    \bibitem{Lange}
        Lange, T.:
        Efficient arithmetic on hyperelliptic curves.
        Ph.~D.~thesis, Universit\"at-Gesamthochschule Essen (2001)
    \bibitem{Longa--Sica}
        Longa, P., Sica, F.:
        Four-dimensional Gallant--Lambert--Vanstone scalar 
        multiplication.
        In:
        Wang X., Sako K.(eds.)
        ASIACRYPT 2012.
        LNCS~\textbf{7658}, pp.~718--739 (2012).
        Full version:
        \url{http://eprint.iacr.org/2011/608}
    \bibitem{Magma}
        The Magma computational algebra system.
        \url{http://magma.maths.usyd.edu.au}
    \bibitem{Moeller}
        M\"oller, B.:
        A public-key encryption scheme with pseudo-random ciphertexts.
        In:
        Samarati, P., Ryan, P., Gollman, D., Molva, R. (eds.):
        ESORICS 2004.
        LNCS~\textbf{3193}, pp.~335-351 (2004).
    \bibitem{Montgomery}
        Montgomery, P.~L.:
        Speeding the Pollard and Elliptic Curve Methods of factorization.
        Math.~Comp.~\textbf{48} \#177, 243--264 (1987)
    \bibitem{Menezes--Okamoto--Vanstone}
        Menezes, A., Okamoto, T., Vanstone, S.~A.:
        Reducing elliptic curve logarithms to logarithms in a finite field.
        IEEE Trans.~Inform.~Theory~\textbf{39} \#5, 1639--1646 (1993)
    \bibitem{Okeya--Kurumatani--Sakurai}
        Okeya, K., Kurumatani, H., Sakurai, K.:
        Elliptic curves with the Montgomery-form and their
        cryptographic applications.
        In:
        Imai, H., Zheng, Y. (eds.) 
        PKC 2000. 
        LNCS~\textbf{1751}, pp.~238--257 (2000)
    \bibitem{Pohlig--Hellman}
        Pohlig, G.~C., Hellman, M.~E.:
        An improved algorithm for computing logarithms over \(GF(p)\)
        and its cryptographic significance.
        IEEE Trans.~Info.~Theory \textbf{24}, 106--110 (1978)
    \bibitem{Quer-1}
        Quer, J.:
        Fields of definition of \(\QQ\)-curves.
        J.~Th\'eor.~Nombres Bordeaux \textbf{13} \#1, 275--285 (2001)
    \bibitem{Quer-2}
        Quer, J.:
        \(\QQ\)-curves and abelian varieties of \(\mathrm{GL}_2\)-type.
        Proc.~London Math.~Soc.~\textbf{81} \#2, 285--317 (2000)
    \bibitem{Schoof}
        Schoof, R.:
        Elliptic curves over finite fields and the computation of
        square roots mod~\(p\).
        Math.~Comp.~\textbf{44}, 735--763 (1985)
    \bibitem{NTL}
        Shoup, V., et al.:
        Number Theory Library.
        \url{http://www.shoup.net/ntl/}
    \bibitem{Shparlinski--Sutantyo}
        Shparlinski, I.~E., Sutantyo, D.:
        Distribution of elliptic twin primes in isogeny and isomorphism
        classes.
        J.~Number Theory~\textbf{137}, 1--15 (2014)
    \bibitem{SCQ}
        Sica, F., Ciet, M., Quisquater, J.~J.:
        Analysis of the Gallant-Lambert-Vanstone method based on
        efficient endomorphisms: Elliptic and hyperelliptic curves.
        In:
        Nyberg, K., Heys, H.~M. (eds.)
        SAC 2002.
        LNCS~\textbf{2595}, 21--36 (2003)
    \bibitem{Smart}
        Smart, N.:
        Elliptic curve cryptosystems over small fields of odd
        characteristic.
        J.~Crypt.~\textbf{12}, 141--151 (1999)
    \bibitem{Smith-basis}
        Smith, B.:
        Easy scalar decompositions for efficient scalar
        multiplication on elliptic curves and genus 2 Jacobians.
        Preprint:
        \url{http://hal.inria.fr/hal-00874925/en}
    \bibitem{Smith-asiacrypt}
        Smith, B.:
        Families of fast elliptic curves from \(\QQ\)-curves.
        In:
        Sako, K., Sarkar, P. (eds.)
        ASIACRYPT 2013.  LNCS~\textbf{8269}, pp.~61--78 (2013)
    \bibitem{Sage}
        Stein, W.~A., et al.:
        Sage Mathematics Software.
        The Sage Development Team, 2013. 
        \url{http://www.sagemath.org}
    \bibitem{Stark}
        Stark, H.~M.:
        Class numbers of complex quadratic fields.
        In: 
        Kuijk, W. (ed.)
        Modular functions of one variable I.
        Lecture Notes in Math.~\textbf{320}, 153--174 (1973)
    \bibitem{Straus}
        Straus, E.~G.:
        Addition chains of vectors.
        Amer.~Math.~Monthly~\textbf{71} \#7, 806--808 (1964)
    \bibitem{Takashima}
        Takashima, K.:
        A new type of fast endomorphisms on Jacobians of hyperelliptic 
        curves and their cryptographic application.
        IEICE Trans.~Fundamentals \textbf{E89-A} \#1, 124--133 (2006)
    \bibitem{Velu}
        V\'elu, J.:
        Isog\'enies entre courbes elliptiques.
        C.~R.~Math.~Acad.~Sci.~Paris~\textbf{273}, 238--241 (1971)
    \bibitem{ZHXS}
        Zhou, Z., Hu, Z., Xu, M., Song, W.:
        Efficient 3-dimensional GLV method for faster point
        multiplication on some GLS elliptic curves.
        Inf.~Proc.~Lett.~\textbf{110} \#22, 1003--1006 (2010)
    \bibitem{Zhu}
        Zhu, H.~J.:
        Group structures of elementary supersingular abelian varieties
        over finite fields.
        J.~Number Theory~\textbf{81}, 292--309 (2000)
\end{thebibliography}
\end{document}